\newcommand\vldbdoi{XX.XX/XXX.XX}
\newcommand\vldbpages{XXX-XXX}
\newcommand\vldbvolume{14}
\newcommand\vldbissue{1}
\newcommand\vldbyear{2020}
\newcommand\vldbauthors{\authors}
\newcommand\vldbtitle{\shorttitle} 
\newcommand\vldbavailabilityurl{URL_TO_YOUR_ARTIFACTS}
\newcommand\vldbpagestyle{plain} 
\newcommand\ignore[1]{}
\begin{document}


\newcommand{\mc}[1]{\mathcal{#1}}
\newcommand{\nit}[1]{\textit{#1}}
\newcommand{\mostafa}[1]{\textcolor{red}{Mostafa: #1}}
\newcommand{\bl}[1]{\textbf{#1}}
\newcommand{\HL}[1]{\framebox[1.1\width]{\textbf{#1}}}
\newcommand{\starter}[1]{\vspace{0.1mm}\noindent \textbf{#1}}
\newcommand{\true}{\texttt{true}\xspace}
\newcommand{\SAlg}{\textit{SimpleTerm}\xspace}
\newcommand{\LAlg}{\textit{LinearTerm}\xspace}
\newcommand{\dsimplify}{\textit{DynSimplify}\xspace}
\newcommand{\simplify}{\textit{Simplify}\xspace}
\newcommand{\false}{\texttt{false}\xspace}
\newcommand{\tshape}{\texttt{t-shapes}\xspace}
\newcommand{\tgraph}{\texttt{t-graph}\xspace}
\newcommand{\tparse}{\texttt{t-parse}\xspace}
\newcommand{\tcomp}{\texttt{t-comp}\xspace}
\newcommand{\nshape}{\texttt{n-shapes}\xspace}
\newcommand{\npred}{\texttt{n-predicates}\xspace}
\newcommand{\nfact}{\texttt{n-atoms}\xspace}
\newcommand{\nrule}{\texttt{n-rules}\xspace}
\newcommand{\ttotal}{\texttt{t-total}\xspace}
\newcommand{\deepone}{\texttt{deep-100}\xspace}
\newcommand{\deeptwo}{\texttt{deep-200}\xspace}
\newcommand{\deepthree}{\texttt{deep-300}\xspace}
\newcommand{\lubmone}{\texttt{LUBM-1}\xspace}
\newcommand{\lubmtwo}{\texttt{LUBM-10}\xspace}
\newcommand{\lubmthree}{\texttt{LUBM-100}\xspace}
\newcommand{\lubmfour}{\texttt{LUBM-1K}\xspace}
\newcommand{\ibenchont}{\texttt{ONT-256}\xspace}
\newcommand{\ibenchstd}{\texttt{STB-128}\xspace}
\newcommand{\ibench}{iBench\xspace}
\newcommand{\edb}{EDB\xspace}
\newcommand{\shapes}{\nit{Shapes}\xspace}
\newcommand{\applicable}{\nit{Applicable}\xspace}


\newcommand{\from}[3]{\textbf{FROM} #1 \textbf{TO} #2: #3}
\def\FP{\text{\rm FP}}
\def\spanL{\text{\rm SpanL}}
\def\spanLL{\text{\rm SpanLL}}
\def\sharpP{\text{\rm \#P}}
\def\sharpL{\text{\rm \#L}}
\def\NL{\text{\rm NL}}
\def\PTIME{\text{\rm P}}
\def\PH{\text{\rm PH}}
\def\NP{\text{\rm NP}}
\def\LOGSPACE{\text{\rm L}}
\def\NSPACE{\rm NSPACE}
\def\co{\rm co\text{-}}
\def\PSPACE{\rm PSPACE}
\def\EXPTIME{\rm EXPTIME}
\def\TWOEXPTIME{\rm 2EXPTIME}
\def\AEXSPACE{\rm AEXSPACE}
\def\ACZ{\rm AC_0}
\def\hard{\rm \text{-}hard}
\def\complete{\text{-{\rm complete}}}

\newcommand{\depth}[1]{\mathsf{depth}(#1)}
\newcommand{\rank}[1]{\mathsf{rank}(#1)}
\newcommand{\sdepth}[1]{\star\text{-}\mathsf{depth}(#1)}
\newcommand{\mh}[1]{\mathsf{mh}(#1)}
\newcommand{\ma}[1]{\mathsf{ma}(#1)}
\newcommand{\md}[1]{\mathsf{md}(#1)}
\newcommand{\smd}[1]{\star\text{-}\mathsf{md}(#1)}
\newcommand{\omd}[1]{\obl\text{-}\mathsf{md}(#1)}
\newcommand{\gforest}[1]{\mathsf{gforest}(#1)}
\newcommand{\gtree}[1]{\mathsf{gtree}(#1)}
\newcommand{\slinpath}[1]{simple linear $#1$-path}
\newcommand{\linpath}[1]{linear $#1$-path}
\newcommand{\guardedpath}[1]{guarded $#1$-path}
\newcommand{\stickypath}[1]{sticky $#1$-path}
\newcommand{\R}{\mathcal{R}}
\newcommand{\Lang}{\mathcal{L}}
\newcommand{\mi}[1]{\mathit{#1}}
\newcommand{\ins}[1]{\mathbf{#1}}
\newcommand{\adom}[1]{\mathsf{dom}(#1)}
\newcommand{\ra}{\rightarrow}
\newcommand{\fr}[1]{\mathsf{fr}(#1)}
\newcommand{\dep}{\Sigma}
\newcommand{\sch}[1]{\mathsf{sch}(#1)}
\newcommand{\sign}{\ins{S}}
\newcommand{\body}[1]{\mathsf{body}(#1)}
\newcommand{\head}[1]{\mathsf{head}(#1)}
\newcommand{\guard}[1]{\mathsf{guard}(#1)}
\newcommand{\class}[1]{\mathsf{#1}}
\newcommand{\pos}[1]{\mathsf{pos}(#1)}
\newcommand{\spos}[1]{\mathsf{spos}(#1)}
\newcommand{\app}[2]{\langle #1,#2 \rangle}
\newcommand{\tup}[1]{\langle #1 \rangle}
\newcommand{\crel}[1]{\prec_{#1}}
\newcommand{\tcrel}[1]{\prec_{#1}^{\star}}
\newcommand{\rctaa}{\class{CT}_{\forall \forall}^{\mathsf{res}}}
\newcommand{\rctaapr}{\mathsf{CT}_{\forall \forall}^{\mathsf{res}}}
\newcommand{\rctae}{\class{CT}_{\forall \exists}^{\mathsf{res}}}
\newcommand{\rctaepr}{\mathsf{CT}_{\forall \exists}^{\mathsf{res}}}
\newcommand{\base}[1]{\mathsf{base}(#1)}
\newcommand{\eqt}[1]{\mathsf{eqtype}(#1)}
\newcommand{\var}[1]{\mathsf{var}(#1)}
\newcommand{\const}[1]{\mathsf{const}(#1)}
\newcommand{\result}[2]{\mathsf{result}(#1,#2)}
\newcommand{\soresult}[2]{\sobl\text{-}\mathsf{result}(#1,#2)}
\newcommand{\oresult}[2]{\obl\text{-}\mathsf{result}(#1,#2)}
\newcommand{\sresult}[2]{\star\text{-}\mathsf{result}(#1,#2)}

\newcommand{\reach}[1]{\rightsquigarrow_{#1}}
\newcommand{\obl}{\mathsf{o}}
\newcommand{\sobl}{\mathsf{so}}
\newcommand{\std}{\mathsf{std}}
\newcommand{\cta}[1]{\class{CT}_{\forall \forall}^{#1}}
\newcommand{\cte}[1]{\class{CT}_{\forall \exists}^{#1}}

\newcommand{\ctda}[2]{\class{CT}_{\forall,#2}^{#1}}
\newcommand{\ctde}[2]{\class{CT}_{\exists,#2}^{#1}}

\newcommand{\ct}[1]{\class{CT}_{\forall}^{#1}}
\newcommand{\ctd}[2]{\class{CT}^{#1}_{#2}}
\newcommand{\ctapr}[1]{\mathsf{CT}_{\forall \forall}^{#1}}
\newcommand{\ctepr}[1]{\mathsf{CT}_{\forall \exists}^{#1}}
\newcommand{\ctdapr}[1]{\mathsf{CT}_{\forall}^{#1}}
\newcommand{\ctdepr}[1]{\mathsf{CT}_{\exists}^{#1}}
\newcommand{\ctpr}[1]{\mathsf{CT}_{\forall}^{#1}}
\newcommand{\ctdpr}[1]{\mathsf{CT}^{#1}}
\newcommand{\cri}[1]{\mathsf{cr}(#1)}
\newcommand{\lin}[1]{\mathsf{Lin_{\class{S}}}(#1)}
\newcommand{\ling}[1]{\mathsf{lin}(#1)}
\newcommand{\shape}[1]{\mathsf{shape}(#1)}
\newcommand{\svar}[1]{\mathsf{svar}(#1)}
\newcommand{\constfree}[1]{\mathsf{c\text{-}free}(#1)}
\newcommand{\id}[2]{\mathsf{id}_{#1}(#2)}
\newcommand{\unique}[1]{\mathsf{unique}(#1)}
\newcommand{\simple}[1]{\mathsf{simple}(#1)}
\newcommand{\dsimple}[2]{\mathsf{simple}_{#1}(#2)}
\newcommand{\gsimple}[1]{\mathsf{gsimple}(#1)}
\def\sub{\sqsubseteq}
\def\substrict{\sqsubset}

\newcommand{\norm}[1]{\mathsf{Norm}(#1)}
\newcommand{\depg}[1]{\mathsf{dg}(#1)}
\newcommand{\edepg}[1]{\mathsf{edg}(#1)}
\newcommand{\sodg}[1]{\mathsf{so\text{-}dg}(#1)}
\newcommand{\odg}[1]{\mathsf{o\text{-}dg}(#1)}
\newcommand{\ex}[1]{\mathsf{exvar}(#1)}
\newcommand{\mgu}[2]{\mathsf{mgu}(#1,#2)}
\newcommand{\res}[1]{\mathsf{res}(#1)}
\newcommand{\f}[2]{f_{#1}(#2)}
\newcommand{\arity}[1]{\mathsf{ar}(#1)}
\newcommand{\atoms}[1]{\mathsf{atoms}(#1)}
\newcommand{\birth}[2]{\mathsf{birth}_{#1}(#2)}
\newcommand{\OMIT}[1]{}
\newcommand{\crt}[1]{\texttt{cr}(#1)}
\newcommand{\precd}[1]{\prec_{#1}}
\newcommand{\lvl}[2]{\texttt{lv}_{#1}(#2)}
\newcommand{\posvar}[2]{\mathsf{pos}(#1,#2)}
\newcommand{\posterm}[2]{\mathsf{pos}(#1,#2)}
\newcommand{\varpos}[2]{\mathsf{var}(#1,#2)}
\newcommand{\termpos}[2]{\mathsf{term}(#1,#2)}
\newcommand{\CT}[2]{\mathsf{CT}^{#1}_{#2}}
\newcommand{\frpos}[1]{\mathsf{frpos}(#1)}
\newcommand{\dom}{\mathbf{C}}
\newcommand{\freshdom}{\mathbf{N}}
\newcommand{\frontier}[1]{\mathsf{fr}(#1)}
\newcommand{\eqtype}[1]{\mathsf{eqtype}(#1)}
\def\iso{\simeq}
\newcommand{\can}[1]{\mathsf{can}(#1)}
\newcommand{\proj}[2]{\Pi_{#1}(#2)}
\newcommand{\pred}[1]{\mathit{pred}(#1)}
\newcommand{\predt}[1]{[#1]}
\newcommand{\atom}[1]{\underline{#1}}
\newcommand{\tuple}[1]{\bar{#1}}
\newcommand{\resolv}[1]{[#1]}
\newcommand{\parent}[1]{\mathit{par}(#1)}
\newcommand{\dept}[1]{\mathit{depth}(#1)}
\newcommand{\chase}[2]{\mathsf{chase}(#1,#2)}
\newcommand{\chasesize}[2]{\mathsf{chsize}(#1,#2)}
\newcommand{\starchasei}[3]{\star\text{-}\mathsf{chase}^{#3}(#1,#2)}
\newcommand{\starchase}[2]{\star\text{-}\mathsf{chase}(#1,#2)}
\newcommand{\sochasei}[3]{\sobl\text{-}\mathsf{chase}^{#3}(#1,#2)}
\newcommand{\sochase}[2]{\sobl\text{-}\mathsf{chase}(#1,#2)}
\newcommand{\ochasei}[3]{\obl\text{-}\mathsf{chase}^{#3}(#1,#2)}
\newcommand{\ochase}[2]{\obl\text{-}\mathsf{chase}(#1,#2)}
\newcommand{\completion}[2]{\mathsf{complete}(#1,#2)}
\newcommand{\mar}[1]{\hat{#1}}
\newcommand{\nullobl}[3]{\bot^{#1}_{#2,#3}}
\newcommand{\nullsobl}[3]{\bot^{#1}_{#2,#3_{|\frontier{#2}}}}
\newcommand{\startype}[1]{\star\textrm{-}\mathsf{type}(#1)}
\newcommand{\type}[2]{\mathsf{type}_{#1}(#2)}
\newcommand{\types}[2]{#1\textrm{-}\mathsf{types}(#2)}
\newcommand{\src}[1]{\mathsf{src}(#1)}
\newcommand{\IDTGD}{\class{ID}}
\newcommand{\DLLITETGD}{\mathsf{DL\textrm{-}Lite^{TGD}}}
\newcommand{\SLTGD}{\class{SL}}
\newcommand{\LTGD}{\class{L}}
\newcommand{\GTGD}{\class{G}}
\newcommand{\WGTGD}{\class{WG}}
\newcommand{\RATGD}{\class{RA}}
\newcommand{\LARATGD}{\class{LARA}}
\newcommand{\LCRATGD}{\class{LCRA}}
\newcommand{\LCWATGD}{\class{LCWA}}
\newcommand{\WATGD}{\class{WA}}
\newcommand{\DAT}{\class{DAT}}
\newcommand{\UCQ}{\class{UCQ}}
\newcommand{\SLRATGD}{\class{SLRA}}
\newcommand{\SLWATGD}{\class{SLWA}}
\newcommand{\LRATGDP}{\class{LRA}^{+}}
\newcommand{\LWATGDP}{\class{LWA}^{+}}
\newcommand{\oblrew}[1]{\mathsf{enrichment}(#1)}

\newcommand{\termeq}{\equiv}
\newcommand{\skolem}[3]{\bot_{#1,#2}^{#3}}

\def\qed{\hfill{\qedboxempty}      
  \ifdim\lastskip<\medskipamount \removelastskip\penalty55\medskip\fi}

\def\qedboxempty{\vbox{\hrule\hbox{\vrule\kern3pt
                 \vbox{\kern3pt\kern3pt}\kern3pt\vrule}\hrule}}

\def\qedfull{\hfill{\qedboxfull}   
  \ifdim\lastskip<\medskipamount \removelastskip\penalty55\medskip\fi}

\def\qedboxfull{\vrule height 4pt width 4pt depth 0pt}

\newcommand{\markfull}{\qedboxfull}
\newcommand{\markempty}{\qed}


\title{Semi-Oblivious Chase Termination for Linear Existential Rules: An Experimental Study}

\author{Marco Calautti}
\affiliation{%
	\institution{University of Milan}
	\country{}
}
\email{marco.calautti@unimi.it}

\author{Mostafa Milani}
\affiliation{%
	\institution{University of Western Ontario}
	\country{}
}
\email{mostafa.milani@uwo.ca}

\author{Andreas Pieris}
\affiliation{%
	\institution{University of Edinburgh \&}
	\country{}
}
\affiliation{%
	\institution{University of Cyprus}
	\country{}
}
\email{apieris@inf.ed.ac.uk}

\begin{abstract}
The chase procedure is a fundamental algorithmic tool in databases that allows us to reason with constraints, such as existential rules, with a plethora of applications. It takes as input a database and a set of constraints, and iteratively completes the database as dictated by the constraints. A key challenge, though, is the fact that it may not terminate, which leads to the problem of checking whether it terminates given a database and a set of constraints. In this work, we focus on the semi-oblivious version of the chase, which is well-suited for practical implementations, and linear existential rules, a central class of constraints with several applications. In this setting, there is a mature body of theoretical work that provides syntactic characterizations of when the chase terminates, algorithms for checking chase termination, precise complexity results, and worst-case optimal bounds on the size of the result of the chase (whenever is finite). Our main objective is to experimentally evaluate the existing chase termination algorithms with the aim of understanding which input parameters affect their performance, clarifying whether they can be used in practice, and revealing their performance limitations.
\end{abstract}

\maketitle

\pagestyle{\vldbpagestyle}
\begingroup\small\noindent\raggedright\textbf{PVLDB Reference Format:}\\
\vldbauthors. \vldbtitle. PVLDB, \vldbvolume(\vldbissue): \vldbpages, \vldbyear.\\
\href{https://doi.org/\vldbdoi}{doi:\vldbdoi}
\endgroup
\begingroup
\renewcommand\thefootnote{}\footnote{\noindent
This work is licensed under the Creative Commons BY-NC-ND 4.0 International License. Visit \url{https://creativecommons.org/licenses/by-nc-nd/4.0/} to view a copy of this license. For any use beyond those covered by this license, obtain permission by emailing \href{mailto:info@vldb.org}{info@vldb.org}. Copyright is held by the owner/author(s). Publication rights licensed to the VLDB Endowment. \\
\raggedright Proceedings of the VLDB Endowment, Vol. \vldbvolume, No. \vldbissue\ %
ISSN 2150-8097. \\
\href{https://doi.org/\vldbdoi}{doi:\vldbdoi} \\
}\addtocounter{footnote}{-1}\endgroup

\ifdefempty{\vldbavailabilityurl}{}{
\vspace{.3cm}
\begingroup\small\noindent\raggedright\textbf{PVLDB Artifact Availability:}\\
The source code, data, and/or other artifacts have been made available at \url{\vldbavailabilityurl}.
\endgroup
}

\section{Introduction}\label{sec:intro}

The \emph{chase procedure} (or simply chase) is a fundamental algorithmic tool that has been successfully applied to several database problems such as checking logical implication of constraints~\cite{BeVa84,MaMS79}, containment of queries under constraints~\cite{AhSU79}, computing data exchange solutions~\cite{FKMP05}, and ontological query answering~\cite{CaGL12}, to name a few.
The chase takes as input a database $D$ and a set $\dep$ of constraints, which, for this work, are \emph{existential rules} (a.k.a. {\em tuple-generating dependencies} (TGDs)) of the form	$\forall \bar x \forall \bar y \left(\phi(\bar x,\bar y) \ra \exists \bar z\, \psi(\bar x,\bar z)\right)$,
where $\phi$ (the body) and $\psi$ (the head) are conjunctions of relational atoms, and it produces an instance $D_\dep$ that is a {\em universal model} of $D$ and $\dep$, i.e., a model that can be homomorphically embedded into every other model of $D$ and $\dep$. Somehow $D_\dep$ acts as a representative of all the models of $D$ and $\dep$. This is the reason for the ubiquity of the chase in databases, as discussed in~\cite{DeNR08}. Indeed, many database problems can be solved by simply exhibiting a universal model.
%

\subsection{The Chase in a Nutshell}

Roughly, the chase adds new tuples to the database $D$ (possibly involving null values that act as witnesses for the existentially quantified variables), as dictated by the TGDs of $\dep$, and it keeps doing this until all the TGDs of $\dep$ are satisfied.
There are, in principle, three different ways for formalizing this simple idea, which lead to different versions of the chase procedure:

\medskip

\noindent
\textbf{Oblivious Chase.} The first one, which leads to the {\em oblivious chase}, is as follows: for each pair $(\bar t,\bar u)$ of tuples of terms from the instance $I$ constructed so far, apply a TGD $\sigma = \forall \bar x \forall \bar y \left(\phi(\bar x,\bar y) \ra \exists \bar z\, \psi(\bar x,\bar z)\right)$ if $\phi(\bar t,\bar u) \subseteq I$, and $\sigma$ has not been applied in a previous chase step due to the same pair $(\bar t,\bar u)$, and add to $I$ the set of atoms $\psi(\bar t,\bar v)$, where $\bar v$ is a tuple of new terms not occurring in $I$.

\medskip

\noindent
\textbf{Semi-Oblivious Chase.} The second one, which is a refinement of the oblivious chase, and it gives rise to the {\em semi-oblivious chase}, is as follows: for each pair $(\bar t,\bar u)$ of tuples of terms from the instance $I$ constructed so far, apply a TGD $\sigma = \forall \bar x \forall \bar y \left(\phi(\bar x,\bar y) \ra \exists \bar z\, \psi(\bar x,\bar z)\right)$ if $\phi(\bar t,\bar u) \subseteq I$, and $\sigma$ has not been applied in a previous chase step due to a pair of tuples $(\bar t,\bar u')$, where $\bar u$ and $\bar u'$ might be different, and add to $I$ the set of atoms $\psi(\bar t,\bar v)$, where $\bar v$ is a tuple of new terms not in $I$.
In other words, a TGD $\sigma$ of the above form is applied only once due to a certain witness $\bar t$ for the variables $\bar x$.

\medskip

\noindent
\textbf{Restricted Chase.} The third one, which leads to the {\em restricted} (a.k.a.~{\em standard}) {\em chase}, is as follows: for each pair $(\bar t,\bar u)$ of tuples of terms from the instance $I$ constructed so far, apply a TGD $\sigma = \forall \bar x \forall \bar y \left(\phi(\bar x,\bar y) \ra \exists \bar z\, \psi(\bar x,\bar z)\right)$ if $\phi(\bar t,\bar u) \subseteq I$, and there is no tuple $\bar u'$ of terms from $I$ such that $\psi(\bar t,\bar u') \subseteq I$, i.e., the TGD is not already satisfied, and add to $I$ the set of atoms $\psi(\bar t,\bar v)$, where $\bar v$ is a tuple of new terms not in $I$.

\medskip

Summing up, the key difference between the (semi-)oblivious and restricted versions of the chase procedure is that the former apply a TGD whenever the body is satisfied, while the latter applies a TGD if the body is satisfied but the head is not.

\subsection{Restricted vs. (Semi-)Oblivious Chase.}
%

It is not difficult to verify that the restricted chase, in general, builds smaller instances than the (semi-)oblivious one. In fact, it is easy to devise an example where, according to the restricted chase, none of the TGDs should be applied, while the (semi-)oblivious chase builds an infinite instance. Here is such an example taken from~\cite{CaPi21}:
	
\begin{example}
	Consider the database $D = \{R(a,a)\}$ and the TGD 
	\[
	\forall x \forall y (R(x,y)\ \ra\ \exists z\, R(z,x)).
	\] 
	The restricted chase will detect that the database already satisfies the TGD, while the (semi-)oblivious chase will build the instance
	\[
	\{R(a,a),R(\bot_1,a),R(\bot_2,\bot_1),R(\bot_3,\bot_2),\ldots\},
	\]
	where $\bot_1,\bot_2,\bot_3,\ldots$ are (labeled) nulls. \hfill\markfull
\end{example}
	
It is also easy to devise examples where the semi-oblivious chase does not apply any TGD, whereas the oblivious chase builds an infinite instance.
%
%
\ignore{	
\begin{example}\label{exa:so-vs-o}
	Consider the database $D = \{R(a,a)\}$ and the TGD 
	\[
	\forall x \forall y (R(x,y)\ \ra\ \exists z\, R(x,z)), 
	\]
	The semi-oblivious chase will build the instance $\{R(a,a),R(a,\bot)\}$, where $\bot$ is a null, whereas the oblivious chase will build the instance 
	\[
	\{R(a,a),R(a,\bot_1),R(a,\bot_2),R(a,\bot_3),\ldots\},
	\]
	where $\bot_1,\bot_2,\bot_3,\ldots$ are nulls. \hfill\markfull
\end{example}
}
It is generally agreed that the oblivious version of the chase, although a very useful theoretical tool, has no practical applications due to the fact that it infers a lot of redundant information, which in turn leads to very large instances that are very often infinite. 
Concerning the other variants of the chase, the restricted one has a clear advantage over the semi-oblivious one as it generally builds smaller instances. But, of course, this advantage does not come for free: at each step, the restricted chase has to check that there is no way to satisfy the head of the TGD at hand, and this can be very costly in practice.
It has been recently observed that for RAM-based implementations the restricted chase is the indicated approach since the benefit from producing smaller instances justifies the additional effort for checking whether a TGD is already satisfied; see, e.g.,~\cite{BKMMPST17,KrMR19}. However, as discussed in~\cite{BKMMPST17}, an RDBMS-based implementation of the restricted chase is quite challenging, whereas an efficient implementation of the semi-oblivious chase is feasible.
Hence, both the semi-oblivious and restricted versions of the chase are relevant tools for practical implementations.

\subsection{Linear TGDs and Chase Termination}

There are indeed efficient implementations of the semi-oblivious and restricted chase that allow us to solve central database problems by adopting a materialization-based approach~\cite{BKMMPST17,KrMR19,NPMHWB15,UKJDC18}.
Nevertheless, for this to be feasible in practice we need a guarantee that the chase terminates, which is not always the case.
This fact motivated a long line of research on the chase termination problem, that is, given a database $D$ and a set $\dep$ of TGDs, to check whether the semi-oblivious or restricted chase of $D$ with $\dep$ terminates.
It is known that, in general, this is an undecidable problem. This has been established in~\cite{DeNR08} for the restricted chase, and it was observed a year later in~\cite{Marn09} that the same proof shows undecidability also for the semi-oblivious chase.
The undecidability proof given in~\cite{DeNR08}, however, constructs a sophisticated set of TGDs that goes beyond existing well-behaved classes of TGDs that enjoy certain syntactic properties. This observation leads to the obvious question: is the chase termination problem algorithmically solvable whenever we focus on well-behaved classes of TGDs?

A well-behaved class of TGDs, which attracted a considerable attention due to its simplicity, and also the fact that it strikes a good balance between expressiveness and complexity, is that of linear TGDs proposed in~\cite{CaGL12}. A TGD is {\em linear} if it has only one atom in its body, whereas the head can be an arbitrary conjunction of atoms. Such a TGD is called {\em simple-linear} if each variable in its body occurs only once, whereas variables in its head can repeat without any restriction.
Although, at first glance, (simple-)linear TGDs may look very inexpressive, it turns out that they are powerful enough to express database integrity constraints, as well as ontological axioms. In particular, we know that referential integrity constraints (a.k.a. inclusion dependencies) that form a central class of constraints~\cite{AbHV95}, can be easily expressed as simple-linear TGDs. 
Moreover, the important ontology language DL-Lite$_R$~\cite{CDLL*07}, which is based on Description Logics and forms the logical underpinning of OWL 2 QL, one of the popular profiles of the W3C committee’s Web Ontology
Language (OWL) standard for ontology languages, can be easily embedded into the class of simple-linear TGDs.

The chase termination problem in the presence of (simple-)linear TGDs has been extensively studied the last few years. Concerning the semi-oblivious version of the chase, there is a mature body of theoretical work that provides syntactic characterizations of when the chase terminates based on suitable acyclicity notions, algorithms for checking chase termination, precise complexity results, and worst-case optimal bounds on the size of the chase instance (whenever is finite)~\cite{CaGP22}. On the other hand, for the restricted version of the chase, we only have a decidability result via an algorithm that runs in double-exponential time, under the assumption that the head of the linear TGDs consists of a single atom~\cite{LMTU19}.
This striking difference on the progress that has been achieved should be attributed to the fact that the chase termination problem is significantly more challenging in the case of the restricted chase.

\subsection{Main Objective}

Having a complete theoretical understanding of the semi-oblivious chase termination problem in the presence of (simple-)linear TGDs, the next step is to experimentally evaluate the proposed algorithms with the aim of understanding which input parameters affect their performance, clarifying whether they can be applied in a practical context, and revealing their performance limitations. This is precisely the main objective of this work. Note that we do not consider the restricted chase as this will be very premature due to the lack of a good theoretical understanding of the problem in question; the latter is the subject of an ongoing research activity.

From the chase termination literature, we can inherit two types of algorithms for the semi-oblivious chase termination problem in the presence of (simple-)linear TGDs, that is, {\em materialization-based} and {\em acyclicity-based}~\cite{CaGP22}, which can be described as follows:

\medskip

\noindent \textbf{Materialization-Based.}
The materialization-based algorithms exploit the existence of worst-case optimal bounds on the size of the chase instance (whenever is finite). In particular, given a database $D$ and a set $\dep$ of (simple-)linear TGDs, we have an integer $k_{D,\dep}$ such that the chase of $D$ with $\dep$ terminates iff the size of the chase instance (i.e., the number of its atoms) is at most $k_{D,\dep}$. This immediately leads to a conceptually simple chase termination algorithm: simply run the semi-oblivious chase of $D$ with $\dep$ and keep a counter for the number of generated atoms, and if the count exceeds $k_{D,\dep}$, then conclude that the chase does not terminate; otherwise, it does.

\medskip

\noindent \textbf{Acyclicity-Based.} On the other hand, the acyclicity-based algorithms exploit the syntactic characterizations of when the chase terminates via suitable acyclicity notions. In particular, given a database $D$ and a set $\dep$ of (simple-)linear TGDs, we know that the chase of $D$ with $\dep$ terminates iff the dependency graph of $\dep$ (a standard way of representing a set of TGDs as a graph, which is defined in Section~\ref{sec:semi}) does not contain a ``bad'' cycle, where a ``bad'' cycle witnesses the fact that during the chase of $D$ with $\dep$ we eventually fall in a cyclic chase derivation that leads to non-termination. This again leads to a conceptually simple chase termination algorithm: construct the dependency graph of $\dep$, and if it has a ``bad'' cycle, then conclude that the chase does not terminate; otherwise, it does.

\medskip

An exploratory analysis showed that the materialization-based algorithms are simply too expensive for a chase termination check. This is because the worst-case upper bounds on the size of the result of the chase from~\cite{CaGP22} are very large, and thus, the algorithms are forced, in general, to construct extremely  large instances before being able to safely recognize that the chase does not terminate.
On the other hand, we have observed that the acyclicity-based algorithms were reasonably efficient with a lot of room for optimizations and improvements. Therefore, towards our main objective, we focused our attention on the acyclicity-based algorithms.

\subsection{Main Outcome and Challenges} 
Our experimental analysis revealed that for simple-linear TGDs the primary parameter impacting the runtime of the acyclicity-based algorithm is the size of the input set of TGDs, whereas the size of the input database does not play any crucial role. Interestingly, the algorithm is very fast (in the order of seconds) even for large sets of simple-linear TGDs (with up to 100K TGDs).
Now, concerning the more interesting case of linear TGDs, our analysis showed that the acyclicity-based algorithm consists of two components that are of different nature. In particular, there is a database-dependent component, whose performance is solely impacted by the size of the database, and a database-independent component, whose runtime is primarily affected by the size of the set of TGDs.
Interestingly, the overall runtime of the algorithm is quite reasonable, which is a strong evidence that fast checking for the termination of the semi-oblivious chase in the case of linear TGDs is not an unrealistic goal. Note that most of the total end-to-end runtime of the algorithm is taken by the database-dependent component, which indicates that our future efforts should be focused on improving that component.

\medskip

\noindent
\textbf{Technical Challenges.} Towards the above outcome concerning the acyclicity-based algorithms, we had to overcome a couple of technical challenges that led to results of independent interest:

\begin{itemize}
	\item It would not be possible to obtain the above insightful conclusions by naively implementing the algorithms in question as this would lead to poor performance; this is discussed in Section~\ref{sec:alg}. Hence, we had to revisit and refine the theoretical algorithms from~\cite{CaGP22} in order to obtain algorithms that are amenable to efficient implementations. The low-level implementation details of the refined algorithms are in Section~\ref{sec:implementation}.
	
	\item In order to stress test the algorithms in question, we had to synthetically generate databases and sets of TGDs. However, as discussed in Section~\ref{sec:setting}, existing data and TGD generators are not suitable for our purposes as they do not allow us to tune certain parameters that are crucial for evaluating our chase termination algorithms. To this end, we developed our own data and TGD generators, and used them to carefully generate the databases and TGDs that have been employed in our experimental analysis.
\end{itemize}

\medskip

\noindent{\em The experimental infrastructure and the source code can be found at https://github.com/mostafamilani/chase-termination.}
\section{Preliminaries} \label{sec:prel}

We consider the disjoint countably infinite sets $\ins{C}$, $\ins{N}$, and $\ins{V}$ of {\em constants}, {\em (labeled) nulls}, and {\em variables}, respectively. We refer to constants, nulls and variables as {\em terms}. For an integer $n > 0$, we write $[n]$ for the set $\{1,\ldots,n\}$.

\medskip

\noindent 
\textbf{Relational Databases.} A {\em schema} $\ins{S}$ is a finite set of relation symbols (or predicates) with associated arity. We write $R/n$ to denote that $R$ has arity $n > 0$; we may also write $\arity{R}$ for the integer $n$.
A {\em (predicate) position} of $\ins{S}$ is a pair $(R,i)$, where $R/n \in \ins{S}$ and $i \in [n]$, that essentially identifies the $i$-th argument of $R$. We write $\pos{\ins{S}}$ for the set of positions of $\ins{S}$, that is, the set $\{(R,i) \mid R/n \in \ins{S} \text{ and } i \in [n]\}$.
An {\em atom} over $\ins{S}$ is an expression of the form $R(\bar t)$, where $R/n \in \ins{S}$ and $\bar t$ is an $n$-tuple of terms. A {\em fact} is an atom whose arguments consist only of constants.
For a variable $x$ in $\bar t = (t_1,\ldots,t_n)$, let $\posvar{R(\bar t)}{x} = \{(R,i) \mid t_i = x\}$. 
We write $\var{R(\bar t)}$ for the set of variables in $\bar t$. The notations $\posvar{\cdot}{x}$ and $\var{\cdot}$ extend to sets of atoms.
An {\em instance} over $\ins{S}$ is a (possibly infinite) set of atoms over $\ins{S}$ with constants and nulls. A {\em database} over $\ins{S}$ is a finite set of facts over $\ins{S}$. The {\em active domain} of an instance $I$, denoted $\adom{I}$, is the set of terms (constants and nulls) occurring in $I$. For a singleton instance $\{\alpha\}$, we simply write $\adom{\alpha}$ instead of $\adom{\{\alpha\}}$.

\medskip

\noindent
\textbf{Substitutions and Homomorphisms.}
A {\em substitution} from a set of terms $T$ to a set of terms $T'$ is a function $h : T \ra T'$. Henceforth, we treat a substitution $h$ as the set of mappings $\{t \mapsto h(t) \mid t \in T\}$.
The restriction of $h$ to a subset $S$ of $T$, denoted $h_{|S}$, is the substitution $\{t \mapsto h(t) \mid t \in S\}$.
A {\em homomorphism} from a set of atoms $A$ to a set of atoms $B$ is a substitution $h$ from the set of terms in $A$ to the set of terms in $B$ such that $h$ is the identity on $\ins{C}$, and $R(t_1,\ldots,t_n) \in A$ implies $h(R(t_1,\ldots,t_n)) =  R(h(t_1),\ldots,h(t_n)) \in B$.

\medskip

\noindent
\textbf{Tuple-Generating Dependencies.} A {\em tuple-generating dependency} (TGD) $\sigma$ is a (constant-free) sentence
$
\forall \bar x \forall \bar y \left(\phi(\bar x,\bar y) \ra \exists \bar z\, \psi(\bar x,\bar z)\right),
$
where $\bar x, \bar y$ and $\bar z$ are tuples of variables of $\ins{V}$, and $\phi(\bar x,\bar y)$ and $\psi(\bar x,\bar z)$ are non-empty conjunctions of atoms that mention only variables from $\bar x \cup \bar y$ and $\bar x \cup \bar z$, respectively. Note that, by abuse of notation, we may treat a tuple of variables as a set of variables.
We write $\sigma$ as $\phi(\bar x,\bar y) \ra \exists \bar z\, \psi(\bar x,\bar z)$, and use comma instead of $\wedge$ for joining atoms. We refer to $\phi(\bar x,\bar y)$ and $\psi(\bar x,\bar z)$ as the {\em body} and {\em head} of $\sigma$, denoted $\body{\sigma}$ and $\head{\sigma}$, respectively.
The {\em frontier} of the TGD $\sigma$, denoted $\fr{\sigma}$, is the set of variables $\bar x$, i.e., the variables that appear both in the body and the head of $\sigma$. 
The {\em schema} of a set $\dep$ of TGDs, denoted $\sch{\dep}$, is the set of predicates occurring in $\dep$.
%
%
An instance $I$ satisfies a TGD $\sigma$ as the one above, written $I \models \sigma$, if whenever there exists a homomorphism $h$ from $\phi(\bar x, \bar y)$ to $I$, then there is $h' \supseteq h_{|\bar x}$ that is a homomorphism from $\psi(\bar x,\bar z)$ to $I$; we may treat a conjunction of atoms as a set of atoms. The instance $I$ satisfies a set $\dep$ of TGDs, written $I \models \dep$, if $I \models \sigma$ for each $\sigma \in \dep$.

\medskip

\noindent
\textbf{Linearity.} A TGD is called {\em linear} if it has only one body-atom, and the corresponding class that collects all the finite sets of linear TGDs is denoted $\class{L}$. We further call a linear TGD {\em simple} if no variable occurs more than once in its body-atom, and the corresponding class is denoted $\class{SL}$. It is clear that $\class{SL} \subsetneq \class{L}$.
\section{The Semi-Oblivious Chase Procedure}\label{sec:semi}

The semi-oblivious chase (or simply chase) takes as input a database $D$ and a set $\dep$ of TGDs, and constructs an instance that contains $D$ and satisfies $\dep$.
A central notion in this context is that of trigger.

\begin{definition}
	Given a set $\dep$ of TGDs and an instance $I$, a {\em trigger} for $\dep$  on $I$ is a pair $(\sigma,h)$, where $\sigma \in \dep$ and $h$ is a homomorphism from $\body{\sigma}$ to $I$.
	The {\em result} of $(\sigma,h)$, denoted $\result{\sigma}{h}$, is the set $\mu(\head{\sigma})$, where $\mu : \var{\head{\sigma}} \ra \ins{C} \cup \ins{N}$ is defined as follows:
	%
	\[
	\mu(x)\
	=\ \left\{
	\begin{array}{ll}
	h(x) & \quad \text{if } x \in \fr{\sigma}\\
	&\\
	\bot_{\sigma,h_{|\fr{\sigma}}}^{x} & \quad \text{otherwise}
	\end{array} \right.
	\]
	where $\bot_{\sigma,h_{|\fr{\sigma}}}^{x} \in \ins{N}$.  Let $T(\dep,I)$ be the set of triggers for $\dep$ on $I$.	\hfill\markfull
\end{definition}

Observe that in the definition of $\result{\sigma}{h}$, each existentially quantified variable $x$ of $\head{\sigma}$ is mapped by $\mu$ to a null value of $\ins{N}$ whose name is uniquely determined by the trigger $(\sigma,h)$ and the variable $x$ itself. This means that, given a trigger $(\sigma,h)$, we can unambiguously construct the set of atoms $\result{\sigma}{h}$.
The central idea of the chase is, starting from a database $D$, to exhaustively apply triggers for the given set $\dep$ of TGDs on the instance constructed so far.
More precisely, given a database $D$ and a set $\dep$ of TGDs, let
\[
\mathsf{chase}^{0}(D,\dep)\ =\ D,
\]
and for each $i>0$, let
\[
\mathsf{chase}^{i}(D,\dep)\ =\ \mathsf{chase}^{i-1}(D,\dep)\ \cup\ \bigcup_{(\sigma,h) \in S} \result{\sigma}{h},
\]
where $S = T(\dep,\mathsf{chase}^{i-1}(D,\dep))$. 
We finally define {\em the result of the chase of $D$ w.r.t.~$\dep$} as the (possibly infinite) instance
\[
\chase{D}{\dep}\ =\ \bigcup_{i \geq 0} \mathsf{chase}^{i}(D,\dep).
\]

\ignore{
The semi-oblivious chase procedure (or simply chase) takes as input a database $D$ and a set $\dep$ of TGDs, and constructs an instance that contains $D$ and satisfies $\dep$.
Central notions in this context are those of trigger, active trigger, and trigger application.

\begin{definition}
	Given a set $\dep$ of TGDs and an instance $I$, a {\em trigger} for $\dep$  on $I$ is a pair $(\sigma,h)$, where $\sigma \in \dep$ and $h$ is a homomorphism from $\body{\sigma}$ to $I$.
	The {\em result} of $(\sigma,h)$, denoted $\result{\sigma}{h}$, is the set $\mu(\head{\sigma})$, where $\mu : \var{\head{\sigma}} \ra \ins{C} \cup \ins{N}$ is defined as follows:
	%
	\[
	\mu(x)\
	=\ \left\{
	\begin{array}{ll}
	h(x) & \quad \text{if } x \in \fr{\sigma}\\
	&\\
	\bot_{\sigma,h_{|\fr{\sigma}}}^{x} & \quad \text{otherwise}
	\end{array} \right.
	\]
	where $\bot_{\sigma,h_{|\fr{\sigma}}}^{x}$ is a null value from $\ins{N}$.
	The trigger $(\sigma,h)$ is {\em active} if $\result{\sigma}{h} \not\subseteq I$.
	The {\em application} of $(\sigma,h)$ to $I$ returns the instance $J = I \cup \result{\sigma}{h}$ and is denoted as $I \app{\sigma}{h} J$.
	\hfill\markfull
\end{definition}

Observe that in the definition of $\result{\sigma}{h}$ above, each existentially quantified variable $x$ of $\head{\sigma}$ is mapped by $\mu$ to a null value of $\ins{N}$ whose name is uniquely determined by the trigger $(\sigma,h)$ and the variable $x$ itself. This means that, given a trigger $(\sigma,h)$, we can unambiguously extract the set of atoms 
$\result{\sigma}{h}$.


The central idea of the chase is, starting from a database $D$, to exhaustively apply active triggers for the given set $\dep$ of TGDs on the instance constructed so far. This is formalized via the notion of (semi-oblivious) chase derivation, which can be finite or infinite.

\begin{definition}
	Consider a database $D$ and a set $\dep$ of TGDs.
	\begin{itemize}
		\item A finite sequence $(I_i)_{0 \leq i \leq n}$ of instances, with $D = I_0$ and $n \geq 0$, is a {\em chase derivation} of $D$ w.r.t.~$\dep$ if, for each $i \in \{0,\ldots,n-1\}$, there is an active trigger $(\sigma,h)$ for $\dep$ on $I_i$ with $I_i \app{\sigma}{h} I_{i+1}$, and there is no active trigger for $\dep$ on $I_n$. The {\em result} of such a chase derivation is the instance $I_n$.

		\item An infinite sequence $(I_i)_{i \geq 0}$ of instances, with $D = I_0$, is a {\em chase derivation} of $D$ w.r.t.~$\dep$ if, for each $i \geq 0$, there is an active trigger $(\sigma,h)$ for $\dep$ on $I_i$ such that $I_i \app{\sigma}{h} I_{i+1}$. Moreover, $(I_i)_{i \geq 0}$ is {\em fair} if, for each $i \geq 0$, and for every active trigger $(\sigma,h)$ for $\dep$ on $I_i$, there exists $j > i$ such that $(\sigma,h)$ is not an active trigger for $\dep$ on $I_j$. 
		The {\em result} of such a chase derivation is the instance $\bigcup_{i \geq 0} \, I_i$.
	\end{itemize}
	%
	A chase derivation is {\em valid} if it is finite or infinite and fair.  \hfill\markfull
\end{definition}

Let us stress that infinite but unfair chase derivations are not considered as valid ones since they do not serve the main purpose of the chase, that is, to build an instance that satisfies the given set of TGDs. Indeed, given the set $\dep$ consisting of the TGDs
\[
\sigma\ =\ R(x,y) \ra \exists z \, R(y,z) \qquad \sigma'\ =\ R(x,y) \ra P(x,y),
\]
the result of the unfair chase derivation of $D = \{R(a,b)\}$ w.r.t.~$\dep$ that involves only triggers of the form $(\sigma,\cdot)$, i.e., only the TGD $\sigma$ is used, does not satisfy $\sigma'$, and thus, it does not satisfy $\dep$.
Interestingly, for every database $D$ and set $\dep$ of TGDs, any two valid chase derivations of $D$ w.r.t.~$\dep$ have always the same result, which implies that all valid chase derivations are either finite or infinite~\cite{GrOn18}. Therefore, in the rest of the paper, we can safely refer to {\em the} result of the chase of $D$ w.r.t. $\dep$, which we will denote by $\chase{D}{\dep}$. 
}

%

\medskip

\noindent
\textbf{Chase Termination.}
The result of the chase may be infinite even for very simple settings: it is easy to see that for $D = \{R(a,b)\}$ and $\dep = \{R(x,y) \ra \exists z \, R(y,z)\}$, $\chase{D}{\dep}$ is infinite.
%
This leads to the following problem, parameterized by a class $\class{C}$ of TGDs such as $\class{SL}$ (the class of simple-linear TGDs) and $\class{L}$ (the class of linear TGDs):

\medskip

\begin{center}
	\fbox{
		\begin{tabular}{ll}
			{\small INPUT} : & A database $D$ and a set $\dep$ of TGDs from $\class{C}$.
			\\
			{\small QUESTION} : &  Is the instance $\chase{D}{\dep}$ finite?
	\end{tabular}}
\end{center}

\medskip

\noindent This problem has been recently studied in~\cite{CaGP22} for the classes of simple-linear and linear TGDs. Interestingly, for both classes, the finiteness of the result of the chase has been syntactically characterized by exploiting the notion of non-uniform weak-acyclicity. 
We proceed to recall this acyclicity notion, and then present the characterizations established in~\cite{CaGP22}, which in turn lead to simple algorithms for checking the finiteness of the result of the chase.
Note that, for the sake of clarity, in the rest of the paper we assume TGDs with a non-empty frontier, i.e., we assume that there is at least one variable in a TGD $\sigma$ that occurs both in $\body{\sigma}$ and $\head{\sigma}$. This assumption can be made without loss of generality since, given a database $D$ and a set $\dep$ of TGDs, we can easily construct a set $\dep'$ of TGDs with a non-empty frontier by slightly modifying $\dep$ such that $\chase{D}{\dep}$ is finite iff $\chase{D}{\dep'}$ is finite.

\medskip

\noindent
\textbf{Non-Uniform Weak-Acyclicity.} Weak-acyclicity was introduced in~\cite{FKMP05} as the main formalism for data exchange purposes, which guarantees the finiteness of the result of the chase for {\em every} input database. Non-uniform weak-acyclicity is the database-dependent variant of weak-acyclicity introduced in~\cite{CaGP22}. We proceed to give the formal definitions.
We first need to recall the notion of the {\em dependency graph} of a set $\dep$ of TGDs, 
defined as a directed multigraph $\depg{\dep}=(N,E)$, where $N = \pos{\sch{\dep}}$ and $E$ contains {\em only} the following edges.
For each TGD $\sigma \in \dep$ with $\head{\sigma} = \{\alpha_1,\ldots,\alpha_k\}$, for each $x \in \frontier{\sigma}$, and for each position $\pi \in \posvar{\body{\sigma}}{x}$:
\begin{itemize}
	\item For each $i \in [k]$ and for each $\pi' \in \posvar{\alpha_i}{x}$, there exists a \emph{normal} edge $(\pi,\pi') \in E$.
	\item For each existentially quantified variable $z$ in $\sigma$, $i \in [k]$, and $\pi' \in \posvar{\alpha_i}{z}$, there is a \emph{special} edge $(\pi,\pi') \in E$.
\end{itemize}
We further need to define when a predicate is reachable from another predicate. 
Given predicates $R,P \in \sch{\dep}$, {\em $P$ is reachable from $R$ (w.r.t.~$\dep$)} if $R = P$, or there exists a path in $\depg{\dep}$ from a position of the form $(R,i)$ to a position of the form $(P,j)$.
%
Given a database $D$, we say that a (not necessarily simple and possibly cyclic) path $C$ in $\depg{\dep}$ is \emph{$D$-supported} if there exists an atom $R(\bar t) \in D$ and a node of the form $(P,i)$ in $C$ such that $P$ is reachable from $R$.
We are now ready to recall (non-uniform) weak-acyclicity.

\begin{definition}\label{def:dwa}
	Consider a database $D$ and a set $\dep$ of TGDs. We say that $\dep$ is {\em weakly-acyclic w.r.t.~$D$}, or {\em $D$-weakly-acyclic}, if there is no $D$-supported cycle in $\depg{\dep}$ with a special edge. 
	We say that $\dep$ is {\em weakly-acyclic} if there is no cycle in $\depg{\dep}$ with a special edge. \hfill\markfull
\end{definition}

\smallskip

\noindent
\textbf{Characterizing the Finiteness of the Chase.}
It is not very difficult to show that whenever a set $\dep$ of TGDs (not necessarily linear) is $D$-weakly-acyclic, then the instance $\chase{D}{\dep}$ is finite. In other words, the $D$-weak-acyclicity of $\dep$ is a sufficient condition for the finiteness of $\chase{D}{\dep}$. What is more interesting is that, assuming that $\dep$ is a set of simple-linear TGDs, the $D$-weak-acyclicity of $\dep$ is also a necessary condition for the finiteness of $\chase{D}{\dep}$. This leads to the following characterization established in~\cite{CaGP22}:

\begin{theorem}\label{the:characterization-simple-linear}
	Consider a database $D$ and a set $\dep \in \class{SL}$ of TGDs. It holds that $\chase{D}{\dep}$ is finite iff $\dep$ is $D$-weakly-acyclic.
\end{theorem}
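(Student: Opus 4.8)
The plan is to establish the two directions of the biconditional separately. Since the excerpt already notes that $D$-weak-acyclicity implies the finiteness of $\chase{D}{\dep}$ for arbitrary (not necessarily linear) TGDs, I would treat the (\emph{if}) direction with a brief rank argument and concentrate the real work on the converse, where simplicity is indispensable. For the (\emph{if}) direction I would follow the classical weak-acyclicity rank argument, localized to positions reachable from $D$. First, a straightforward induction on the chase level shows that every atom of $\chase{D}{\dep}$ occupies a position reachable from some database predicate, so positions that are not reachable from $D$ never receive any term. For each $D$-supported position $\pi$ I set $\rank{\pi}$ to be the maximum number of special edges on any $D$-supported path of $\depg{\dep}$ ending in $\pi$; the absence of a $D$-supported cycle carrying a special edge guarantees this quantity is finite. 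A second induction, on the rank, then bounds the number of distinct terms that can appear in a fixed position: rank-$0$ positions hold only terms traceable to $D$ without crossing a special edge, and crossing one special edge can enlarge the stock of nulls only by a factor polynomial in the number of terms available at strictly smaller rank. As there are finitely many positions, this bounds the total number of atoms and hence $\chase{D}{\dep}$ is finite.

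For the (\emph{only if}) direction I would argue the contrapositive: assuming $\dep$ is not $D$-weakly-acyclic, I exhibit an infinite chase. By hypothesis there is a $D$-supported cycle $C$ in $\depg{\dep}$ carrying a special edge; write $C$ as $\pi_0 \to \pi_1 \to \cdots \to \pi_{m-1} \to \pi_0$ and assume without loss of generality that $\pi_0 \to \pi_1$ is special. This special edge comes from a TGD whose frontier variable sits at body position $\pi_0$ and whose existential variable sits at head position $\pi_1$, while each remaining normal edge $\pi_i \to \pi_{i+1}$ comes from a TGD that merely carries the same frontier variable from body position $\pi_i$ to head position $\pi_{i+1}$. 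The point where simplicity enters is that, for a simple-linear TGD, the single body atom has pairwise distinct variables, so \emph{any} atom of the appropriate predicate carrying the required term in position $\pi_i$ is automatically a match for $\body{\sigma}$, regardless of its other arguments; hence every step along $C$ can always be fired. To get onto the cycle I would exploit $D$-supportedness: some node $(P,i)$ of $C$ is reachable from a database predicate $R$ with $R(\bar t)\in D$, and the witnessing reachability path translates into a finite sequence of chase steps that plants a term at the position $(P,i)$ lying on $C$.

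Once a term sits on $C$, I run around it forever. Firing the special TGD at $\pi_0$ creates a fresh null at $\pi_1$; propagating it along the normal edges (each of which copies the frontier value from $\pi_i$ to $\pi_{i+1}$) returns this null to $\pi_0$; firing the special TGD again then produces a new null at $\pi_1$, and so on. The crux is the claim that these nulls are pairwise distinct. In the semi-oblivious chase the null invented by the special TGD $\sigma$ is named $\bot^{z}_{\sigma,h_{|\fr{\sigma}}}$, so it is completely determined by the restriction of the trigger to the frontier. Since the frontier variable at $\pi_0$ takes a different value on each pass around the cycle (first the term planted from $D$, then the first invented null, then the second, and so forth), the restrictions $h_{|\fr{\sigma}}$ differ at every pass, and hence all invented nulls are distinct. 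Each of them is produced at a finite chase level, so all belong to $\chase{D}{\dep}$, which is therefore infinite.

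The main obstacle I anticipate is exactly this distinctness lemma in the (\emph{only if}) direction. One must formalize the invariant that a single traversal of $C$ carries the newly created null all the way back to the source $\pi_0$ of the special edge, and that re-firing there genuinely changes $h_{|\fr{\sigma}}$, so that no null is ever reused. This requires an induction that simultaneously tracks which atoms are available at the positions of $C$ at each pass and the fact that the value feeding the special edge is always new; it is precisely here that the simple-linear restriction is essential, since it is what keeps the body of every TGD along $C$ matching no matter what the non-frontier arguments of the propagated atoms happen to be.
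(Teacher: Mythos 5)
First, a point of comparison: the paper does not prove Theorem~\ref{the:characterization-simple-linear} at all --- it imports it verbatim from \cite{CaGP22} --- so there is no in-paper argument to measure your proposal against. Judged on its own merits, your proposal takes the natural route and is essentially sound. The two ideas carrying the necessity direction are exactly the right ones: (i) for a simple-linear TGD the single body atom has pairwise distinct variables, so \emph{any} atom over its body predicate yields a trigger, which is what lets you first reach the cycle from a database atom and then keep firing around it; and (ii) in the semi-oblivious chase the invented null $\bot^{z}_{\sigma,h_{|\fr{\sigma}}}$ is named by the restriction of the trigger to the frontier, so as soon as the frontier variable sitting at the source of the special edge receives a value never seen before, the invented null is itself new, and your invariant (the value carried around the cycle is always one created during the current traversal) closes the induction. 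Note that (i) also silently repairs a small imprecision in your wording: predicate-level reachability, as defined in the paper, only guarantees that \emph{some} $P$-atom is derived, not that a term lands at the specific position $(P,i)$ on the cycle; but by (i) this is immaterial, since any $P$-atom fires the next TGD on the cycle.

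The one step you should not wave through is the finiteness of $\rank{\pi}$ in the sufficiency direction. You argue that unboundedly many special edges on $D$-supported paths ending at $\pi$ would yield, by pigeonhole, a $D$-supported cycle with a special edge. The pigeonhole step is fine, but the $D$-supportedness of the extracted cycle is not automatic: the paper's notion only requires \emph{one} node of the path whose predicate is reachable from a database predicate, and that witness node may lie off the extracted cycle; moreover, the paper's reachability relation is predicate-level (a path from \emph{some} position of $R$ to \emph{some} position of $P$) and is not transitive as defined, so you cannot simply concatenate the witness path with a segment of your path. The clean fix is to run your induction with position-level reachability: call a position relevant if its predicate has an atom in $D$, or there is a path in $\depg{\dep}$ to it from a position of such a predicate. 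Relevance is preserved along edges (every head position of a TGD with non-empty frontier --- an assumption the paper makes explicitly --- has an incoming edge from some body position), your chase-level induction shows that chase atoms occupy only relevant positions, and any cycle all of whose nodes are relevant is $D$-supported by definition, since relevance of a node directly exhibits a witness for predicate-level reachability of its predicate. With $\rank{\pi}$ defined over paths through relevant positions, finiteness follows, and the rest of your rank induction in the style of \cite{FKMP05}, bounding the number of terms per position, goes through.
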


For linear TGDs, it turned out that non-uniform weak-acyclicity is not powerful enough for characterizing the finiteness of the chase instance. Here is an example given in~\cite{CaGP22} that illustrates this fact:

\begin{example}
	Consider the database $D = \{R(a,b)\}$ and the singleton set $\dep$ consisting of the (non-simple) linear TGD
	\[
	R(x,x)\ \ra\ \exists z \, R(z,x). 
	\]
	It is easy to see that there is no trigger for $\dep$ on $D$. This means that $\chase{D}{\dep} = D$ is finite, whereas $\dep$ is {\em not} $D$-weakly-acyclic. \hfill\markfull
\end{example}

To obtain a characterization analogous to Theorem~\ref{the:characterization-simple-linear}, the authors of~\cite{CaGP22} used the technique of {\em simplification} to convert linear TGDs into simple-linear TGDs, while preserving the finiteness of the chase instance. We proceed to recall this technique.
Let $\bar t = (t_1,\ldots,t_n)$ be a tuple of (not necessarily distinct) terms. We write $\unique{\bar t}$ for the tuple obtained from $\bar t$ by keeping only the first occurrence of each term in $\bar t$.
For example, if $\bar t = (x,y,x,z,y)$, then $\unique{\bar t} = (x,y,z)$.
For each $i \in [n]$, the \emph{identifier of $t_i$ in $\bar t$}, denoted $\id{\bar t}{t_i}$, is the integer that identifies the position of $\unique{\bar t}$ at which $t_i$ appears. 
We write $\id{}{\bar t}$ for the tuple $(\id{\bar t}{t_1},\ldots,\id{\bar t}{t_n})$.
For example, if $\bar t = (x,y,x,z,y)$, then $\id{}{\bar t} = (1,2,1,3,2)$.
For an atom $\alpha = R(\bar t)$, the {\em simplification of $\alpha$}, denoted $\simple{\alpha}$, is the atom $R_{\id{}{\bar t}}(\unique{\bar t})$, whereas the {\em shape of $\alpha$}, denoted $\shape{\alpha}$, is the predicate $R_{\id{}{\bar t}}$. We can naturally refer to the simplification and the shape of a set of atoms.
For a tuple of variables $\bar x = (x_1,\ldots,x_n)$, a \emph{specialization of $\bar x$} is a function $f$ from $\bar x$ to $\bar x$ such that $f(x_1) = x_1$, and $f(x_i) \in \{f(x_1),\ldots,f(x_{i-1}),x_i\}$, for each $i \in \{2,\ldots,n\}$.
We write $f(\bar x)$ for $(f(x_1),\ldots,f(x_n))$. We are now ready to recall how a set of linear TGDs is converted into a set of simple-linear TGDs.

\begin{definition}\label{def:simplification}
	Consider a linear TGD $\sigma$ of the form
	\[
	R(\bar x) \ra \exists \bar z\, \psi(\bar y,\bar z), 
	\]
	where $\bar y \subseteq \bar x$, and a specialization $f$ of $\bar x$. The {\em simplification of $\sigma$ induced by $f$} is the simple-linear TGD
	\[
	\simple{R(f(\bar x))} \rightarrow \exists \bar z\, \simple{\psi(f(\bar y),\bar z)}.
	\]
	We write $\simple{\sigma}$ for the set of all simplifications of $\sigma$ induced by some specialization of $\bar x$.
	For a set $\dep \in \class{L}$ of TGDs, the {\em simplification of $\dep$} is defined as the set
	\[
	\simple{\dep}\ =\ \bigcup_{\sigma \in \dep} \simple{\sigma}
	\]
	consisting only of simple-linear TGDs. \hfill\markfull
\end{definition}

We can now recall the characterization for the finiteness of the chase instance for linear TGDs, established in~\cite{CaGP22}, which is similar to the one for simple-linear TGDs, with the key difference that first we need to simplify both the database and the set of linear TGDs:

\begin{theorem}\label{the:characterization-linear}
	Consider a database $D$ and a set $\dep \in \class{L}$ of TGDs. Then, $\chase{D}{\dep}$ is finite iff $\simple{\dep}$ is $\simple{D}$-weakly-acyclic.
\end{theorem}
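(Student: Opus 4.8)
The plan is to reduce Theorem~\ref{the:characterization-linear} to the already-established simple-linear characterization (Theorem~\ref{the:characterization-simple-linear}) by showing that simplification preserves finiteness of the chase. Since $\simple{\dep} \in \class{SL}$, Theorem~\ref{the:characterization-simple-linear} already tells us that $\chase{\simple{D}}{\simple{\dep}}$ is finite iff $\simple{\dep}$ is $\simple{D}$-weakly-acyclic. Hence it suffices to prove the bridging statement
\[
\chase{D}{\dep} \text{ is finite} \iff \chase{\simple{D}}{\simple{\dep}} \text{ is finite},
\]
and everything then reduces to understanding how one chase mirrors the other.

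First I would make precise the correspondence between atoms. The map $\simple{\cdot}$ is a bijection between atoms over $\sch{\dep}$ and atoms over $\sch{\simple{\dep}}$: from a simplified atom $R_{\id{}{\bar t}}(\unique{\bar t})$ one recovers the equality pattern from the shape subscript and the distinct terms from the argument tuple, and hence the original atom uniquely. I would then define a single map $\Phi$ from atoms of $\chase{\simple{D}}{\simple{\dep}}$ to atoms of $\chase{D}{\dep}$ that (i) de-simplifies the predicate and reinstates repeated arguments according to the shape, and (ii) renames nulls by a substitution $\rho$ that erases the specialization index baked into a null's name (the simplified null $\bot_{\sigma_f,\cdots}^{z}$ is sent to the genuine semi-oblivious null $\bot_{\sigma,\cdots}^{z}$, collapsing the dependence on $f$).

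The heart of the argument is an induction on the chase rounds establishing a tight trigger correspondence in both directions. A trigger $(\sigma,h)$ for $\dep$ on the instance built so far, whose body homomorphism fixes the equality pattern of $h(\body{\sigma})$, is matched by a trigger for exactly the simplified TGD $\sigma_f$ whose specialization $f$ realizes that pattern, and conversely; moreover the two results agree under $\Phi$. I would carry through the invariant that $\Phi$ is injective on the terms occurring in any single chase atom — equivalently, that $\Phi$ preserves equality types atom-by-atom — so that the matching stays well-typed and $\rho$ never merges two distinct terms sharing an atom. This invariant is exactly where linearity is essential: because each body is a single atom, all the frontier terms of a freshly produced atom are inherited from one pre-existing atom, on which the invariant holds by induction, while the new existential nulls are fresh and kept distinct by $\rho$ (they carry distinct variable names $z$). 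The induction yields both soundness, $\Phi(\chase{\simple{D}}{\simple{\dep}}) \subseteq \chase{D}{\dep}$, and completeness, $\chase{D}{\dep} \subseteq \Phi(\chase{\simple{D}}{\simple{\dep}})$; for the latter one exhibits, for each $\dep$-atom, the witness produced by the simplified TGD matching the specialization that actually generated it.

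Finally, I would read off finiteness from the structure of $\Phi$. Completeness says $\Phi$ is surjective, so $|\chase{D}{\dep}| \le |\chase{\simple{D}}{\simple{\dep}}|$, giving one implication. For the converse, $\Phi$ is \emph{uniformly finite-to-one}: its only non-injectivity comes from $\rho$ forgetting the specialization $f$, and each genuine null has at most as many preimages as there are specializations of the relevant body, a constant depending only on $\dep$; together with soundness this confines $\chase{\simple{D}}{\simple{\dep}}$ inside $\Phi^{-1}(\chase{D}{\dep})$, so finiteness of $\chase{D}{\dep}$ forces finiteness of $\chase{\simple{D}}{\simple{\dep}}$. The main obstacle I anticipate is precisely the mismatch in null-naming semantics: the semi-oblivious chase for $\dep$ names existential nulls from the frontier restriction alone, so triggers differing only on non-frontier body variables get merged into a single atom, whereas in $\simple{\dep}$ they fire different simplified TGDs and remain separate. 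Showing that this ``unmerging'' produces only a bounded blow-up, and that $\rho$ never collapses co-occurring terms (so equality types survive and the induction's typing is preserved), is the delicate core of the proof, and it is where the single-body-atom shape of linear TGDs does the real work.
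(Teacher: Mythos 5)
First, a point about comparison: the paper never proves Theorem~\ref{the:characterization-linear} --- it is imported wholesale from \cite{CaGP22}, and the only proof in the paper touching this material is that of Lemma~\ref{lm:dyn-simplification}, which works purely at the level of dependency graphs and uses both characterizations as black boxes. So your proposal must be judged on its own terms. On those terms, your route is the natural and correct one: reduce to Theorem~\ref{the:characterization-simple-linear} by proving the bridging equivalence that $\chase{D}{\dep}$ is finite iff $\chase{\simple{D}}{\simple{\dep}}$ is finite, via a de-simplification map $\Phi$ and a two-way, level-by-level trigger correspondence. Your invariant --- that the null-renaming $\rho$ never merges two terms co-occurring in a single atom of the simplified chase, equivalently that equality types are preserved atom-by-atom --- is exactly what is needed: in the completeness direction it guarantees that a preimage of a body match $R(\bar t)$ carries precisely the shape selected by the $h$-specialization, so the matching simplified TGD exists and fires with the corresponding result. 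Your identification of linearity (single body atom, so all inherited terms of a new atom come from one atom satisfying the invariant) as the engine of the induction is also correct. One detail you should not dismiss as mere ``freshness'': you must also rule out that $\rho$ merges an inherited frontier term with one of the newly minted nulls; this requires a small occurs-check argument (the $\rho$-image of a new null contains the $\rho$-images of the inherited frontier terms inside its own subscript, and null names are finite, well-founded syntactic objects, so no term can equal a null that mentions it).

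The one genuine error is the claim that $\Phi$ is \emph{uniformly} finite-to-one, with each null having at most as many preimages as there are specializations of the relevant body. The justification (``the only non-injectivity comes from $\rho$ forgetting $f$'') is false: $\rho$ also rewrites the frontier assignment \emph{inside} a null's name, so a preimage of $\bot^{z}_{\sigma,\nu}$ is any $\bot^{z}_{\sigma_f,\mu}$ whose $\mu$ takes values in $\rho$-preimages of the values of $\nu$. Multiplicities therefore compound along the nesting of null names, and no constant depending only on $\dep$ bounds them. Fortunately, your argument never needs uniformity: for the direction ``$\chase{D}{\dep}$ finite $\Rightarrow$ $\chase{\simple{D}}{\simple{\dep}}$ finite'' it suffices that every term of $\chase{D}{\dep}$ has a \emph{finite} $\rho$-preimage set, which follows by structural induction on null names (well-founded, as above). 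Then soundness places every term of the simplified chase inside a finite union of finite preimage sets, and since $\sch{\simple{\dep}}$ is finite with bounded arity, the simplified chase has finitely many atoms. With that weaker but sufficient claim substituted for the uniform bound, and the occurs-check supplied, your proof goes through.
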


It is clear that Theorems~\ref{the:characterization-simple-linear} and~\ref{the:characterization-linear} provide simple algorithms for checking whether the chase instance is finite. In particular, given a database $D$ and a set $\dep$ of simple-linear TGDs, we simply need to check whether $\dep$ is $D$-weakly-acyclic, in which case the algorithm returns \true; otherwise, it returns \false. The same holds when $\dep$ is a set of linear TGDs, with the difference that the algorithm first needs to simplify $D$ and $\dep$, and then perform the acyclicity check.
Our goal is to experimentally evaluate the above algorithms with the aim of understanding which input parameters affect their performance, clarifying whether they can be applied in a practical context, and revealing their performance limitations. Of course, a naive implementation of the above algorithms, especially for linear TGDs where the expensive simplification must be applied, will lead to poor performance, and thus, will not be very useful towards our goal. Hence, we need to somehow convert the above theoretical algorithms into practical algorithms that are amenable to efficient implementations. This is the subject of the next section.
\section{Practical Termination Algorithms} \label{sec:alg}

We first present the algorithm $\mathsf{IsChaseFinite[SL]}$ that accepts as input a database $D$ and a set $\dep$ of simple-linear TGDs, and checks whether $\dep$ is $D$-weakly-acyclic, which is equivalent to say that the instance $\chase{D}{\dep}$ is finite.
Note that a naive search for a ``bad'' cycle in a dependency graph will be too costly since we may have to go through exponentially many cycles. Thus, $\mathsf{IsChaseFinite[SL]}$ relies on a refined machinery that searches for {\em strongly connected components} with a special edge.
We then proceed to give an analogous algorithm, dubbed $\mathsf{IsChaseFinite[L]}$, for linear TGDs, which essentially simplifies the given database $D$ and set $\dep$ of linear TGDs, and then checks whether $\simple{\dep}$ is $\simple{D}$-weakly-acyclic, which is equivalent to say that $\chase{D}{\dep}$ is finite.
%
Note, however, that $\mathsf{IsChaseFinite[L]}$ relies on a refined notion of simplification that {\em dynamically simplifies} $\dep$ by leveraging the given database $D$, instead of doing it statically as in Definition~\ref{def:simplification} without taking any database into account. The goal of the dynamic simplification is to keep only TGDs of $\simple{\dep}$ that are really needed for checking whether the chase instance is finite.
Note that in this section we present the above algorithms at a high-level without delving into implementation details; the latter will be the subject of Section~\ref{sec:implementation}.

\subsection{Simple-Linear TGDs}\label{sec:slinear}

Before presenting $\mathsf{IsChaseFinite[SL]}$, we need to introduce a couple of auxiliary notions. A {\em strongly connected component} (SCC) in a directed graph $G$ is a maximal subgraph of $G$ in which there is a (directed) path between every pair of nodes. A {\em special SCC} in a dependency graph is an SCC with at least one special edge. We are now ready to discuss $\mathsf{IsChaseFinite[SL]}$, which is depicted in Algorithm~\ref{alg:slinear}.
It starts by building the dependency graph $G$ of the input set $\dep$ of simple-linear TGDs (line~\ref{ln:graph}).
It then collects the special SCCs of $G$ in a set $S$ (line~\ref{ln:bcycles}), which can clearly form ``bad'' cycles that violate the condition underlying non-uniform weak-acyclicity. Of course, for the latter to happen, some nodes (i.e,., predicate positions) in a special SCC must be supported by the given database $D$ as defined in Section~\ref{sec:prel}. To check this, the algorithm first collects exactly one node $v_C$ from each special SCC $C$ of $G$ in a set $P$ (line~\ref{ln:initP}); note that it is not important how $v_C$ is selected. It then checks if $D$ supports any of the nodes of $P$ (line~\ref{ln:return_f}). If this is the case, then there is a $D$-supported cycle in $G$ with a special edge, and thus the algorithm returns \false; otherwise, it returns \true. The implementation details of \textsf{BuildDepGraph}, \textsf{FindSpecialSCC}, and \textsf{Supports} are discussed in Sections~\ref{sec:bgarph}, \ref{sec:scc}, and \ref{sec:support}, respectively.
The correctness of $\mathsf{IsChaseFinite[SL]}$ follows by Theorem~\ref{the:characterization-simple-linear}:

\begin{lemma}\label{lm:simple}
	Consider a database $D$ and a set $\dep \in \class{SL}$ of TGDs. It holds that $\mathsf{IsChaseFinite}[\class{SL}](D,\dep) = \true$ iff $\chase{D}{\dep}$ is finite.
\end{lemma}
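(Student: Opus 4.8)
The plan is to reduce the statement to Theorem~\ref{the:characterization-simple-linear}. That theorem says that $\chase{D}{\dep}$ is finite iff $\dep$ is $D$-weakly-acyclic, so by Definition~\ref{def:dwa} it suffices to prove that $\mathsf{IsChaseFinite}[\class{SL}](D,\dep) = \true$ iff there is no $D$-supported cycle in $\depg{\dep}$ carrying a special edge. Since the algorithm returns \true precisely when none of the collected representative nodes is supported by $D$, the whole argument amounts to matching ``$D$-supported cycles with a special edge'' against ``some representative $v_C$ is $D$-supported''. Throughout, I would assume that the subroutines \textsf{BuildDepGraph}, \textsf{FindSpecialSCC} and \textsf{Supports} behave as specified in Sections~\ref{sec:bgarph}--\ref{sec:support}, i.e.\ that $G = \depg{\dep}$, that the collected components are exactly the special SCCs of $G$, and that \textsf{Supports} applied to a node decides whether the predicate of that node is reachable from some predicate occurring in $D$.

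First I would establish a purely graph-theoretic equivalence: $\depg{\dep}$ contains a cycle with a special edge iff $G$ has a special SCC. For the forward direction, every cycle is contained in a single SCC (all its nodes are pairwise mutually reachable), and a cycle that uses a special edge forces that SCC to contain a special edge, hence to be special. For the converse, a special SCC $C$ contains some special edge $(\pi,\pi')$, and since $C$ is strongly connected there is a path from $\pi'$ back to $\pi$ inside $C$; concatenating it with the special edge yields a cycle through a special edge. This step is routine and entirely independent of $D$.

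The heart of the proof is to show that the single representative $v_C$ faithfully witnesses $D$-support of its whole component, i.e.\ that for a special SCC $C$ the following are equivalent: (i) $C$ contains a node $(P,i)$ whose predicate $P$ is reachable from a predicate occurring in $D$; (ii) $v_C$ is $D$-supported. Direction (ii)~$\Rightarrow$~(i) is immediate, since $v_C$ is itself a node of $C$. For (i)~$\Rightarrow$~(ii) I would argue that $D$-support propagates along the internal paths of $C$: if $(P,i) \in C$ has $P$ reachable from a database predicate $R$, then, using that $C$ is strongly connected and hence that $(P,i)$ reaches the representative $v_C = (Q,j)$ inside $C$, one derives that $Q$ is also reachable from $D$. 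Combining the two equivalences with the graph-theoretic fact above, the collected set of representatives contains a $D$-supported node iff $\depg{\dep}$ has a $D$-supported cycle with a special edge (route such a cycle through the supported node by strong connectivity), which is exactly the negation of $D$-weak-acyclicity. Hence the algorithm returns \false in precisely these cases and \true otherwise, and the lemma follows from Theorem~\ref{the:characterization-simple-linear}.

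I expect direction (i)~$\Rightarrow$~(ii) to be the main obstacle. The subtlety is that ``$P$ reachable from $R$'' is a predicate-level notion: it only guarantees that \emph{some} position $(P,l)$ is reachable from $R$, whereas the SCC-internal path leaves from the specific position $(P,i)$. When $l = i$, or when the supporting database predicate is $P$ itself, the two paths concatenate immediately and $Q$ is reachable from $R$; the delicate case is $l \neq i$ with $(P,l)$ lying outside $C$, where one must still show that reachability from $D$ reaches the representative. Making this propagation watertight --- arguing that for nodes sharing a special SCC the property ``predicate reachable from $D$'' is invariant, so that the choice of $v_C$ is genuinely irrelevant --- is the crux; everything else reduces to standard reachability and strong-connectivity bookkeeping together with the already-available characterization in Theorem~\ref{the:characterization-simple-linear}.
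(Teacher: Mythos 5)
Your overall strategy is sound, and it is in fact more explicit than the paper's own justification, which disposes of the lemma in a single line by appealing to Theorem~\ref{the:characterization-simple-linear}. Your reduction to that theorem, the equivalence between cycles carrying a special edge and special SCCs, and the direction (ii)$\Rightarrow$(i) are all correct. However, the proposal has a genuine gap, and it is exactly the one you flag at the end: in the direction (i)$\Rightarrow$(ii), when the witness of ``$P$ reachable from $R$'' is a path from a position of $R$ to a position $(P,l)$ with $l \neq i$, nothing you have on the table lets you continue that path to the representative $v_C$. Strong connectivity of $C$ only provides paths that \emph{start} at $(P,i)$; it cannot splice onto an incoming path that ends at a different position $(P,l)$, which may lie outside $C$ entirely. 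So the ``propagation'' you appeal to is not established: the invariance of ``predicate reachable from $D$'' across an SCC is not a fact about abstract directed graphs and cannot follow from ``standard reachability and strong-connectivity bookkeeping'' alone; it needs a property specific to the dependency graphs of the TGDs at hand. (The same issue hides in your reading of $\mathsf{Supports}$: Section~\ref{sec:support} implements a \emph{position-level} reverse traversal from $v_C$, while you assume a \emph{predicate-level} test; these two only coincide because of the property below.)

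The missing idea is a structural lemma about dependency graphs of \emph{linear} TGDs with non-empty frontier (the paper's standing assumption from Section~\ref{sec:semi}): if there is a path in $\depg{\dep}$ from some position of $R$ to some position of $P$, then for \emph{every} $i \in [\arity{P}]$ there is a path from some position of $R$ to the specific node $(P,i)$. This is proved by induction on the path length. For a single edge, generated by a TGD $\sigma$ with body atom $R(\bar u)$ and head atom $P(\bar t)$, every node $(P,i)$ receives an edge from a position of $R$: if $t_i$ is a frontier variable, a normal edge from each position in $\posvar{\body{\sigma}}{t_i}$; if $t_i$ is existential, a special edge from every frontier position of the body, and at least one such position exists because $\fr{\sigma} \neq \emptyset$ and TGDs are constant-free. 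For the inductive step one uses linearity: the penultimate node of the path is a position of the \emph{single} body atom of the last TGD, so the induction hypothesis covers all positions of that body predicate, and the base-case argument applied to the last TGD finishes the step. With this lemma your delicate case disappears: from ``$P$ reachable from $R$'' you obtain a path from a position of $R$ to $(P,i)$ itself, which you concatenate with the internal path from $(P,i)$ to $v_C$. Note that this is precisely where linearity and the non-empty-frontier assumption do real work; for multi-atom bodies the representative-based check could not be justified this way, so any complete proof of the lemma must invoke this structure rather than generic graph arguments.
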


\begin{algorithm}[t]
\KwIn{A database $D$ and a set $\dep \in \class{SL}$ of TGDs}
\KwOut{\true if $\chase{D}{\dep}$ is finite and \false otherwise}

\medskip

$G \leftarrow \textsf{BuildDepGraph}(\dep)$;\label{ln:graph}\\
$S \leftarrow \textsf{FindSpecialSCC}(G)$;\label{ln:bcycles}\\
$P \leftarrow \bigcup_{C \in S}{\{v_C\}}$;\label{ln:initP}\\
\lIf{$\mathsf{Supports}(D,P,G)$}{\KwRet{\false}\label{ln:return_f}}
\KwRet{\true}\label{ln:return_t}
\caption{$\mathsf{IsChaseFinite[SL]}$}\label{alg:slinear}
\end{algorithm}

\subsection{Linear TGDs}\label{sec:linear}

Although the algorithm $\mathsf{IsChaseFinite[SL]}$ together with the simplification technique (see Definition~\ref{def:simplification}) immediately give rise to a simple algorithm for checking the finiteness of the chase instance for linear TGDs, a naive implementation of the simplification technique leads to poor performance. Indeed, we performed exploratory experiments on real-world sets of linear TGDs and observed that a naive implementation is not scalable as the algorithm quickly runs out of memory when dealing with large sets of TGDs. This is because by statically simplifying a set of linear TGDs $\dep$, without taking into account the underlying database, leads to an exponentially large set of simple-linear TGDs; in particular, the size of the set $\simple{\dep}$ is exponential in the maximum arity of the predicates in $\sch{\dep}$. Thus, the algorithm $\mathsf{IsChaseFinite[SL]}$ becomes impractical due to the very large size of the dependency graph of $\simple{\dep}$, which exceeds the capacity of the main memory.

\medskip

\noindent
\textbf{Dynamic Simplification.}
We refine the notion of simplification by taking into account the underlying database, which leads to the technique of dynamic simplification. In particular, given a database $D$ and a set $\dep$ of linear TGDs, the goal is to define a set $\dsimple{D}{\dep}$, which is a subset of $\simple{\dep}$, that enjoys two crucial properties:
\begin{enumerate}
	\item It holds that the instance $\chase{\simple{D}}{\simple{\dep}}$ is finite iff the instance $\chase{\simple{D}}{\dsimple{D}{\dep}}$ is finite, which essentially tells us that the technique of dynamic simplification preserves the finiteness of the chase.
	\item The set $\dsimple{D}{\dep}$ is, in general, orders of magnitude smaller than the set $\simple{\dep}$ obtained by statically simplifying $\dep$.
\end{enumerate}
Item (1) is established by Lemma~\ref{lm:dyn-simplification} below. Item (2) cannot be mathematically proved as there are cases where both static and dynamic simplification build the same set of linear TGDs. However, we have experimentally verified that for existing databases and sets of TGDs coming from the literature (in fact, those used in Section~\ref{sec:rw-kb}), the size of the dynamically simplified sets of TGDs is, on average, 5 times smaller than the size of the corresponding statically simplified sets of TGDs. The absolute difference varies with the dynamically simplified sets being up to 1000 times smaller in the best case.

The key idea of dynamic simplification is to exploit the shapes of the atoms occurring in the given database to guide the simplification.
More precisely, given a database $D$ and a set $\dep$ of linear TGDs, we first collect the shapes that can be derived from $\shape{D}$ using the TGDs of $\dep$; we denote this set as $\dep(\shape{D})$. Then, $\dsimple{D}{\dep}$ keeps from the set $\simple{\dep}$ only those simple-linear TGDs such that the predicate of their body-atom belongs to $\dep(\shape{D})$, as these are the only TGDs that can be applied during the construction of the instance $\chase{\simple{D}}{\simple{\dep}}$. All the other TGDs of $\simple{\dep}$ are superfluous whenever the input database is $D$ in the sense that they will never be applied during the construction of $\chase{\simple{D}}{\simple{\dep}}$. We proceed to formalize this idea. To this end, we need to introduce some auxiliary notions.

For a schema $\ins{S}$, let $\shape{\ins{S}}$ be the set of all shapes mentioning a predicate of $\ins{S}$, that is, the finite set of shapes
\[
\shape{\ins{S}}\ =\ \left\{R_{\id{}{\bar t}} \mid R \in \ins{S} \textrm{ and } \bar t \in \left(\ins{C}^{\arity{R}} \cup \ins{V}^{\arity{R}}\right)\right\}.
\]
For a set of shapes $S \subseteq \shape{\ins{S}}$, the {\em database induced by $S$}, denoted $\mi{DB}[S]$, is the database $\{R(\id{}{\bar t}) \mid R_{\id{}{\bar t}} \in S\}$. For example, assuming that $S = \{R_{(1,2)}, P_{(1,1,2)}\}$, then 
$\mi{DB}[S] = \{R(1,2),P(1,1,2)\}$.
Consider now a linear TGD $\sigma = 
R(x_1,\ldots,x_n) \ra \exists \bar z\, \psi(\bar y,\bar z)$
and let $h$ be a homomorphism from $\{R(x_1,\ldots,x_n)\}$ to $\{R(i_1,\ldots,i_n)\} \subseteq \mi{DB}[\shape{\{R\}}]$. The {\em $h$-specialization} of the tuple $(x_1,\ldots,x_n)$ is the (unique) specialization $f$ of $(x_1,\ldots,x_n)$ such that $f(x_i) = f(x_j)$ iff $h(x_i) = h(x_j)$, for every $i,j \in [n]$. For example, assuming that $h$ is a homomorphism from $\{R(x,y,x,z)\}$ to $\{R(1,1,1,2)\}$, the $h$-specialization of $(x,y,x,z)$ is the function $f$ such that $f(x)=x$, $f(y)=x$, and $f(z)=z$.
We can now proceed with the formalization of dynamic simplification.

Consider a set $\dep$ of linear TGDs and a set of shapes $S \subseteq \shape{\dep}$; for brevity, we write $\shape{\dep}$ for $\shape{\sch{\dep}}$. A shape $R_{\id{}{\bar t}} \in \shape{\dep}$ is an {\em immediate consequence} of $S$ and $\dep$ if:
\begin{enumerate}
	\item $R_{\id{}{\bar t}} \in S$, or
	\item there is a TGD $R(\bar x) \ra \exists \bar z\, \psi(\bar y,\bar z)$ in $\dep$ and a homomorphism $h$ from $\{R(\bar x)\}$ to $\mi{DB}[S]$ such that $R_{\id{}{\bar t}}$ occurs in the head of the simplification of $\sigma$ induced by the $h$-specialization of $\bar x$.
\end{enumerate}
In simple words, item (2) tells us that there exists a TGD in $\simple{\dep}$ of the form $R'_{\id{}{\bar t'}}(\bar x)\ \ra\ \exists \bar z \, \ldots,R_{\id{}{\bar t}}(\bar y),\ldots$ with $R'_{\id{}{\bar t'}} \in S$.
The {\em immediate consequence operator} of $\dep$ is the function $\Gamma_{\dep} : 2^{\shape{\dep}} \ra 2^{\shape{\dep}}$
(as usual, $2^X$ denotes the powerset of a set $X$) such that
\[
\Gamma_{\dep}(S)\ =\ \left\{R_{\id{}{\bar t}} \mid R_{\id{}{\bar t}} \text{ is an immediate consequence of } S \text{ and } \dep\right\}.
\]
By iterative applications of the above operator, we can compute the shapes that can be derived from $S$ using the TGDs of $\dep$. Formally,
\[
\Gamma_{\dep}^{0}(S)\ =\ S \qquad \text{and} \qquad \Gamma_{\dep}^{i}(S)\ =\ \Gamma_{\dep}(\Gamma_{\dep}^{i-1}(S)) \,\, \text{ for } \,\, i > 0
\]
and we finally let
\[
\dep(S)\ =\ \bigcup_{i \geq 0} \Gamma_{\dep}^{i}(S).
\]
At first glance, the construction of $\dep(S)$ requires infinitely many iterations. However, since $\dep(S) \subseteq \shape{\dep}$, in the worst-case $\dep(S)$ is obtained after $|\shape{\dep}|$ iterations. It is actually easy to verify that $\dep(S) = \Gamma_{\dep}^{|\simple{\dep}|}(S)$. Therefore, since $\shape{\dep}$ is finite, we conclude that $\dep(S)$ can be obtained after finitely many steps.
We now have all the ingredients to formally define dynamic simplification.

\begin{definition}\label{def:dyn-simplification}
	Consider a database $D$ and a set $\dep$ of linear TGDs.\footnote{We assume, without loss of generality, that the atoms of $D$ mention only predicates of $\sch{\dep}$, and thus, $\shape{D} \subseteq \shape{\dep}$. Indeed, the atoms of $D$ with a predicate not in $\sch{\dep}$ do not affect in any way the size of the instance $\chase{D}{\dep}$.} The {\em dynamic simplification of $\dep$ relative to $D$} (or {\em $D$-simplification of $\dep$}), denoted $\dsimple{D}{\dep}$, is defined as the set
	\begin{multline*}
	\big\{	\simple{R(f(\bar x))} \rightarrow \exists \bar z\, \simple{\psi(f(\bar y),\bar z)} \mid\\
	R(\bar x) \ra \exists \bar z\, \psi(\bar y,\bar z) \in \dep \text{ and } f \text{ is the $h$-specialization of } \bar x\\
	\text{ for some homomorphism } h \text{ from } \{R(\bar x)\} \text{ to } \mi{DB}(\dep(\shape{D}))\big\}
	\end{multline*}
	consisting only of simple-linear TGDs. \hfill\markfull
\end{definition}

It is not difficult to verify that the $D$-simplification of $\dep$ essentially collects all the TGDs of $\simple{\dep}$ such that the predicate of their body-atom belongs to $\dep(\shape{D})$.
\ignore{
\begin{definition}\label{def:dyn-simplification}
	Consider a database $D$ and a set $\dep$ of linear TGDs.\footnote{We assume, without loss of generality, that the atoms of $D$ mention only predicates of $\sch{\dep}$, and thus, $\shape{D} \subseteq \shape{\dep}$. Indeed, the atoms of $D$ with a predicate not in $\sch{\dep}$ do not affect in any way the size of the instance $\chase{D}{\dep}$.} The {\em dynamic simplification of $\dep$ relative to $D$} (or {\em $D$-simplification of $\dep$}), denoted $\dsimple{D}{\dep}$, is defined as the set
	\[
	\left\{R_{\id{}{\bar t}}(\bar x) \ra \exists \bar z \, \psi(\bar y,\bar z) \in \simple{\dep} \mid R_{\id{}{\bar t}} \in \dep(\shape{D})\right\}.
	\]
	consisting only of simple-linear TGDs. \hfill\markfull
\end{definition}
}
We now proceed to show that indeed dynamic simplification preserves the finiteness of the chase.

\begin{lemma}\label{lm:dyn-simplification}
	Consider a database $D$ and a set $\dep \in \class{L}$ of TGDs. The following are equivalent:
	\begin{enumerate}
		\item $\chase{\simple{D}}{\simple{\dep}}$ is finite.
		\item $\chase{\simple{D}}{\dsimple{D}{\dep}}$ is finite.
	\end{enumerate}
\end{lemma}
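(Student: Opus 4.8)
The plan is to prove the two implications by analyzing the structure of the chase instances $\chase{\simple{D}}{\simple{\dep}}$ and $\chase{\simple{D}}{\dsimple{D}{\dep}}$. Since $\dsimple{D}{\dep} \subseteq \simple{\dep}$, one direction is almost immediate: fewer TGDs can only produce fewer atoms, so if $\chase{\simple{D}}{\simple{\dep}}$ is finite then so is $\chase{\simple{D}}{\dsimple{D}{\dep}}$. More precisely, I would argue that every trigger that fires during the construction of $\chase{\simple{D}}{\dsimple{D}{\dep}}$ also fires (on a corresponding atom) during the construction of $\chase{\simple{D}}{\simple{\dep}}$, hence the smaller chase embeds into the larger one and inherits finiteness. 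This establishes the direction $(1) \Rightarrow (2)$.

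The interesting direction is $(2) \Rightarrow (1)$, and the key claim to isolate is that the two chase instances are in fact \emph{equal}: $\chase{\simple{D}}{\simple{\dep}} = \chase{\simple{D}}{\dsimple{D}{\dep}}$. Granting this, both finiteness statements are trivially equivalent and the lemma follows. To prove the equality, I would show that the TGDs of $\simple{\dep}$ that do \emph{not} belong to $\dsimple{D}{\dep}$ are never triggered during the chase of $\simple{D}$ with $\simple{\dep}$. The guiding intuition, already stated in the surrounding text, is that a simple-linear TGD from $\simple{\dep}$ can fire only if the shape of its single body-atom matches some shape reachable from the shapes present in $\simple{D}$. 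Formally, I would prove by induction on the chase step $i$ that the shape of every atom appearing in $\chasei{\simple{D}}{\simple{\dep}}{i}$ belongs to $\dep(\shape{D})$. The base case $i=0$ holds because $\shape{\simple{D}} = \shape{D} \subseteq \dep(\shape{D})$ (the latter by $\Gamma_{\dep}^{0}$). For the inductive step, any new atom is produced by applying some $\tau \in \simple{\dep}$, say the simplification of $\sigma = R(\bar x) \ra \exists \bar z\, \psi(\bar y,\bar z)$ induced by a specialization $f$, to a body-atom with shape $R_{\id{}{\bar t'}} = \shape{\simple{R(f(\bar x))}}$. By the induction hypothesis this body-shape lies in $\dep(\shape{D})$, and then the definition of the immediate-consequence operator $\Gamma_{\dep}$ (item~(2), matching $f$ to the $h$-specialization of $\bar x$) guarantees that the shape of each head-atom of $\tau$ also lies in $\Gamma_{\dep}(\dep(\shape{D})) = \dep(\shape{D})$, using the fixpoint property $\dep(S) = \Gamma_\dep(\dep(S))$.

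Once the invariant is established, I would close the argument as follows. Any trigger $(\tau,g)$ that fires in the chase of $\simple{D}$ with $\simple{\dep}$ acts on a body-atom whose shape is $\shape{\simple{R(f(\bar x))}}$, which by the invariant belongs to $\dep(\shape{D})$; consequently $\tau$ is already present in $\dsimple{D}{\dep}$ (since, as remarked after Definition~\ref{def:dyn-simplification}, $\dsimple{D}{\dep}$ collects exactly the TGDs of $\simple{\dep}$ whose body-predicate lies in $\dep(\shape{D})$). Hence every trigger that contributes an atom to $\chase{\simple{D}}{\simple{\dep}}$ is also a valid trigger over $\dsimple{D}{\dep}$, giving the inclusion $\chase{\simple{D}}{\simple{\dep}} \subseteq \chase{\simple{D}}{\dsimple{D}{\dep}}$; the reverse inclusion is immediate from $\dsimple{D}{\dep} \subseteq \simple{\dep}$. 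The two instances coincide, and equivalence of their finiteness is immediate.

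The main obstacle I anticipate is tightening the correspondence between the syntactic matching of specializations in the definition of $\Gamma_{\dep}$ and the actual homomorphisms $g$ arising during the chase. The subtlety is that a body-atom encountered in the chase carries a concrete tuple of terms (constants and nulls), and I must verify that the induced equality pattern on its arguments corresponds precisely to the $h$-specialization used to generate the matching simplification, so that the head-shapes produced really are the ones recorded by $\Gamma_{\dep}$. Carefully setting up the bookkeeping between $\id{}{\cdot}$, the specialization $f$, and the homomorphism $h$ on the induced database $\mi{DB}[S]$ is where the proof requires the most care, though it is routine once the right notation is fixed.
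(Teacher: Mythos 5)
Your proof is correct, but it takes a genuinely different route from the paper's. The paper argues the interesting direction by contraposition, entirely at the level of dependency graphs: if $\chase{\simple{D}}{\simple{\dep}}$ is infinite, Theorem~\ref{the:characterization-simple-linear} yields a $\simple{D}$-supported cycle with a special edge in $\depg{\simple{\dep}}$; an auxiliary path lemma (proved by induction on path length) shows that every predicate on a path of $\depg{\simple{\dep}}$ starting at a predicate of $\simple{D}$ lies in $\dep(\shape{D})$; the standing non-empty-frontier assumption is then used to prepend to the cycle a path originating at an atom of $\simple{D}$, so that all TGDs witnessing the resulting walk survive in $\dsimple{D}{\dep}$, and Theorem~\ref{the:characterization-simple-linear} is applied a second time to conclude that $\chase{\simple{D}}{\dsimple{D}{\dep}}$ is infinite. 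You instead prove the stronger, purely operational statement that the two chase instances are \emph{equal}, via an invariant (induction on chase levels) showing that every atom ever produced has its predicate in $\dep(\shape{D})$, so that the TGDs of $\simple{\dep}\setminus\dsimple{D}{\dep}$ never fire; both directions of the lemma then follow at once. Your route buys three things: it bypasses the characterization theorem entirely, it does not use the non-empty-frontier assumption that the paper's path-surgery step relies on, and it formally substantiates the paper's informal remark that the discarded TGDs ``will never be applied'' during the construction of $\chase{\simple{D}}{\simple{\dep}}$. What the paper's route buys is that its argument stays in the same graph-theoretic vocabulary (supported cycles, special edges) as the rest of the machinery, which is also what underpins the subsequent Lemma~\ref{lm:avoid-simplification}. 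One small point to repair in your write-up: atoms of $\chase{\simple{D}}{\simple{\dep}}$ are over shape predicates but may repeat terms, so the invariant should speak of the \emph{predicate} of an atom rather than its shape; with that adjustment, your inductive step --- identifying the specialization $f$ inducing the fired TGD with the $h$-specialization in the definition of $\Gamma_{\dep}$, and using the fixpoint property $\Gamma_{\dep}(\dep(\shape{D})) = \dep(\shape{D})$ --- is exactly the right bookkeeping, and the final appeal to the remark after Definition~\ref{def:dyn-simplification} closes the argument.
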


\begin{proof}
	Since, by definition, $\dsimple{D}{\dep} \subseteq \simple{\dep}$, it is clear that (1) implies (2). The interesting direction is (2) implies (1).
	\ignore{
	We start by observing that $(1)$ holds iff $\chase{\simple{D}}{\simple{\dep^+}}$ is finite, and $(2)$ holds iff $\chase{\simple{D}}{\dsimple{D}{\dep^+}}$ is finite, where $\dep^+$ is obtained from $\dep$ by replacing each TGD with an empty frontier of the form
	$R(x_1,\ldots,x_n) \ra \exists z \, \psi(\bar z)$ with the TGD $R(x_1,\ldots,x_n,y) \ra \exists z \, \psi(\bar z,y)$ with a non-empty frontier; $\psi(\bar z,y)$ is obtained from $\psi(\bar z)$ by simply replacing each atom $P(\bar u)$ with $P(\bar u,y)$.
	Therefore, it suffices to show that if $\chase{\simple{D}}{\dsimple{D}{\dep^+}}$ is finite, then $\chase{\simple{D}}{\simple{\dep^+}}$ is finite. As we shall see, it is more convenient to work with $\dep^+$ instead of $\dep$ since the following is guaranteed: whenever a predicate $P$ is reachable from a predicate $R$ (w.r.t.~$\dep^+$), i.e., $R \reach{\dep^+} P$, then there exists a path in the dependency graph of $\dep^+$ starting from a node $(R,i)$ and ending at a node $(P,j)$; the latter is not true whenever we have TGDs with an empty frontier. 
}
	We proceed to establish the contrapositive of the implication in question, that is, if the instance $\chase{\simple{D}}{\simple{\dep}}$ is infinite, then the instance $\chase{\simple{D}}{\dsimple{D}{\dep}}$ is also infinite. To this end, we first show an auxiliary technical lemma:
	
	\begin{lemma}\label{lm:aux-dyn-simplification}
		Consider a path $(R_1,i_1),\ldots,(R_n,i_n)$ in the dependency graph of $\simple{\dep}$ such that there is an atom of the form $R_1(\bar c)$ in $\simple{D}$. It holds that $\{R_1,\ldots,R_n\} \subseteq \dep(\shape{D})$.
	\end{lemma}

	\begin{proof}
		We proceed by induction on the length $n$ of the path.
				
		\textbf{Base Case.} Clearly, $\shape{D} \subseteq \dep(\shape{D})$. Since $\shape{D}$ coincides with the set of predicates used by the atoms of $\simple{D}$ and, by hypothesis, $R_1$ is used by an atom of $\simple{D}$, we get that $R_1 \in \dep(\shape{D})$, as needed.
				
		\textbf{Inductive Step.} Consider a path $(R_1,i_1),\ldots,(R_n,i_n)$ in the dependency graph of $\simple{\dep}$ with $R_1$ being a predicate used by an atom of $\simple{D}$. By induction hypothesis, $\{R_1,\ldots,R_{n-1}\} \subseteq \dep(\shape{D})$. It remains to show that $R_n \in \dep(\shape{D})$.
		Assume that $\sigma$ is the TGD witnessing the edge $((R_{n-1},i_{n-1}),(R_{n},i_{n}))$ in the dependency graph of $\simple{\dep}$. There is $i \geq 0$ such that $R_{n-1} \in \Gamma_{\dep}^{i}(\shape{D})$. Since $R_{n-1}$ is the predicate of the body-atom of $\sigma$, $R_n \in \Gamma_{\dep}^{i+1}(\shape{D})$. Thus, $R_n \in \dep(\shape{D})$.
	\end{proof}

	We can now show the desired implication. Since, by hypothesis, $\chase{\simple{D}}{\simple{\dep}}$ is infinite, Theorem~\ref{the:characterization-simple-linear} allows us to conclude that there exists a $\simple{D}$-supported cycle with a special edge in the dependency graph of $\simple{\dep}$. Let
	\[
	(R_1,i_1),\ldots,(R_n,i_n),
	\]
	with $(R_1,i_1) = (R_n,i_n)$, be such a cycle. Since this cycle is $\simple{D}$-supported, there exists an atom $P(\bar c) \in D$ and a node $(R_j,i_j)$ in the cycle such that $P \reach{\dep} R_j$. Since the TGDs of $\dep$ have a non-empty frontier, in the dependency graph of $\simple{\dep}$ there exists a path 
	\[
	(P_1,k_1),\ldots,(P_m,k_m)
	\]
	with $P_1 = P$ and $(P_m,k_m) = (R_j,i_j)$. Summing up, in the dependency graph of $\simple{\dep}$, there exists a path of the form
	\begin{align*}
	& (P_1,k_1),\ldots,(P_{m-1},k_{m-1}),\\
	&\hspace{10mm}(R_j,i_j),\ldots,(R_{n-1},i_{n-1}),
	(R_1,i_1),\ldots,(R_{j-1},i_{j-1}),(R_j,i_j).
	\end{align*}
	Assume that this path is witnessed by the TGDs $\sigma_1,\ldots,\sigma_{n+m-2}$. By Lemma~\ref{lm:aux-dyn-simplification}, $\{P_1,\ldots,P_{m-1},R_1,\ldots,R_{n-1}\} \subseteq \dep(\shape{D})$. By the definition of dynamic simplification (see Definition~\ref{def:dyn-simplification}), we get that the TGDs $\sigma_1,\ldots,\sigma_{n+m-2}$ belong to $\dsimple{D}{\dep}$. Therefore, the above $\simple{D}$-supported cycle with a special edge also occurs in the dependency graph of $\dsimple{D}{\dep}$. Hence, by Theorem~\ref{the:characterization-simple-linear}, $\chase{\simple{D}}{\dsimple{D}{\dep}}$ is infinite, and the claim follows.
\end{proof}

Another crucial property of dynamic simplification, which will help us to further improve the performance of the termination algorithm for linear TGDs, is that the $\simple{D}$-weak-acyclicity of $\dsimple{D}{\dep}$ coincides with the weak-acyclicity of $\dsimple{D}{\dep}$. This holds since, by construction, all the predicates occurring in the TGDs of $\dsimple{D}{\dep}$ are reachable from a predicate occurring in $\simple{D}$, and thus, each cycle with a special edge in the dependency graph of $\dsimple{D}{\dep}$ is trivially $\simple{D}$-supported.
The above property immediately implies Lemma~\ref{lm:avoid-simplification} below since checking for the finiteness of $\chase{\simple{D}}{\dsimple{D}{\dep}}$, which is equivalent to the $\simple{D}$-weak-acyclicity of $\dsimple{D}{\dep}$ by Theorem~\ref{the:characterization-linear}, boils down to checking if $\dsimple{D}{\dep}$ is weakly-acyclic, without having to explicitly check for $\simple{D}$-supportedness, and thus, avoiding the expensive task of simplifying $D$.

\begin{lemma}\label{lm:avoid-simplification}
	Consider a database $D$ and a set $\dep \in \class{L}$ of TGDs. The following are equivalent:
	\begin{enumerate}
	\item $\chase{\simple{D}}{\dsimple{D}{\dep}}$ is finite.
	\item $\dsimple{D}{\dep}$ is weakly-acyclic.
	\end{enumerate}
\end{lemma}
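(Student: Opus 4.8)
The plan is to reduce the statement to a purely graph-theoretic equivalence and then exploit the special reachability structure of $\dsimple{D}{\dep}$. Since $\dsimple{D}{\dep}$ consists only of simple-linear TGDs (by Definition~\ref{def:dyn-simplification}) and $\simple{D}$ is a database, I would first invoke Theorem~\ref{the:characterization-simple-linear} on the pair $(\simple{D}, \dsimple{D}{\dep})$ to rewrite condition (1) as: \emph{$\dsimple{D}{\dep}$ is $\simple{D}$-weakly-acyclic}. The lemma then amounts to showing that, for this particular set of TGDs, $\simple{D}$-weak-acyclicity coincides with plain weak-acyclicity. One direction is immediate from the definitions: a set with no special cycle at all certainly has no $\simple{D}$-supported special cycle, so weak-acyclicity always implies $\simple{D}$-weak-acyclicity. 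The whole content therefore lies in the converse, which I would prove by contraposition: assuming $\dsimple{D}{\dep}$ is \emph{not} weakly-acyclic, I must exhibit a $\simple{D}$-supported special cycle.

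The engine driving the converse is the following structural claim: \emph{every predicate occurring in $\dsimple{D}{\dep}$ is reachable, in $\depg{\dsimple{D}{\dep}}$, from some predicate of $\simple{D}$}. To set this up I would first record that every predicate appearing anywhere in $\dsimple{D}{\dep}$ lies in $\dep(\shape{D})$: the body shapes lie there by the very definition of dynamic simplification, and the head shapes lie there because $\dep(\shape{D})$ is a fixpoint of the immediate-consequence operator $\Gamma_{\dep}$. Indeed, a head shape of a TGD whose body shape is already in $\dep(\shape{D})$ is, by item (2) of the immediate-consequence definition, an immediate consequence of $\dep(\shape{D})$, hence again in $\dep(\shape{D})$.

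I would then prove the reachability claim by induction on the least $i$ with $Q \in \Gamma_{\dep}^{i}(\shape{D})$. The base case $i = 0$ is trivial, since $Q \in \shape{D}$ means $Q$ is the predicate of an atom of $\simple{D}$, and reachability is reflexive. For the inductive step, if $Q$ is freshly produced at level $i+1$, there is a witnessing simplified TGD whose head mentions $Q$ and whose body shape $Q'$ already belongs to $\Gamma_{\dep}^{i}(\shape{D})$; crucially, this simplified TGD lies in $\dsimple{D}{\dep}$ precisely because its body shape is in $\dep(\shape{D})$. Since all TGDs have a non-empty frontier, the dependency graph $\depg{\dsimple{D}{\dep}}$ then contains an edge from a position of $Q'$ to a position of $Q$, so $Q$ is reachable from $Q'$; composing with the induction hypothesis for $Q'$ and using transitivity of reachability yields a path from a predicate of $\simple{D}$ to $Q$.

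With the claim in hand the conclusion is short. Given a special cycle $C$ in $\depg{\dsimple{D}{\dep}}$, I pick any node $(P,i)$ on $C$; by the claim $P$ is reachable from the predicate of some atom $R(\bar c) \in \simple{D}$, which is exactly the condition for $C$ to be $\simple{D}$-supported. Hence $\dsimple{D}{\dep}$ is not $\simple{D}$-weakly-acyclic, completing the contrapositive. I expect the main obstacle to be the inductive step of the reachability claim: specifically, verifying that the TGD witnessing a fresh immediate consequence is genuinely a member of $\dsimple{D}{\dep}$, so that its edges actually appear in $\depg{\dsimple{D}{\dep}}$ rather than merely in $\depg{\simple{\dep}}$, and that the non-empty-frontier assumption is precisely what guarantees the required body-to-head edge.
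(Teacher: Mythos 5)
Your route is the paper's own: reduce condition (1) to the $\simple{D}$-weak-acyclicity of $\dsimple{D}{\dep}$ via the characterization for simple-linear TGDs (your citation of Theorem~\ref{the:characterization-simple-linear} is in fact the apt one, since $\dsimple{D}{\dep} \in \class{SL}$ and $\simple{D}$ is a database), observe that plain weak-acyclicity trivially implies $\simple{D}$-weak-acyclicity, and for the converse establish that every predicate occurring in $\dsimple{D}{\dep}$ is reachable from a predicate of $\simple{D}$, so that any special cycle is automatically $\simple{D}$-supported. The paper merely asserts this reachability property ``by construction''; your fixpoint argument (head shapes of TGDs whose body shape lies in $\dep(\shape{D})$ remain in $\dep(\shape{D})$, and the witnessing simplified TGD really belongs to $\dsimple{D}{\dep}$) supplies those missing details correctly.

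However, one step does not follow as written: ``composing with the induction hypothesis for $Q'$ and using transitivity of reachability''. Reachability, as defined in the paper, is a predicate-level relation witnessed by a position-level path: $P$ is reachable from $R$ iff $R=P$ or some path runs from \emph{some} position of $R$ to \emph{some} position of $P$. Such a relation is not transitive in general: your inductive hypothesis yields a path ending at some position $(Q',l)$, while your new edge leaves from a possibly different position $(Q',i)$, and the two need not concatenate. This is not a phantom worry; for dependency graphs it genuinely fails once one leaves the linear setting (with $A(x), B(y) \ra P(x,y)$ and $P(u,v) \ra Q(v)$, both with non-empty frontier, $P$ is reachable from $A$ and $Q$ from $P$, yet $Q$ is not reachable from $A$), so some use of linearity and non-empty frontiers beyond what you invoked is indispensable at exactly this point. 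The repair is local and uses only ingredients you already have: strengthen the inductive claim to ``either $Q \in \shape{D}$, or \emph{every} position $(Q,j)$ is the endpoint of a path starting at a position of some predicate of $\shape{D}$''. The inductive step then closes, because the non-empty-frontier assumption gives more than an edge into \emph{some} position of $Q$: every position of the relevant head atom holds either a frontier or an existential variable, hence every position of $Q$ receives an edge from some body position of $Q'$, and the strengthened hypothesis for $Q'$ (covering all of its positions) lets you prepend a suitable path regardless of which body position that is. (One can alternatively prove that reachability \emph{is} transitive for linear TGDs with non-empty frontier, but that proof is essentially this same strengthened induction, so strengthening directly is cleaner.)
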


\begin{algorithm}[t]
	\KwIn{A database $D$ and a set $\dep \in \class{L}$ of TGDs}
	\KwOut{The $D$-simplification of $\dep$}
	
	\medskip
	
	$S \leftarrow \mathsf{FindShapes}(D)$;\label{ln:shapes}\\
	$\Sigma_s \leftarrow \emptyset$;\label{ln:empty}\\
	$\Delta S \leftarrow S$;\label{ln:delta}\\
	\While{$\Delta S\neq \emptyset$\label{ln:while}}{
		$\dep_{\mi{aux}} \leftarrow \mathsf{Applicable}(\Delta S,\Sigma)$;\label{ln:tgdsaux}\\
		$S_{\mi{aux}} \leftarrow \big\{R_{\id{}{\bar t}} \in \shape{\dep} \mid \text{ there exists a TGD } \sigma \in \dep_{\mi{aux}} \text{ such that } R_{\id{}{\bar t}} \text{ occurs in } \head{\sigma}\big\}$;\label{ln:shapesaux}\\
		$\Sigma_s \leftarrow \Sigma_s \cup \dep_{\mi{aux}}$;\label{ln:tgds}\\
		$\Delta S \leftarrow S_{\mi{aux}} \setminus S$;\label{ln:deltaS}\\
		$S \leftarrow S\cup \Delta S$\label{ln:S};
	}
	\KwRet{$\Sigma_s$;}\label{ln:returnD}
	\caption{$\mathsf{DynSimplification}$}\label{alg:dsimplify}
\end{algorithm}

\ignore{
\begin{algorithm}[t]
	\KwIn{A database $D$ and a set $\dep \in \class{L}$ of TGDs}
	\KwOut{The $D$-simplification of $\dep$}
	
	\medskip
	
	$S \leftarrow \mathsf{FindShapes}(D)$;\label{ln:shapes}\\
	$\Delta S \leftarrow S$;\label{ln:delta}\\
	$\Sigma' \leftarrow \emptyset$;\label{ln:empty}\\
	\While{$\Delta S\neq \emptyset$\label{ln:while}}{
		$\Sigma'\leftarrow \Sigma'\cup \mathsf{Applicable}(\Delta S,\Sigma)$;\label{ln:tgds}\\
		$\Delta S \leftarrow \Gamma_\Sigma(\Delta S) \setminus S$;\label{ln:deltaS}\\
		$S \leftarrow S\cup \Delta S$\label{ln:S};
	}
	\KwRet{$\Sigma'$;}\label{ln:returnD}
	\caption{$\mathsf{DynSimplification}$}\label{alg:dsimplify}
\end{algorithm}
}

\medskip
\noindent \textbf{An Algorithm for Dynamic Simplification.}
Dynamic simplification indeed allow us to filter out from the set obtained by statically simplifying a set of linear TGDs superfluous simple-linear TGDs by exploiting the given database. It remains, however, to provide a concrete algorithm that performs the dynamic simplification of a set of linear TGDs that is amenable to an efficient implementation. To this end, we proceed to present the algorithm $\mathsf{DynSimplification}$, depicted in Algorithm~\ref{alg:dsimplify}, that takes as input a database $D$ and a set $\dep$ of linear TGDs, and constructs the $D$-simplification of $\dep$.
The algorithm starts by finding the shapes of the atoms occurring in $D$, namely it computes the set $\shape{D}$ (line~\ref{ln:shapes}). It then initializes the set of simplified TGDs $\Sigma_s$ (line~\ref{ln:empty}) and the set of new shapes $\Delta S$ (line~\ref{ln:delta}).
Then, the algorithm iteratively generates simplified TGDs and collects the new shapes that are added to $\Delta S$,
and continues this until a fixpoint is reached, i.e., $\Delta S =\emptyset$ (line~\ref{ln:while}). 
In particular, at each iteration, the algorithm computes simplified TGDs that are not superfluous, i.e., they can be applied during the construction of $\chase{\simple{D}}{\simple{\dep}}$, that are added to $\Sigma_s$ (lines~\ref{ln:tgdsaux} and~\ref{ln:tgds}). This is done via the procedure $\mathsf{Applicable}$, which takes as input a set of shapes $\hat{S}$ and a set of linear TGDs $\hat{\dep}$, and returns the set
\begin{multline*}
\big\{	\simple{R(f(\bar x))} \rightarrow \exists \bar z\, \simple{\psi(f(\bar y),\bar z)} \mid\\
R(\bar x) \ra \exists \bar z\, \psi(\bar y,\bar z) \in \hat{\dep} \text{ and } f \text{ is the $h$-specialization of } \bar x\\
\text{ for some homomorphism } h \text{ from } \{R(\bar x)\} \text{ to } \mi{DB}[\hat{S}]\big\}.
\end{multline*}
In essence, the procedure $\mathsf{Applicable}$ computes the set of TGDs of $\simple{\hat{\dep}}$ such that the predicate of their body belongs to $\hat{S}$.
The algorithm also collects the newly generated shapes, that is, the predicates occurring in the head of the TGDs of $\mathsf{Applicable}(\Delta S,\dep)$, that are added to $\Delta S$ (lines~\ref{ln:shapesaux} and~\ref{ln:deltaS}).
\ignore{
While finding the new shapes, the algorithm also computes the simplified TGDs that are not superfluous, i.e., they can be applied during the construction of $\chase{\simple{D}}{\simple{\dep}}$, and are added to $\Sigma'$ (line~\ref{ln:tgds}). This is done via the procedure $\mathsf{Applicable}$, which takes as input a set of shapes $\hat{S}$ and a set of linear TGDs $\hat{\dep}$, and returns the set
\[
\left\{R_{\id{}{\bar t}}(\bar x) \ra \exists \bar z \, \psi(\bar y,\bar z) \in \simple{\hat{\dep}} \mid R_{\id{}{\bar t}} \in \hat{S}\right\}.
\]
}
Note that at each iteration, the algorithm applies the TGDs on $\Delta S$, not on $S$, with the exception of the first iteration where $S = \Delta S$. This works because there are no new applicable TGDs on $S$ after the first iteration since the TGDs are linear (only one body-atom) and all the applicable TGDs on $S$ are applied during the first iteration.
The implementation details of \textsf{FindShapes} and \textsf{Applicable} are discussed in Sections~\ref{sec:bgarph} and \ref{sec:scc}, respectively.
$\mathsf{DynSimplification}$ is correct by construction:

\begin{lemma}\label{lm:dyn-simplification-algorithm}
	Consider a database $D$ and a set $\dep \in \class{L}$ of TGDs. It holds that $\mathsf{DynSimplification}(D,\dep) = \dsimple{D}{\dep}$.
\end{lemma}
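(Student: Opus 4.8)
The plan is to prove the identity by tracking the state of the loop across its iterations and matching it against the fixpoint iterates of the immediate-consequence operator $\Gamma_{\dep}$. The first observation I would record is that, by comparing Definition~\ref{def:dyn-simplification} with the specification of the procedure $\mathsf{Applicable}$, we already have $\dsimple{D}{\dep} = \mathsf{Applicable}(\dep(\shape{D}),\dep)$: both sets consist exactly of the simplifications $\simple{R(f(\bar x))} \ra \exists \bar z\, \simple{\psi(f(\bar y),\bar z)}$ induced by the $h$-specialization of $\bar x$ for some homomorphism $h$ from $\{R(\bar x)\}$ into the database $\mi{DB}(\dep(\shape{D}))$ induced by $\dep(\shape{D})$. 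Hence it suffices to show that the value $\Sigma_s$ returned by $\mathsf{DynSimplification}(D,\dep)$ is exactly $\mathsf{Applicable}(\dep(\shape{D}),\dep)$.

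Two auxiliary facts drive the argument. First, $\mathsf{Applicable}$ distributes over unions of shape sets, i.e. $\mathsf{Applicable}(S_1 \cup S_2,\dep) = \mathsf{Applicable}(S_1,\dep) \cup \mathsf{Applicable}(S_2,\dep)$. This holds because the TGDs of $\dep$ are linear, so a TGD is selected purely on the basis of the shape of its single body-atom, and because $\mi{DB}[S_1 \cup S_2] = \mi{DB}[S_1] \cup \mi{DB}[S_2]$ together with the fact that a homomorphism from a single atom $\{R(\bar x)\}$ must land entirely in the part of the induced database contributed by $S_1$ or by $S_2$. This is precisely what legitimizes processing only the freshly discovered shapes $\Delta S$ at each iteration rather than the whole accumulated set $S$. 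Second, by the definition of immediate consequence, $\Gamma_{\dep}(S) = S \cup S'$, where $S'$ is the set of shapes occurring in the heads of the TGDs of $\mathsf{Applicable}(S,\dep)$; this set $S'$ is exactly what is computed in line~\ref{ln:shapesaux}.

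With these facts in hand I would prove, by induction on the iteration count $j \geq 1$, the loop invariant: after the $j$-th pass through the loop body, $S = \Gamma_{\dep}^{j}(\shape{D})$, $\Delta S = \Gamma_{\dep}^{j}(\shape{D}) \setminus \Gamma_{\dep}^{j-1}(\shape{D})$, and $\Sigma_s = \mathsf{Applicable}(\Gamma_{\dep}^{j-1}(\shape{D}),\dep)$. The base case $j=1$ uses the initialization $S = \Delta S = \shape{D} = \Gamma_{\dep}^{0}(\shape{D})$ (lines~\ref{ln:shapes} and~\ref{ln:delta}) together with the $\Gamma_{\dep}$ identity above. For the inductive step with $j \geq 2$, line~\ref{ln:tgds} adds $\mathsf{Applicable}(\Delta S,\dep) = \mathsf{Applicable}(\Gamma_{\dep}^{j-1}(\shape{D}) \setminus \Gamma_{\dep}^{j-2}(\shape{D}),\dep)$ to $\Sigma_s$, which by distributivity, the induction hypothesis $\Sigma_s = \mathsf{Applicable}(\Gamma_{\dep}^{j-2}(\shape{D}),\dep)$, and monotonicity of the iterates brings $\Sigma_s$ up to $\mathsf{Applicable}(\Gamma_{\dep}^{j-1}(\shape{D}),\dep)$; lines~\ref{ln:shapesaux}--\ref{ln:S} then update $S$ to $\Gamma_{\dep}^{j}(\shape{D})$ and $\Delta S$ to $\Gamma_{\dep}^{j}(\shape{D}) \setminus \Gamma_{\dep}^{j-1}(\shape{D})$ via the $\Gamma_{\dep}$ identity.

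Finally, since $\shape{\dep}$ is finite the chain $\Gamma_{\dep}^{0}(\shape{D}) \subseteq \Gamma_{\dep}^{1}(\shape{D}) \subseteq \cdots$ stabilizes, so the loop terminates at the first $m$ with $\Delta S = \Gamma_{\dep}^{m}(\shape{D}) \setminus \Gamma_{\dep}^{m-1}(\shape{D}) = \emptyset$, i.e. $\Gamma_{\dep}^{m-1}(\shape{D})$ is a fixpoint and therefore equals $\dep(\shape{D})$. At that point the invariant gives $\Sigma_s = \mathsf{Applicable}(\Gamma_{\dep}^{m-1}(\shape{D}),\dep) = \mathsf{Applicable}(\dep(\shape{D}),\dep) = \dsimple{D}{\dep}$, which is the returned value. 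The main obstacle I anticipate is establishing the distributivity of $\mathsf{Applicable}$ cleanly and verifying that it is genuinely the linearity of the TGDs that makes it sound to iterate on $\Delta S$ instead of $S$: this is the one place where the single-body-atom assumption is indispensable, and it is exactly the informal justification the text gives for the delta-based loop.
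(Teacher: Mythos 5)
Your proof is correct. The paper itself offers no proof of this lemma --- it simply asserts that $\mathsf{DynSimplification}$ is ``correct by construction,'' accompanied by an informal remark that applying the TGDs only to $\Delta S$ (rather than to all of $S$) after the first iteration is sound because the TGDs are linear; your loop invariant, the identity $\Gamma_{\dep}(S) = S \cup S'$, and the distributivity of $\mathsf{Applicable}$ over unions of shape sets (which is exactly where the single-body-atom assumption is used) constitute precisely the formalization of that remark, so your argument is the natural completion of the paper's intended justification rather than a different route.
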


\begin{algorithm}[t]
	\KwIn{A database $D$ and a set $\dep \in \class{L}$ of TGDs}
	\KwOut{\true if $\chase{D}{\dep}$ is finite and \false otherwise}
	
	\medskip
	
	$\dep_s \leftarrow \mathsf{DynSimplification}(D,\Sigma)$;\label{ln:d-simplify}\\
	$G \leftarrow \textsf{BuildDepGraph}(\dep_s)$;\\	\lIf{$\mathsf{FindSpecialSCC}(G) \neq \emptyset$}{\KwRet{\false}}
	\KwRet{\true}
	\caption{$\mathsf{IsChaseFinite[L]}$}\label{alg:dterm}
\end{algorithm}

\medskip

\noindent
\textbf{Termination Algorithm.} Having in place 
$\mathsf{DynSimplification}$, it is now straightforward to devise the algorithm $\mathsf{IsChaseFinite[L]}$, depicted in Algorithm~\ref{alg:dterm}, that checks for the finiteness of the chase in the case of linear TGDs.
The correctness of $\mathsf{DynSimplification}$, Theorem~\ref{the:characterization-linear}, Lemma~\ref{lm:dyn-simplification}, and Lemma~\ref{lm:avoid-simplification}, imply the correctness of the algorithm $\mathsf{IsChaseFinite[L]}$, and the next lemma follows:

\begin{lemma}\label{lm:linear} Given a database $D$ and a set $\dep \in \class{L}$ of TGDs, it holds that $\mathsf{IsChaseFinite[L]}(D,\dep) = \true$ iff $\chase{D}{\dep}$ is finite.
\end{lemma}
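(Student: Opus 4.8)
The plan is to prove the biconditional by assembling a chain of equivalences, each link of which is supplied by a result already available in the excerpt, so that the bulk of the work is bookkeeping rather than new mathematics. First I would unfold what the algorithm actually returns: since $\mathsf{IsChaseFinite[L]}$ sets $\dep_s \leftarrow \mathsf{DynSimplification}(D,\dep)$, which equals $\dsimple{D}{\dep}$ by Lemma~\ref{lm:dyn-simplification-algorithm}, and then returns \true exactly when $\textsf{FindSpecialSCC}(G) = \emptyset$ for $G = \textsf{BuildDepGraph}(\dep_s) = \depg{\dsimple{D}{\dep}}$, the output is \true iff $\depg{\dsimple{D}{\dep}}$ has no special SCC (assuming, as we may, that the two subroutines correctly compute the dependency graph and its special SCCs).

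The first genuine link is to reconcile ``no special SCC'' with the condition in Definition~\ref{def:dwa}, namely ``no cycle carrying a special edge''. I would argue both directions: a cycle through a special edge has all its nodes in a single SCC, and that SCC then contains the special edge, hence is special; conversely, if a special SCC $C$ contains a special edge $(u,v)$, then $u$ and $v$ lie in the same SCC, so there is a directed path from $v$ back to $u$, which closes up with $(u,v)$ into a cycle containing a special edge. Hence $\depg{\dsimple{D}{\dep}}$ has no special SCC iff $\dsimple{D}{\dep}$ is weakly-acyclic.

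The remaining links are handed to us directly. By Lemma~\ref{lm:avoid-simplification}, weak-acyclicity of $\dsimple{D}{\dep}$ is equivalent to the finiteness of $\chase{\simple{D}}{\dsimple{D}{\dep}}$; by Lemma~\ref{lm:dyn-simplification}, that is equivalent to the finiteness of $\chase{\simple{D}}{\simple{\dep}}$; and applying Theorem~\ref{the:characterization-simple-linear} to the simple-linear set $\simple{\dep}$ over the database $\simple{D}$ shows the latter is equivalent to $\simple{\dep}$ being $\simple{D}$-weakly-acyclic, which by Theorem~\ref{the:characterization-linear} is in turn equivalent to the finiteness of $\chase{D}{\dep}$. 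Concatenating all these equivalences yields $\mathsf{IsChaseFinite[L]}(D,\dep) = \true$ iff $\chase{D}{\dep}$ is finite.

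I expect no deep obstacle here, since every nontrivial ingredient has already been established; the only step requiring care is the small graph-theoretic equivalence between the presence of a special SCC and the presence of a cycle through a special edge, together with being explicit about the (assumed) correctness of $\textsf{BuildDepGraph}$ and $\textsf{FindSpecialSCC}$, so that the Boolean output of the algorithm is faithfully translated into the acyclicity condition of Definition~\ref{def:dwa}.
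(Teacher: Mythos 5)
Your proof is correct and follows essentially the same route as the paper, which establishes Lemma~\ref{lm:linear} by exactly this chain: the correctness of $\mathsf{DynSimplification}$ (Lemma~\ref{lm:dyn-simplification-algorithm}), Lemma~\ref{lm:avoid-simplification}, Lemma~\ref{lm:dyn-simplification}, and the characterizations of Theorems~\ref{the:characterization-simple-linear} and~\ref{the:characterization-linear}. The only difference is that you spell out the (correct) graph-theoretic equivalence between ``no special SCC'' and ``no cycle with a special edge,'' which the paper leaves implicit in the correctness of $\mathsf{FindSpecialSCC}$.
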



\section{Implementation Details}\label{sec:implementation}

We proceed to discuss the implementation details of the algorithms for checking the finiteness of the chase instance presented in Section~\ref{sec:alg}. In particular, we discuss the implementation choices that help to improve the performance of the algorithms, but are missing from the descriptions given in Section~\ref{sec:alg}.
In Sections~\ref{sec:bgarph}, \ref{sec:scc}, and \ref{sec:support} we discuss the procedures $\mathsf{BuildDepGraph}$, $\mathsf{FindSpecialSCC}$, and $\mathsf{Supports}$, respectively, used in Algorithm~\ref{alg:slinear}. The details of $\mathsf{DynSimplification}$ used in Algorithm~\ref{alg:dterm} are discussed in  Section~\ref{sec:d-simplification}.

\subsection{Build Dependency Graphs}\label{sec:bgarph}

The procedure $\mathsf{BuildDepGraph}$ takes as input a set $\dep$ of TGDs, and returns the dependency graph of $\dep$ in the form of an adjacency list. Recall that an adjacency list for a directed graph is a list of lists. Each list corresponds to a node $v$ of the graph, and the members in such a list represent the outgoing edges of $v$. Towards an implementation of such an adjacency list, we store a dependency graph as a list of {\em node objects}. Each node object represents a node in the graph, and has a list of {\em edge objects} representing the edges of the graph. For each edge object, we additionally store a binary value that specifies whether the corresponding edge is special or not. Although a singly linked list suffices for storing a dependency graph, we implement a dependency graph's adjacency list as a doubly linked list for performance purposes. By implementing a doubly linked list, each node object, in addition to a list of edge objects, has a list of reverse edge objects representing the edges in the opposite direction. These reverse edge objects enable traversing the graph in the opposite direction of the edges, which significantly helps when checking the support of positions in special SCCs, as we explain in Section~\ref{sec:support}. 
Now, given a set $\dep$ of simple-linear TGDs, $\mathsf{BuildDepGraph}$ iterates over all TGDs and constructs the dependency graph of $\dep$ by creating new elements for newly visited positions and linking them as dictated by the TGDs. To speed up the process of building dependency graphs, the procedure also uses an index structure that maps predicate positions to their corresponding elements in the adjacency list. This index allows for fast access to the elements (nodes) for adding new links (edges) while parsing new TGDs. Using the index structure, the procedure creates dependency graphs in linear time w.r.t. the size of the input set of TGDs.

\subsection{Find Special SCCs}\label{sec:scc}

To implement $\mathsf{FindSpecialSCC}$ we adapt the well-known {\em Tarjan's algorithm} for finding SCCs in directed graphs~\cite{tarjan1972depth}, which we briefly recall below.
Other algorithms for finding SCCs can be found in the literature (see, e.g., \cite{dijkstra1982finding,sharir1981strong}), some of which are simpler than Tarjan's algorithm such as Kosaraju's algorithm~\cite{sharir1981strong}. We nevertheless build on Tarjan's algorithm as it is more efficient in practice.

\medskip

\noindent 
\textbf{Tarjan's Algorithm.} Given a directed graph, Tarjan's algorithm constructs a {\em spanning forest}, that is, a set of trees that contains all the nodes of the graph, while each tree corresponds to an SCC. 
To find the trees, the algorithm runs a depth-first search starting from an arbitrary node in the graph and creates the trees by traversing the nodes in each tree and then moving to the next tree. 
During the search for nodes in a tree, the algorithm traverses two categories of edges: (i) edges that lead to new nodes, and (ii) edges that lead to already visited nodes. The edges of the first category are called {\em tree edges} as they add new nodes to the tree. The edges of the second category are classified into three subcategories: (ii.a) the edges that move from an ancestor in the tree to one of its descendants, which are ignored by the algorithm, (ii.b) the edges that move from a descendant to its ancestor, called {\em fronds}, and (ii.c) the edges that move from one subtree to another subtree in the same tree, called {\em cross-links}.
Tarjan's algorithm assigns numbers to the nodes in the order they are visited during the search (i.e., a node with a smaller number is visited first) and pushes them in a stack. This stack, which we call {\em SCC stack}, differs from the stack used to implement the depth-first search. A visited node is removed from the search stack during backtracking when all its child nodes are visited in the depth-first search. However, such a node remains in the SCC stack as long as there is a path from it to a node below it in the stack. This allows the SCC stack to keep track of the nodes in the current tree. 
The algorithm uses the assigned numbers to understand whether a node is a root node after visiting all its children. This is done by checking if the traversed edges, while visiting the descendant nodes, are fronds and cross-links. After finding a root, the algorithm creates an SCC by popping nodes from the top of the SCC stack and adding them to the SCC until it reaches the root of the current tree. The algorithm continues until all the nodes in the graph are visited and returns the obtained trees (i.e., the SCCs of the graph). 


\medskip
\noindent \textbf{The Procedure FindSpecialSCC.} It is a simple extension of Tarjan's algorithm for finding the special SCCs in a dependency graph. Such an extension is needed as we need a mechanism that allows us to check whether a SCC obtained by Tarjan's algorithm is indeed special. This is done by pushing a dummy token in the stack (the stack that stores the visited nodes in the current component) whenever the algorithm traverses a special edge. Note that the algorithm pushes this dummy token even if the special edge is ignored, as we explained in Targan's algorithm. After finding the root of the current SCC and popping the nodes to create the SCC, the algorithm labels the SCC as special if there is a dummy token between the popped nodes. Only the special SCCs are stored and returned.

\subsection{Check for Positions Support} \label{sec:support}

The procedure $\mathsf{Supports}$ consists of two steps: (1) query the database to find the positions of the extensional predicates, and (2) traverse the dependency graph starting from the positions in the special SCCs in the reverse order to reach the positions computed in the first step. 
Step (1) has been implemented via a single SQL query that returns the list of non-empty relations, which we then use to create the set of positions of the extensional predicates, denoted $P_D$. The SQL query has been implemented using the catalog of the DBMS that stores the database, which allows the query to find the set of relations names faster and without accessing the actual data.
For step (2), we start from the given set of positions $P$, i.e., the set of positions in the special SCCs, and traverse the graph in the reverse order using the reverse links in the adjacency list of the dependency graph, as explained in Section~\ref{sec:bgarph}. The procedure returns \true if the graph traversal in the second step reaches a node (i.e., a position) in $P_D$ from an extensional predicate; otherwise, it returns \false.

\subsection{Dynamic Simplification}\label{sec:d-simplification} 

The procedure $\mathsf{DynSmplification}$ takes as input a database $D$ and a set $\dep$ of linear TGDs, and returns the set $\dsimple{D}{\dep}$ of simple-linear TGDs. It is an iterative procedure that uses two sub-procedures: $\mathsf{FindShapes}$ that computes the set of shapes $S$ of the atoms of $D$, and $\mathsf{Applicable}$ that computes the simplified TGDs using the shapes of $S$. We discuss the implementation details of those two sub-procedures.

\medskip

\noindent 
\textbf{The Procedure $\mathsf{FindShapes}$.} 
We have two kinds of implementations for this procedure, that is, {\em in-memory} and {\em in-database}, which we discuss below. In Sections~\ref{sec:linear-ex} and~\ref{sec:rw-kb}, we conduct experiments comparing the performance of the two implementations of $\mathsf{FindShapes}$, and we discuss when one outperforms the other.

\medskip
\noindent \underline{\textbf{In-memory Implementation}}
\smallskip

\noindent For the in-memory implementation, we run an SQL query for each relation $R$ of $D$ to load all the tuples of $R$ into the main memory. We then construct the set of shapes of the atoms in each relation $R$ by iterating over its tuples $\bar c$, and generating the shape of each atom $R(\bar c)$. For relations that cannot be entirely loaded into the main memory, we split them into smaller relations processed separately.

\medskip
\noindent \underline{\textbf{In-database Implementation}}
\smallskip

\noindent The in-database implementation does not load the relations, but instead runs SQL queries to find the shapes of the atoms in each relation. In particular, we translate each possible shape to a Boolean query that evaluates to \true if the shape exists in the database. The query for a shape of a predicate $R$ is of the following general form:

\vspace{2mm}
\begin{Verbatim}[commandchars=\\\{\}]
  \textcolor{blue}{SELECT CASE WHEN EXISTS} 
    (\textcolor{blue}{SELECT} * \textcolor{blue}{FROM} \textcolor{violet}{R} \textcolor{blue}{WHERE} \textcolor{violet}{Equality\_Conditions} 
                     \textcolor{blue}{AND} \textcolor{violet}{Diseqality\_Conditions})
            \textcolor{blue}{THEN} 1 \textcolor{blue}{ELSE} 0 \textcolor{blue}{END} 
\end{Verbatim}
\vspace{2mm}

\noindent where the equality and disequality conditions are consistency checks according to the given shape. For example, the shape $R_{[1,1,2]}$ translates to the following SQL query $Q$:

\vspace{2mm}
\begin{Verbatim}[commandchars=\\\{\}]
  \textcolor{blue}{SELECT CASE WHEN EXISTS} 
        (\textcolor{blue}{SELECT} * \textcolor{blue}{FROM} \textcolor{violet}{R} \textcolor{blue}{WHERE} \textcolor{violet}{a1}=\textcolor{violet}{a2} \underline{\textcolor{blue}{AND} \textcolor{violet}{a2}!=\textcolor{violet}{a3}})
            \textcolor{blue}{THEN} 1 \textcolor{blue}{ELSE} 0 \textcolor{blue}{END} 
\end{Verbatim}
\vspace{2mm}

\noindent where we assume that the predicate $R$ comes with the attributes \textcolor{violet}{\texttt{a1}}, \textcolor{violet}{\texttt{a2}}, and \textcolor{violet}{\texttt{a3}}.
Of course, running an SQL query per shape results in many queries for predicates with high arity. However, depending on the database $D$, many of these queries may be unnecessary since do not find any shapes. To avoid running some of those queries, we use the Apriori algorithm's idea to find association rules~\cite{agrawal1994fast}. We start by running queries for checking the existence of more general shapes (e.g., $R_{(1,1,2)}$) before we check more specific shapes (e.g., $R_{(1,1,1)}$). For each shape, we run a pair of queries. The first query is a relaxed version of the general query explained above without the disequality conditions. For example, the first query $Q'$ for $R_{(1,1,2)}$ is the query $Q$ above without the underlined condition. We only continue to run $Q$ if $Q'$ evaluates to \true. Additionally, we do not run the query for more specific shapes, e.g., $R_{(1,1,1)}$, if $Q'$ evaluates to \false. This allows us to avoid the execution of many queries for specific shapes by running a single query for more general shapes.

\medskip

\noindent 
\textbf{The Procedure $\mathsf{Applicable}$.} Recall that $\mathsf{Applicable}$ takes as input a set $S$ of shapes and a set $\dep$ of linear TGDs, and returns the set of TGDs of $\simple{\dep}$ such that the predicate of their body belongs to $S$. As explained in Section~\ref{sec:alg}, this is done by iterating over all TGDs $\sigma \in \dep$ of the form $R(\bar x) \ra \exists \bar z\, \psi(\bar y,\bar z)$ and homomorphisms $h$ from $\{R(\bar x)\}$ to $\mi{DB}[S]$, and collecting the simplification of $\sigma$ induced by the $h$-specialization of $\bar x$. Applying the above iterative process with a large set of shapes and a large set of TGDs can be very costly. Thus, to improve the performance of this process, the implementation uses an index structure that enables fast access to the TGDs. The index structure maps each predicate $R \in \sch{\dep}$ to the set of TGDs of $\dep$ that their body-atom uses $R$. This allows the procedure to iterate over the current shapes in each iteration and quickly access the relevant TGDs in the index.
However, checking whether a relevant TGD is applicable remains costly. This task requires checking the shapes of the body-atoms in all the simplified TGDs obtained from the TGD with the current shape. To facilitate this, we generate and store the shape of the body-atom of each TGD. Additionally, we store an array of strings representing all possible identifiers of tuples up to the maximum arity of the schema that allows us to quickly find the shapes of the body-atoms in simplified TGDs.

\section{Experimental Infrastructure}\label{sec:setting}

Our goal is to experimentally evaluate the behaviour of the termination algorithms presented in Section~\ref{sec:implementation} with the aim of clarifying whether they can be applied in a practical context, and if not, reveal their limitations. To this end, we are going to conduct extensive experiments with synthetic data and sets of TGDs. Therefore, we need a way to generate databases and sets of TGDs that are suitable for such an experimental evaluation. We proceed to discuss our tools for generating data (Section~\ref{sec:synth-data}) and sets of TGDs (Section~\ref{sec:synth-rule}). 
Let us clarify that in the rest of the paper, whenever we say that we randomly select an element from a certain space of elements, we actually mean that we select such an element uniformly at random.
%


\subsection{Data Generator}\label{sec:synth-data}


There are freely available data generators such as TPC-H\footnote{\href{https://www.tpc.org/tpch/}{https://www.tpc.org/tpch/}} and DataFiller\footnote{\href{https://github.com/memsql/datafiller}{https://github.com/memsql/datafiller}}. However, none of the existing tools is suitable for our purposes. To effectively evaluate the dynamic simplification procedure presented above, which is a key component of the termination algorithm for linear TGDs, we need to make sure that the generated database contains a variety of shapes. This is precisely the limitation of the existing data generators as they do not allow us to control the shape of the generated atoms. Hence, we had to implement our own data generator that overcomes the above limitation.

Our data generator has tuning parameters that allow us to determine key properties of the generated database $D$: the number of predicates in $D$, the minimum and maximum arity of those predicate, the size of the database domain (i.e., the number of values in $\adom{D}$), and the number of tuples in each relation of $D$.
In particular, the generator takes as input a tuple of integer values for the tuning parameters $(\mi{preds},\mi{min},\mi{max},\mi{dsize},\mi{rsize})$, and constructs a database $D$ such that, with $\ins{S} = \{R \mid R(\bar c) \in D\}$, $|\ins{S}| = \mi{preds}$, the predicates of $\ins{S}$ have arity between $\mi{min}$ and $\mi{max}$, $|\adom{D}| = \mi{dsize}$, and, for each $R \in \ins{S}$, $|\{\bar c \mid R(\bar c) \in D\}| = \mi{rsize}$.
To this end, it first generates a set $\ins{S}$ consisting of $\mi{preds}$ different predicates, and it randomly selects an arity for each such predicate from the range $[\mi{min},\mi{max}]$. It then generates a database by adding $\mi{rsize}$ tuples to each predicate of $\ins{S}$ that are formed using $\mi{dsize}$ different constant values. Now, to ensure that the obtained database contains different shapes, each tuple is generated by randomly selecting a shape and filling the positions by randomly picking values from the database domain of size $\mi{dsize}$ without repetition, that is, a shape determines how many times the same value is repeated in a tuple.

%

\subsection{TGD Generator}\label{sec:synth-rule}

As for the data generator, existing TGD generators (see, for example,~\cite{arocena2015ibench,benedikt2017benchmarking}) are not suitable for our purposes since they do not allow us to control the shape of the atoms occurring in the bodies of the generated TGDs, which is crucial for generating sets of TGDs that are suitable for our experiments. Thus, we had to implement our own TGD generator that supports this key feature.

Our TGD generator has tuning parameters that allows us to determine key properties of the generated set $\dep$ of TGDs: the size of the schema (that is, the size of $\sch{\dep}$), the minimum and maximum arity of the predicates of $\sch{\dep}$, the number of TGDs in $\dep$, and the underlying class of $\dep$ (that is, whether $\dep$ is a set of simple-linear or just linear TGDs). 
In particular, the TGD generator takes as input a set $\ins{S}$ of predicates and a tuple of values for the tuning parameters $(\mi{ssize},\mi{min},\mi{max},\mi{tsize},\mi{tclass})$, and constructs a set $\dep$ of TGDs such that $\sch{\dep} \subseteq \ins{S}$, $|\sch{\dep}| = \mi{ssize}$, the predicates of $\sch{\dep}$ have arity between $\mi{min}$ and $\mi{max}$, $|\dep| = \mi{tsize}$, and $\dep$ falls in the class $\mi{tclass}$.
To this end, it first chooses a subset $\ins{S}'$ of $\ins{S}$ such that $|\ins{S}'| = \mi{ssize}$ and its predicates have arity between $\mi{min}$ and $\mi{max}$, and then generates the desired set of simple-linear or linear TGDs using all the predicates of $\ins{S}'$; the actual generation is described below.

Let us stress that in our experiments we consider only single-head TGDs, that is, TGDs with only one head-atom, despite the fact that our termination algorithms work with multi-head TGDs, that is, TGDs that have several atoms in their heads.
This is because the number of atoms occurring in the head of a TGD is typically negligible compared to the number of TGDs, and having several atoms in the heads of TGDs does not affect the number of shapes occurring in the database.
As we shall see in Sections~\ref{sec:slinear-ex} and~\ref{sec:linear-ex}, the number of TGDs and the number of database shapes are the main parameters impacting the runtime of the algorithms, and thus, our experimental evaluation provides conclusive results even if we consider single-head TGDs. Hence, for clarity, our TGD generator described below is designed to generate single-head TGDs.



\medskip
\noindent \underline{\textbf{Simple-Linear TGDs}}
\smallskip

\noindent To generate a simple-linear TGD, the generator randomly selects two predicates from $\ins{S}'$ that will be used for forming the body- and the head-atom, respectively. The random selection is with repetition to allow the same predicate to appear in both the body and the head of the TGD. Since we target a simple-linear TGD, we use different variables to fill the positions in the body-atom. Now, for the head-atom, we fill each position with either an existentially quantified variable, or a universally quantified variable that has been already used in the body-atom. In particular, for each position $\pi$ of the head-atom, the generator classifies $\pi$ as an existential position with probability $10\%$. In this case, $\pi$ is filled with a fresh variable that does not appear in the body-atom; otherwise, it is filled with a randomly selected variable from the body-atom.

\medskip
\noindent \underline{\textbf{Linear TGDs}}
\smallskip

\noindent Generating linear TGDs is done in the same way as for simple-linear TGDs with the crucial difference that, after randomly selecting the predicates of the body- and the head-atom, the generator randomly chooses a shape for the body-atom and then applies similar steps as for simple-linear TGDs to fill the positions of the body- and the head-atom. Note that the selection of the body-variables is guided by the chosen shape, which in turn allows for the repetition of variables in the body-atom of the generated linear TGD.

\medskip

We are now ready to proceed with our experimental evaluation. Note that for the experiments we used a server with an Intel Core i5 3.00GHz CPU and 16GB RAM, all the databases in our experiments are stored in a PostgreSQL 11.5 instance, and the termination algorithms in question have been implemented in Java SE 11.






\section{Evaluation for Simple-Linear TGDS}\label{sec:slinear-ex}

We start with the experimental evaluation of $\mathsf{IsChaseFinite[SL]}$, which is depicted in Algorithm~\ref{alg:slinear}. Towards a refined analysis, we are going to break down its end-to-end runtime, which we denote by \ttotal, into the following three time parameters:

\begin{itemize}
    \item \tparse: time to parse the TGDs from an input file,
    \item \tgraph: time to build the dependency graph $G$ of the input set of TGDs, and
    \item \tcomp: time to find the special SCCs in the graph $G$.
\end{itemize}

\noindent In the rest of the section, we explain how we generate the sets of simple-linear TGDs that are used in our experimental evaluation, and then present our experimental results and discuss the take-home messages. But let us first give a couple of clarification remarks.

\medskip

\noindent 
\textbf{Remark 1.} In our analysis, we neglect the time taken by the procedure $\mathsf{Supports}$, which, as explained in Section~\ref{sec:support}, consists of two steps: (1) find the predicates occurring in the input database, and (2) traverse the dependency graph starting from the positions in the special SCCs in the reverse order to reach the positions of the relations computed in the first step.
Step (1) is performed by running a fast query on the catalog of the DBMS storing the input database, and can be safely ignored as it does not impact the rest of the algorithm. 
Concerning step (2), the time to traverse the dependency graph is negligible compared to the time needed to find the special SCCs, which is already in the order of milliseconds.
Summing up, the procedure $\mathsf{Supports}$ takes insignificant time compared to the rest of the algorithm, which we can simply ignore without affecting our analysis for $\mathsf{IsChaseFinite[SL]}$. Therefore, in our experiments, we assume that all the predicates used by the set of simple-linear TGDs occur in the database, and thus, all the positions in the special SCCs are trivially supported. 
This in turn simplifies our experiments as they can be conducted using a very simple database that can be induced by the set $\dep$ of simple-linear TGDs, denoted $D_\dep$, without using our data generator discussed above. In fact, $D_\dep$ has an atom $R(c_1,\ldots,c_n)$, where $c_1,\ldots,c_n$ are distinct constants, for each predicate $R \in \sch{\dep}$.

\medskip

\noindent 
\textbf{Remark 2.} The parsing time (\tparse) clearly depends on the parser's performance, and is not so crucial for evaluating the performance of $\mathsf{IsChaseFinite[SL]}$. However, we include it in our analysis in order to have an accurate figure for the end-to-end runtime of the algorithm, and also compare it with the other two time parameters.

\ignore{\begin{table}[h]
\centering
\begin{tabular}{lc}
\toprule
Parameter & Description \\
\midrule
\ttotal & Total end-to-end runtime
\\
\tparse & Time for parsing a set of tgds \\
\tgraph & Time for building a dependency graph \\
\tcomp & Time for finding special SCCs \\
\nrules &  The number of tgds in a set of tgds\\
\nshape & Time for parsing a set of tgds 
\bottomrule
\end{tabular}
\vspace{0.3cm}
\caption{Evaluation parameters}
\vspace{-3mm}
\label{tab:params}
\end{table}}

\subsection{Generating Simple-Linear TGDs}

We now discuss how the sets of simple-linear TGDs used in our experiments are generated.
To systematically generate a representative family of sets of TGDs, without favouring any of the two key parameters, namely the size of the underlying schema and the number of TGDs, we consider three {\em predicate profiles} consisting of sets of TGDs that mention [5,200], [200,400], and [400,600] predicates of arity between 1 and 5, and we further consider three {\em TGD profiles} consisting of sets of TGDs with [1,333K], [333K,666K], and [666K,1M] TGDs. Note that our choice to fix the arity of the predicates between 1 and 5 is consistent with what we observe in real-life scenarios, where the arity is typically small.
The combination of those predicate and TGD profiles gives rise to nine {\em combined profiles} consisting of sets of TGDs with similar syntactic properties. 
For example, the combined profile obtained from the predicate profile [200,400] and the TGD profile [333K,666K] consists of sets of TGDs $\dep$ such that $200 \leq \sch{\dep} \leq 400$, each predicate of $\sch{\dep}$ has arity between 1 and 5, and $333{\rm K} \leq |\dep| \leq 666{\rm K}$.
For our experiments, we generated 100 sets of TGDs for each of the nine combined profiles, totalling 900 sets of simple-linear TGDs. This was done as follows.
We have first constructed the underlying schema $\ins{S}$ by generating 1000 predicates, while their arities were randomly selected from the range [1,5]. Then, for the combined profile induced by the predicate profile $[x,y]$ and the TGD profile $[z,w]$, we have generated 100 sets of simple-linear TGDs by repeatedly executing our TGD generator with input the schema $\ins{S}$ and the tuple of values for the tuning parameters $(\mi{ssize},1,5,\mi{tsize},\mathsf{SL})$, where $\mi{ssize}$ and $\mi{tsize}$ were randomly chosen from the range of values $[x,y]$ and $[z,w]$, respectively.




\begin{figure}[t]
	\centering
	\begin{subfigure}[b]{0.23\textwidth}
		\centering
		\includegraphics[width=\textwidth]{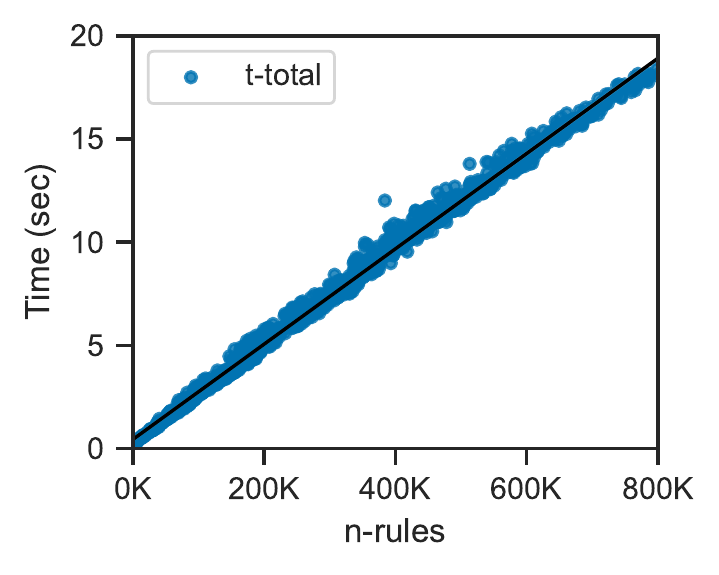}
		\caption{\ttotal}
		\label{fig:slinear-ttotal}
	\end{subfigure}
	\hfill
	\medskip
	\begin{subfigure}[b]{0.23\textwidth}
		\centering
		\includegraphics[width=\textwidth]{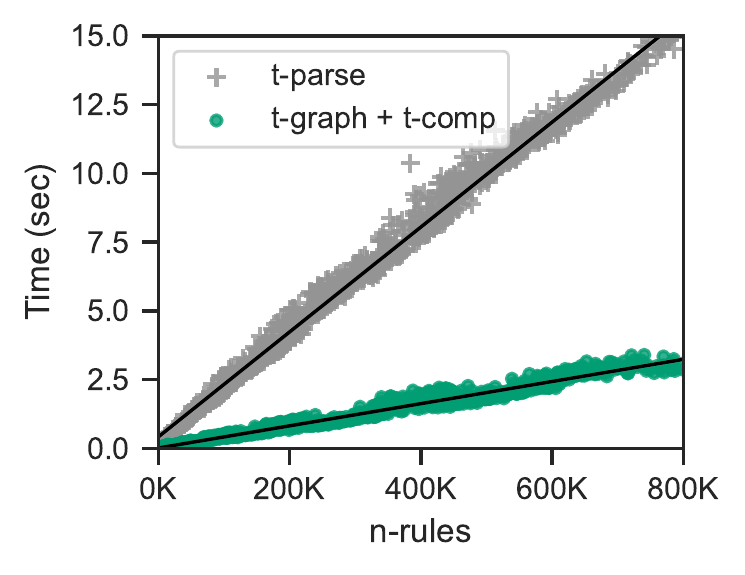}
		\caption{\tparse vs. \tgraph+\tcomp}
		\label{fig:slinear-parse-graphcomponent}
	\end{subfigure}
	\hfill
	\begin{subfigure}[b]{0.23\textwidth}
		\centering
		\includegraphics[width=\textwidth]{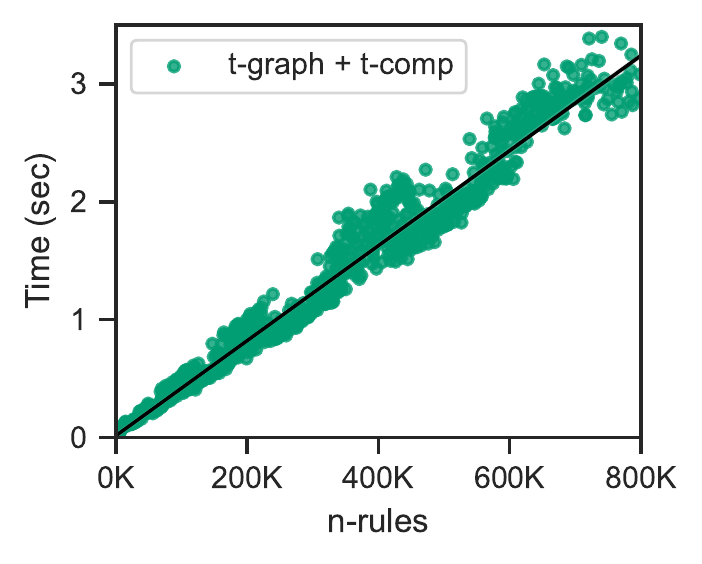}
		\caption{\tgraph + \tcomp}
		\label{fig:slinear-graphcomponent}
	\end{subfigure}
	\hfill
	\begin{subfigure}[b]{0.23\textwidth}
		\centering
		\includegraphics[width=\textwidth]{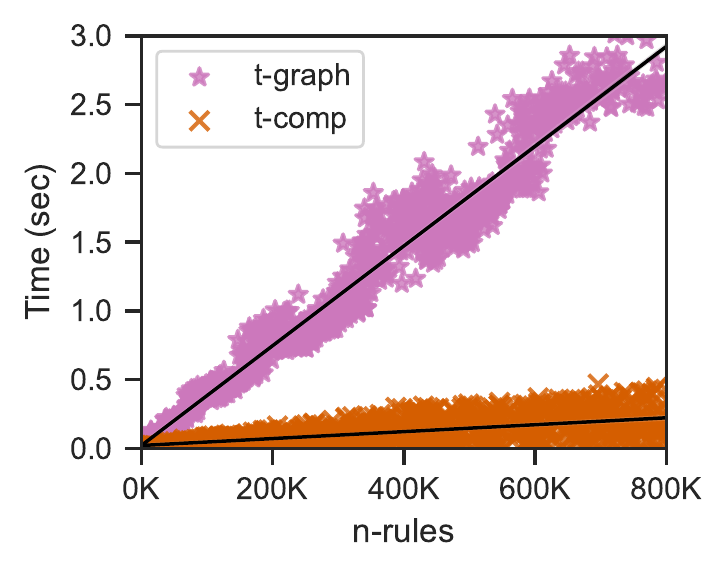}
		\caption{\tgraph vs. \tcomp}
		\label{fig:slinear-graph-component}
	\end{subfigure}
	\caption{Runtime of $\mathsf{IsChaseFinite[SL]}$.}
	\label{fig:slinear-time}
\end{figure}

\subsection{Experimental Evaluation}



The algorithm $\mathsf{IsChaseFinite[SL]}$ was run for each one of the 900 sets of TGDs of the combined profiles discussed above. Recall that the input database is induced by the input set of TGDs, i.e., for the set of TGDs $\dep$, the database is $D_\dep$.
The scatter plots in Figure~\ref{fig:slinear-time} show the runtime of $\mathsf{IsChaseFinite[SL]}(D_\dep,\dep)$, for each set $\dep$ of TGDs from the combined profiles. In particular, each point in the plots corresponds to one of the 900 sets of TGDs. 
Figure~\ref{fig:slinear-ttotal} shows the total runtime (\ttotal) for sets of TGDs with various sizes (\nrule). Figure~\ref{fig:slinear-parse-graphcomponent} breaks down \ttotal into the time to parse the TGDs (\tparse) and the time to build their dependency graph and find the special SCCs (\tgraph + \tcomp). Figure~\ref{fig:slinear-graphcomponent} zooms in \tgraph + \tcomp, which are shown separately in Figure~\ref{fig:slinear-graph-component}.

It is evident from the above scatter plots that the time parameters \tparse and \tgraph increase linearly as long as we increase \nrule, whereas \tcomp increases very slowly.
Let us remark that we have not observed such a linear relationship (in fact, we have not observed any correlation) between the time parameters \tparse and \tgraph, and the number of predicates of the underlying schema.
The linear relationship between \tparse and \nrule is because parsing each TGD takes constant time since the arity of the predicates falls in the limited range [1,5], and each TGD has one atom in its body and one atom in its head. Note that allowing multi-heads will not change this since, as discussed in Section~\ref{sec:setting}, the number of head-atoms is negligible compared to the number of TGDs.
The linear relationship with \tgraph (as shown in Figure~\ref{fig:slinear-graph-component}) is because the algorithm iterates over the TGDs and spends constant time to process each TGD and update the graph by adding new nodes and edges. Again, since the arity of the predicates falls in [1,5], and each TGD has one atom in its body and one atom in its head, the number of nodes and edges added in the dependency graph due to a certain TGD is in a small fixed range, and thus, the time to update the graph w.r.t. each TGD is constant.
The fact that \tcomp increases very slowly is because finding the special SCCs solely depends on the dependency graph, which is in general much smaller than the set of TGDs, while Tarjan's algorithm is quite efficient that runs in linear time in the size of the underlying graph.


\subsection{Take-home Messages} \label{sec:discussion-simple}

The main takeaway from the experimental results for simple-linear TGDs is that the primary parameter impacting the runtime of $\mathsf{IsChaseFinite[SL]}$ is the number of TGDs (\nrule), and we have also observed that the algorithm is very fast even for extremely large sets of TGDs. In fact, most of the end-to-end runtime is spent on parsing (\tparse) and building the dependency graph (\tgraph), whereas the time to the find special SCCs (\tcomp) is insignificant compared to \tparse and \tgraph. To be more precise, \tparse is much larger than \tgraph, and it actually takes most of the total end-to-end runtime of the algorithm. This illustrates the effectiveness of $\mathsf{IsChaseFinite[SL]}$ as the actual check for the finiteness of the chase instance for large sets of TGDs is much faster than even reading and parsing the TGDs from the input file.


\section{Evaluation for Linear TGDs} \label{sec:linear-ex}

We now proceed with the evaluation of $\mathsf{IsChaseFinite[L]}$, depicted in Algorithm~\ref{alg:dterm}. Differently from $\mathsf{IsChaseFinite[SL]}$, where the input database did not play any crucial role, we now have a component that heavily relies on the database, that is, the procedure that computes the database shapes, which is part of dynamic simplification.
In other words, we have the {\em database-dependent component} of $\mathsf{IsChaseFinite[L]}$, that is, find the database shapes, and the {\em database-independent component}, that is, simplify the given set of linear TGDs by using the database shapes, build the dependency graph of the simplified set of TGDs, and find the special SCCs in this graph.
We claim that these two components, which from now on we call db-dependent and db-independent, respectively, should be evaluated separately as their runtime is impacted by different parameters.
Concerning the db-dependent component, it is obvious that it is only affected by the database, whereas the set of TGDs plays no role. 
On the other hand, although it is clear that the db-independent component is affected by the set of TGDs, it is not straightforward to see that it is not affected by the input database since it operates on a dynamically simplified set of TGDs. Interestingly, we experimentally confirm below that this is indeed the case.
Consequently, towards a refined analysis of the algorithm $\mathsf{IsChaseFinite[L]}$, we are going to consider the following four time parameters:
\begin{itemize}
	\item \tshape: time to find the database shapes,
	\item \tparse: time to parse the TGDs from an input file,
	\item \tgraph: time to build the dependency graph $G$ of the simplified version of the input set of TGDs (including the time for the simplification using the database shapes), and
	\item \tcomp: time to find the special SCCs in the graph $G$.
\end{itemize}
Clearly, \tshape refers to the runtime of the db-dependent component, whereas \tparse + \tgraph + \tcomp, which we denote by \ttotal, refers to the end-to-end runtime of the db-independent component.
In the rest of the section, we explain how we generate the databases and the sets of linear TGDs that are used in our experimental evaluation, confirm that the db-independent component is not affected by the input database, and then present our experimental results for the two components of $\mathsf{IsChaseFinite[L]}$ and discuss the take-home messages.
Note that, as stated in Remark 2 in Section~\ref{sec:slinear-ex}, although the parsing time is not crucial for evaluating the performance of the db-independent component, we consider it in order to have an accurate figure for the end-to-end runtime, and also compare it with \tgraph and \tcomp.

%

\ignore{\begin{figure*}
     \centering
     \begin{subfigure}[b]{0.3\textwidth}
         \centering
         \includegraphics[width=\textwidth]{newfigs/shapes-time-5.pdf}
         \caption{Predicate profile [5,200]}
         \label{fig:test_computers_sub}
     \end{subfigure}
     \hfill
     \begin{subfigure}[b]{0.3\textwidth}
         \centering
         \includegraphics[width=\textwidth]{newfigs/shapes-time-200.pdf}
         \caption{Predicate profile [200,400]}
         \label{fig:valid_computers_sub}
     \end{subfigure}
     \hfill
     \begin{subfigure}[b]{0.3\textwidth}
         \centering
         \includegraphics[width=\textwidth]{newfigs/shapes-time-400.pdf}
         \caption{Predicate profile [400,600]}
         \label{fig:train_computers_sub}
     \end{subfigure}
        \caption{The time to generate shapes for databases with varying sizes and different predicate profiles}
        \label{fig:computers_sub graphs}
\end{figure*}

\begin{figure*}
     \centering
     \begin{subfigure}[b]{0.3\textwidth}
         \centering
         \includegraphics[width=0.9\textwidth]{newfigs/shapes-5.pdf}
         \caption{Predicate profile [5,200]}
         \label{fig:test_computers_sub}
     \end{subfigure}
     \hfill
     \begin{subfigure}[b]{0.3\textwidth}
         \centering
         \includegraphics[width=0.88\textwidth]{newfigs/shapes-time-200.pdf}
         \caption{Predicate profile [200,400]}
         \label{fig:valid_computers_sub}
     \end{subfigure}
     \hfill
     \begin{subfigure}[b]{0.3\textwidth}
         \centering
         \includegraphics[width=0.85\textwidth]{newfigs/shapes-time-400.pdf}
         \caption{Predicate profile [400,600]}
         \label{fig:train_computers_sub}
     \end{subfigure}
        \caption{The number of shapes for databases with varying sizes and different predicate profiles}
        \label{fig:computers_sub graphs}
\end{figure*}}


\ignore{\begin{figure}[H]
    \includegraphics[width=0.3\textwidth]{figures/db-time-graph-processing.png}
    \caption{The time spent on building and processing the dependency graph (t-graph + t-comp) for the database with various sizes}
    \label{fig:test}
\end{figure}}






\subsection{Generating Databases and Linear TGDs}


The goal is to generate a family of pairs of the form $(D,\dep)$, where $D$ is a database and $\dep$ a set of linear TGDs, that will serve as the input to $\mathsf{IsChaseFinite[L]}$ during the experimental evaluation.
To this end, we first constructed a very large database, which we dub $D^\star$, by using our data generator. In particular, we called the data generator with input $(1000,1,5,500{\rm K},500{\rm K})$, and obtained $D^\star$ that mentions 1000 predicates of arity between 1 and 5, and each such predicate has 500K tuples, resulting in a very large database with 500M tuples in total.
We further devised views over $D^\star$ that allow us to define on-demand virtual databases with 1K, 50K, 100K, 250K, and 500K tuples per predicate, resulting in databases with 1M, 50M, 100M, 250M, and 500M tuples in total, respectively. Those views are actually implemented as a simple SQL query that simply keeps the first 1K, 50K, 100K, 250K, and 500K tuples per predicate, respectively. Note that the tuples in $D^\star$ are lexicographically sorted, which means that the different shapes in a relation of $D^\star$ are evenly distributed. This is turn ensures that the virtual databases defined via the views have a variety of shapes, which is crucial for our purposes.
Having the database $D^\star$ and the database views in place the desired family of pairs was generated as described below.

For each one of the nine combined profiles used in the generation of simple-linear TGDs in Section~\ref{sec:slinear-ex}, we generated 5 sets of linear TGDs, totalling 45 sets. In particular, for the combined profile induced by the predicate profile $[x,y]$ and the TGD profile $[z,w]$, we have generated 5 sets of linear TGDs by repeatedly executing our TGD generator with input the schema $\{R \mid R(\bar c) \in D^\star\}$, i.e., the 1000 predicates occurring in $D^\star$, and the tuple of values for the tuning parameters $(\mi{ssize},1,5,\mi{tsize},\mathsf{L})$, where $\mi{ssize}$ and $\mi{tsize}$ were randomly chosen from the range of values $[x,y]$ and $[z,w]$, respectively. Let $\dep^\star$ be the family that collects the 45 generated sets of linear TGDs.
Then, for each set $\dep \in \dep^\star$ of linear TGDs, by exploiting the database views discussed above, we obtained five virtual databases of varying size (1K, 50K, 100K, 250K, and 500K tuples per predicate), denoted $D_{\dep}^{1}$, $D_{\dep}^{50}$, $D_{\dep}^{100}$, $D_{\dep}^{250}$, and $D_{\dep}^{500}$, respectively, leading to five pairs.
Summing up, we generated the family
\[
\left\{\left(D_{\dep}^{s},\dep\right) \mid \dep \in \dep^\star \text{ and } s \in \{1,50,100,250,500\}\right\}
\]
consisting of 225 pairs that will serve as the input to $\mathsf{IsChaseFinite[L]}$.

\begin{figure*}[h!]
	\begin{minipage}{.74\textwidth}
	\centering
	\begin{subfigure}[b]{0.3\textwidth}
		\centering
		\includegraphics[width=\textwidth]{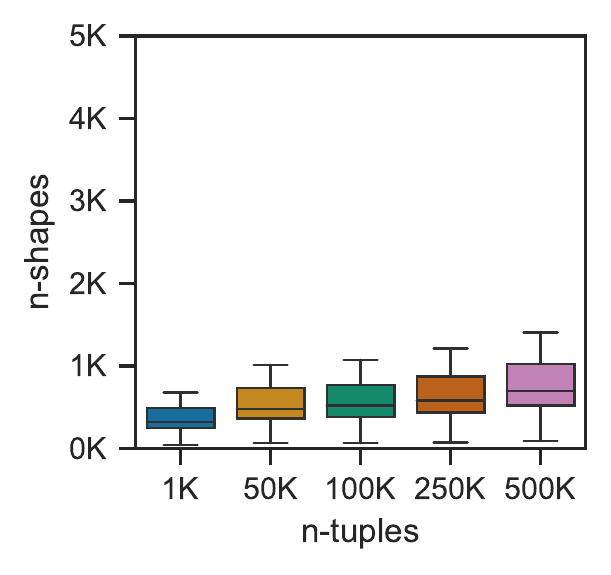}
		\vspace{-5mm}
		\caption{[5,200]}
		\label{fig:shape-5}
	\end{subfigure}
	\hspace{0.5cm}
	\begin{subfigure}[b]{0.3\textwidth}
		\centering
		\includegraphics[width=\textwidth]{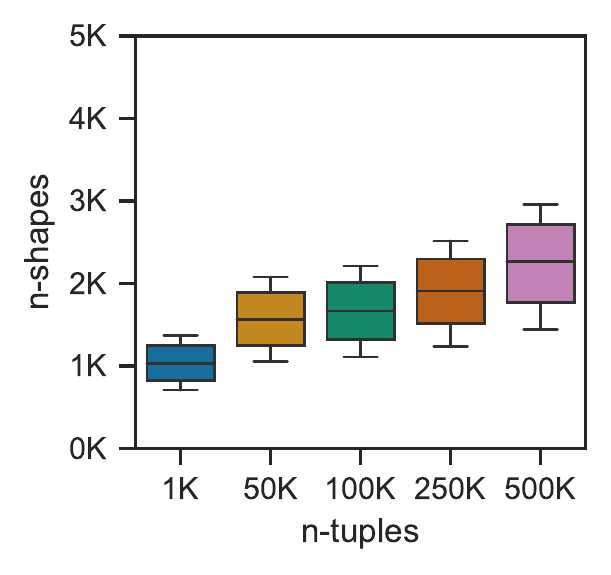}
		\vspace{-5mm}
		\caption{[200,400]}
		\label{fig:shape-200}
	\end{subfigure}
	\hspace{0.1cm}
	\begin{subfigure}[b]{0.3\textwidth}
		\centering
		\includegraphics[width=\textwidth]{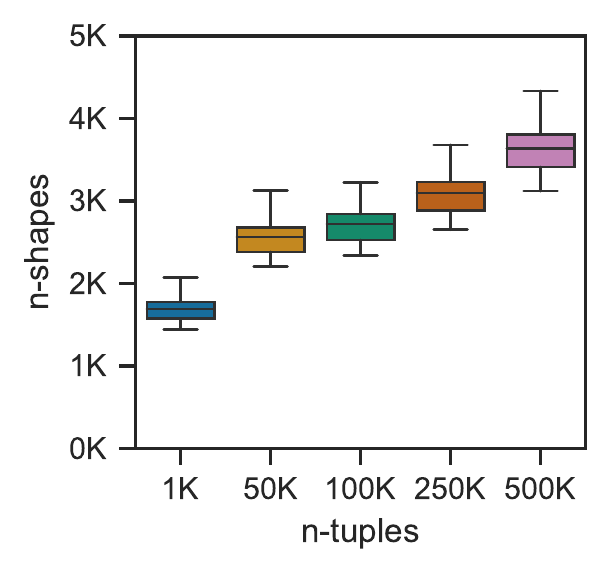}
		\vspace{-5mm}
		\caption{[400,600]}
		\label{fig:shape-400}
	\end{subfigure}
	\vspace{-3mm}
	\caption{Number of Shapes.}
	\label{fig:nshapes}
	\end{minipage}
	\begin{minipage}{.74\textwidth}
		\centering
		\begin{subfigure}[b]{0.3\textwidth}
			\centering
			\includegraphics[width=\textwidth]{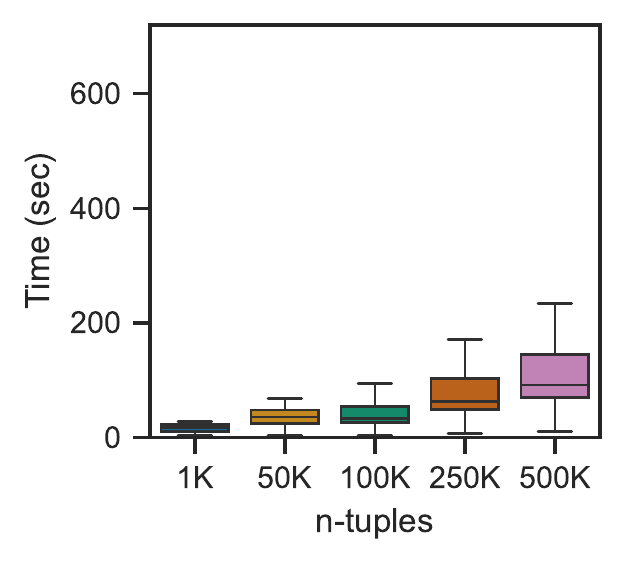}
			\vspace{-5mm}
			\caption{[5,200]}
			\label{fig:shape-time-5}
		\end{subfigure}
		\hspace{0.5cm}
		\begin{subfigure}[b]{0.3\textwidth}
			\centering
			\includegraphics[width=\textwidth]{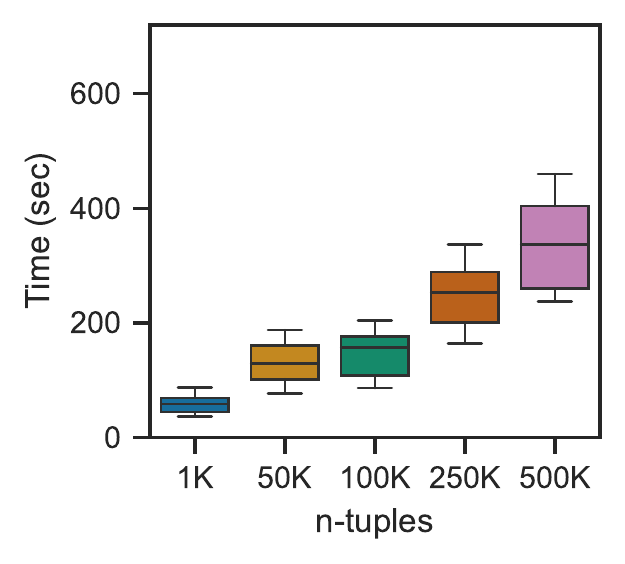}
			\vspace{-5mm}
			\caption{[200,400]}
			\label{fig:shape-time-200}
		\end{subfigure}
		\hspace{0.1cm}
		\begin{subfigure}[b]{0.3\textwidth}
			\centering
			\includegraphics[width=\textwidth]{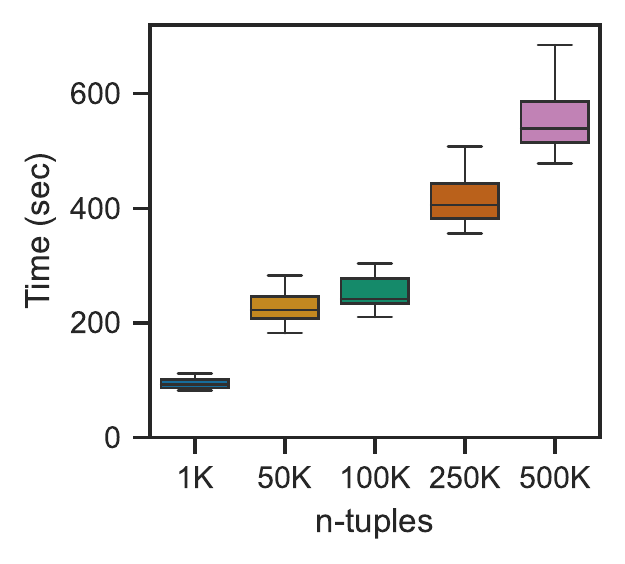}
			\vspace{-5mm}
			\caption{[400,600]}
			\label{fig:shape-time-400}
		\end{subfigure}
		\vspace{-3mm}
		\caption{Runtime of $\mathsf{FindShapes}$ (in-memory implementation).}
		\label{fig:tshapes}
	\end{minipage}
	\begin{minipage}{.74\textwidth}
		\centering
		\begin{subfigure}[b]{0.3\textwidth}
			\centering
			\includegraphics[width=\textwidth]{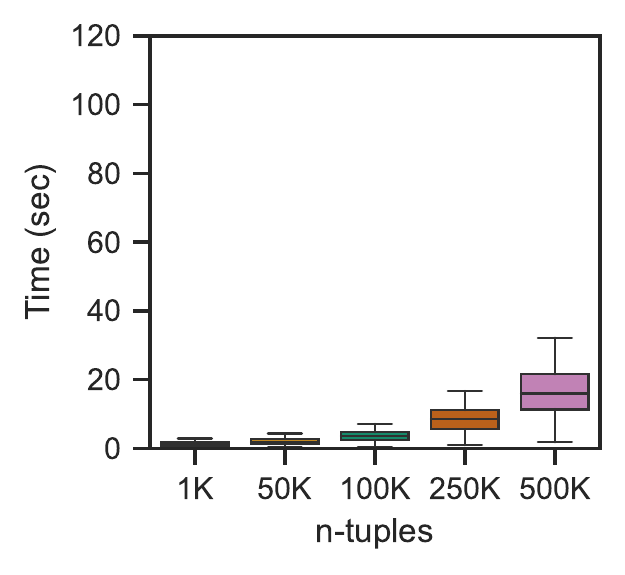}
			\vspace{-5mm}
			\caption{[5,200]}
			\label{fig:shape-time-5-q}
		\end{subfigure}
		\hspace{0.5cm}
		\begin{subfigure}[b]{0.3\textwidth}
			\centering
			\includegraphics[width=\textwidth]{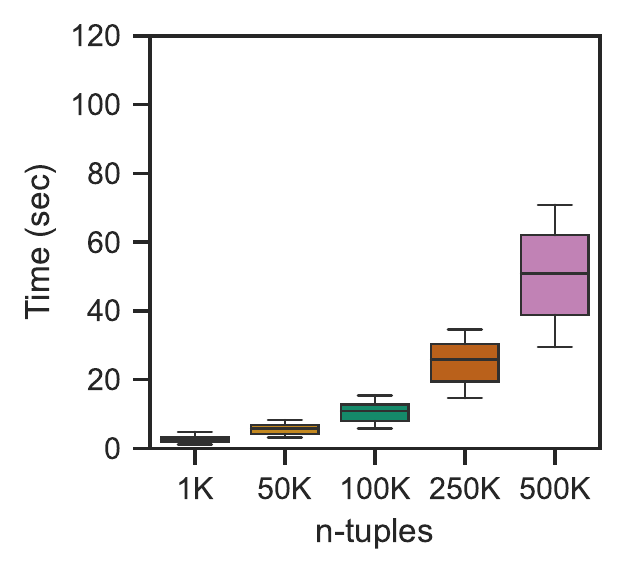}
			\vspace{-5mm}
			\caption{[200,400]}
			\label{fig:shape-time-200-q}
		\end{subfigure}
		\hspace{0.1cm}
		\begin{subfigure}[b]{0.3\textwidth}
			\centering
			\includegraphics[width=\textwidth]{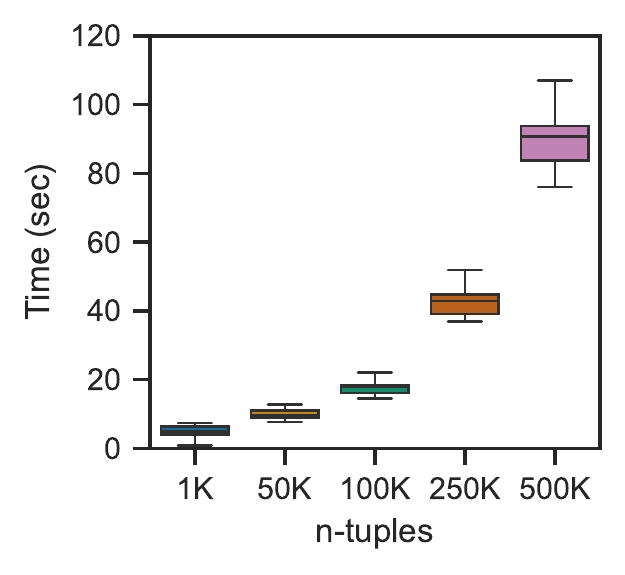}
			\vspace{-5mm}
			\caption{[400,600]}
			\label{fig:shape-time-400-q}
		\end{subfigure}
		\vspace{-3mm}
		\caption{Runtime of $\mathsf{FindShapes}$ (in-database implementation).}
		\label{fig:tshapes-q}
	\end{minipage}
\end{figure*}

\subsection{Experimental Evaluation}

Before delving into the evaluation of the two components of the algorithm $\mathsf{IsChaseFinite[L]}$, let us first confirm that indeed the db-independent component is not affected by the input database, which in turn justifies our decision to separately evaluate the two components as their runtime is impacted by different parameters.

\medskip

\noindent 
\textbf{Separate the Two Components.} The figure below depicts the average time over all generated pairs, consisting of a database $D_{\dep}^{s}$ of a certain size $s \in \{1,50,100,250,500\}$ and a set $\dep$ of linear TGDs, for building the dependency graph of the dynamically simplified version of $\dep$ using the shapes of $D_{\dep}^{s}$ and finding the special SCCs:


\centerline{\includegraphics[width=.23\textwidth]{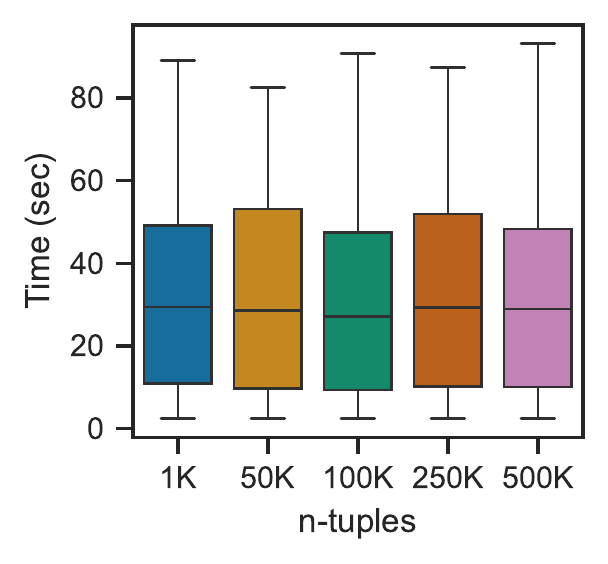}}


\noindent Interestingly, it confirms that the database size does not impact the time to build the dependency graph and find the special SCCs; thus, it does not impact the end-to-end runtime of the db-independent component, as claimed above.
This can be explained by the fact that the number of shapes in a database increases very slowly as we increase the size of the database; this is illustrated in Figure~\ref{fig:nshapes}. In particular, the bar plots in Figure~\ref{fig:nshapes} show the average number of shapes over all databases $D_{\dep}^{s}$ of a certain size $s$, where each plot corresponds to a certain predicate profile $[x,y]$, i.e., $\dep$ falls in the predicate profile $[x,y]$. It is clear that the number of shapes increases as we increase the size of the database, which is somehow expected as an increase in the database size leads to more tuples that are likely to induce more unique shapes. The interesting outcome, however, is the fact that this increase is very slow, which should be attributed to the fact that many tuples are likely to induce the same shape. 
Moreover, a new shape gives rise to only a few simplified TGDs, and it does not significantly affect the time for building and processing the dependency graph.

Let us finally observe that, by comparing the three bar plots, it is clear that the number of predicates, reflected in the predicate profile, impacts the number of shapes, which is rather expected as with more predicates there would be more shapes. This means that the number of predicates of the underlying schema is a parameter that affects the number of shapes, which explains why in our analysis above we had to separately consider the three predicate profiles.

\medskip

\noindent 
\textbf{Evaluation of the DB-dependent Component.}
We run the procedure $\mathsf{FindShapes}$ for each one of the databases $D_{\dep}^{s}$, where $\dep \in \dep^\star$ and $s \in \{1,50,100,250,500\}$; 225 executions in total. Recall that we have two kinds of implementations for the procedure $\mathsf{FindShapes}$, namely in-memory and in-database (see Section~\ref{sec:d-simplification}).
The bar plots in Figures~\ref{fig:tshapes} and~\ref{fig:tshapes-q} show the average runtime over all databases $D_{\dep}^{s}$ of a certain size $s$ for finding the shapes in the case of the in-memory and in-database implementation, respectively, where each plot corresponds to a certain predicate profile. 
Note that for both implementations we observed a similar trend, with the in-database implementation outperforming the in-memory one. 

It is evident from the bar plots in Figure~\ref{fig:tshapes-q} that the time to find the shapes increases while the database size increases, which is not surprising since, as discussed above, the number of shapes increases while the database size increases. Observe, however, that the time to find the shapes grows much faster than the actual number shapes, which should be attributed to the fact that for finding the shapes we actually need to scan the whole database.
Let us finally observe that, by comparing the three bar plots, it is apparent that the number of predicates, reflected in the underlying predicate profile, also impacts the time to find the shapes, which explains why we had to analyze each predicate profile separately.

\begin{figure}[t]
	\centering
	\begin{subfigure}[b]{0.23\textwidth}
		\centering
		\includegraphics[width=\textwidth]{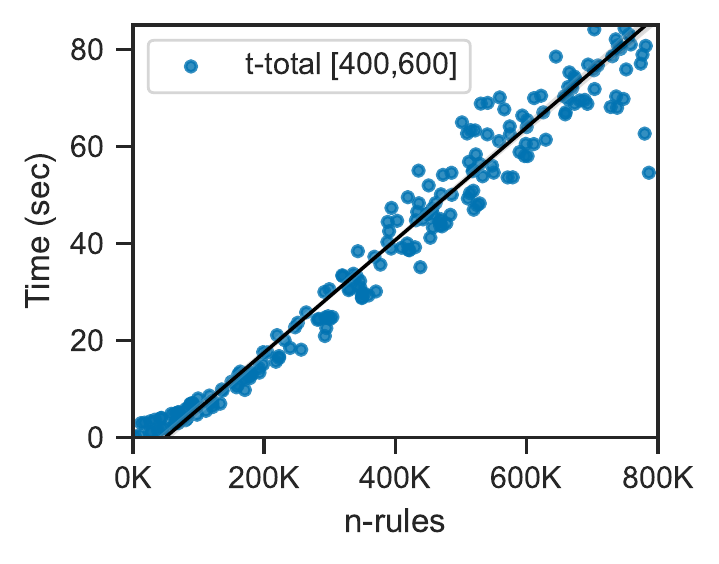}
		\caption{\ttotal}
		\label{fig:linear-total}
	\end{subfigure}
	\hfill
	\begin{subfigure}[b]{0.23\textwidth}
		\centering
		\includegraphics[width=\textwidth]{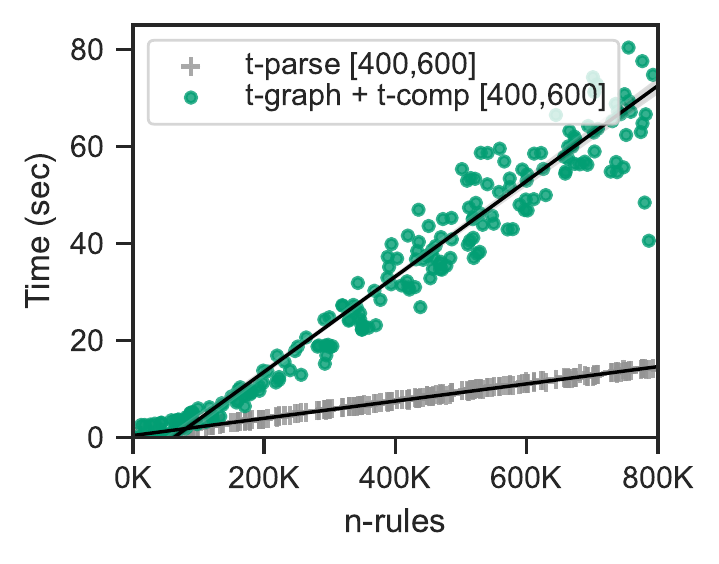}
		\caption{\tparse vs. \tgraph+\tcomp}
		\label{fig:linear-parse-graphcomponent-d3}
	\end{subfigure}
	\hfill
	\begin{subfigure}[b]{0.23\textwidth}
		\centering
		\includegraphics[width=\textwidth]{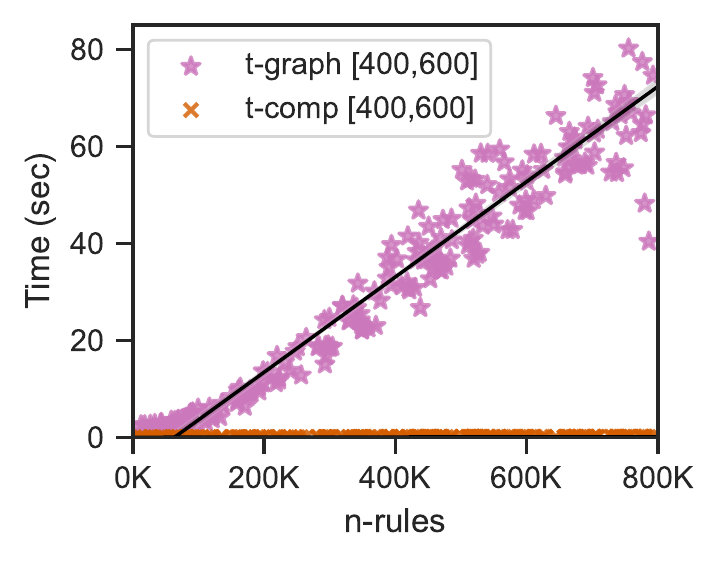}
		\caption{\tgraph vs. \tcomp}
		\label{fig:linear-graph-component-d3}
	\end{subfigure}
	\hfill
	\begin{subfigure}[b]{0.23\textwidth}
		\centering
		\includegraphics[width=\textwidth]{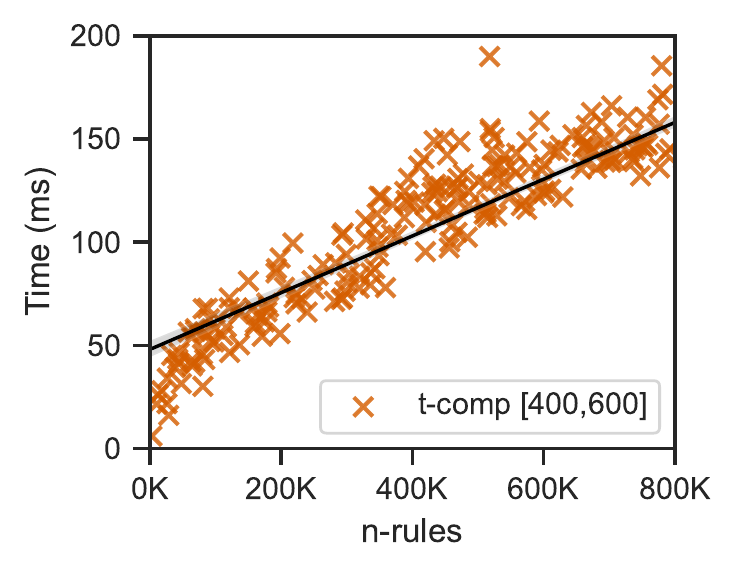}
		\caption{\tcomp}
		\label{fig:linear-graphcomponent-d3}
	\end{subfigure}
	\caption{Runtime of the db-independent component.}
	\label{fig:linear-time}
\end{figure}

\medskip

\noindent 
\textbf{Evaluation of the DB-independent Component.}
The scatter plots in Figure~\ref{fig:linear-time} show the runtime of the db-independent component of the algorithm $\mathsf{IsChaseFinite[L]}$ when executed with input $(D_{\dep}^{s},\dep)$, for each set $\dep \in \dep^\star$ falling in the predicate profile [400,600] and $s \in \{1,50,100,250,500\}$. In particular, each point in the plots corresponds to a pair $(D_{\dep}^{s},\dep)$. 
Let us stress that, unlike the analogous Figure~\ref{fig:slinear-time} for simple-linear TGDs, we focus on a particular predicate profile since otherwise we do not obtain any trend of the runtime w.r.t. the number of TGDs. In other words, the apparent linear trend observed in those plots only holds for sets of TGDs from the same predicate profile. 
This is because the number of predicates of the underlying schema impacts the number of shapes, which in turn affects the process of dynamic simplification and the size of the dependency graph, and thus, the time parameters \tgraph and \tcomp are impacted. This explains why we had to analyze each predicate profile separately. 
For the sake of readability, the analogous plots for the predicate profiles [5,200] and [200,400] are presented and discussed in the appendix.
Figure~\ref{fig:linear-total} shows the total runtime (\ttotal) for sets of TGDs with various sizes (\nrule). Figure~\ref{fig:linear-parse-graphcomponent-d3} breaks down \ttotal into the time to parse the TGDs (\tparse) and the time to build their dependency graph and find the special SCCs (\tgraph + \tcomp), whereas Figure~\ref{fig:linear-graph-component-d3} shows separately \tgraph and \tcomp. Figure~\ref{fig:linear-graphcomponent-d3} zooms in \tcomp.


It is evident from the above scatter plots that the time parameters \tparse and \tgraph increase linearly as long as we increase \nrule, whereas \tcomp increases very slowly. This is essentially what we have observed for simple-linear TGDs in Figure~\ref{fig:slinear-time}, with the key difference that the time needed to parse the TGDs (\tparse) is now much less compared to the time for building the dependency graph and finding the special SCCs (\tgraph + \tcomp). It should not be forgotten, however, that for linear TGDs we need to focus on a single predicate profile in order to get these linear trends; otherwise, if we consider all the predicate profiles at once, there is no trend that can be observed. Note also that the absolute running time increases compared to the case of simple-linear TGDs.


\ignore{
\begin{figure*}[h]
    \centering
     \begin{minipage}{0.23\textwidth}
        \centering
        \includegraphics[width=\textwidth]{figs/scatter-linear-component-d3.pdf}
        \vspace*{-8mm}
        \caption{The time for finding special SCCs (linear tgds)}
        \label{fig:linear-component-d3}
    \end{minipage}
    \begin{minipage}{.74\textwidth}
         \centering
     \begin{subfigure}[b]{0.3\textwidth}
         \centering
         \includegraphics[width=\textwidth]{figs/shape-5.pdf}
         \vspace{-5mm}
         \caption{[5,200]}
         \label{fig:shape-5}
     \end{subfigure}
     \hspace{0.5cm}
     \begin{subfigure}[b]{0.3\textwidth}
         \centering
         \includegraphics[width=\textwidth]{figs/shape-200.pdf}
         \vspace{-5mm}
         \caption{[200,400]}
         \label{fig:shape-200}
     \end{subfigure}
     \hspace{0.1cm}
     \begin{subfigure}[b]{0.3\textwidth}
         \centering
         \includegraphics[width=\textwidth]{figs/shape-400.pdf}
         \vspace{-5mm}
         \caption{[400,600]}
         \label{fig:shape-400}
     \end{subfigure}
     \vspace{-3mm}
     \caption{The number of shapes vs. the database size in different predicate profiles}
        \label{fig:nshapes}
    \end{minipage}\hspace{0.2cm}
\end{figure*}

\begin{figure*}[h]
    \centering
    \begin{minipage}{0.23\textwidth}
        \centering
        \includegraphics[width=\textwidth]{figs/shape-db-time.pdf}
        \vspace*{-8mm}
        \caption{Building and processing dependency graphs}
        \label{fig:db-time}
    \end{minipage}
    \begin{minipage}{.74\textwidth}
         \centering
     \begin{subfigure}[b]{0.3\textwidth}
         \centering
         \includegraphics[width=\textwidth]{figs/shape-time-5-q.pdf}
         \vspace{-5mm}
         \caption{[5,200]}
         \label{fig:shape-time-5-q}
     \end{subfigure}
     \hspace{0.5cm}
     \begin{subfigure}[b]{0.3\textwidth}
         \centering
         \includegraphics[width=\textwidth]{figs/shape-time-200-q.pdf}
         \vspace{-5mm}
         \caption{[200,400]}
         \label{fig:shape-time-200-q}
     \end{subfigure}
     \hspace{0.1cm}
     \begin{subfigure}[b]{0.3\textwidth}
         \centering
         \includegraphics[width=\textwidth]{figs/shape-time-400-q.pdf}
         \vspace{-5mm}
         \caption{[400,600]}
         \label{fig:shape-time-400-q}
     \end{subfigure}
     \vspace{-3mm}
     \caption{The time to find shapes vs. database size in different predicate profiles (in-db)}
        \label{fig:tshapes-q}
    \end{minipage}\hspace{0.2cm}
\end{figure*}
}

\subsection{Take-home Messages}\label{sec:discussion-linear}

The main takeaway is that the algorithm $\mathsf{IsChaseFinite[L]}$ consists of two components that are of different nature in the sense that their runtime is impacted by different parameters of the input. 
On the one hand, we have the db-dependent component, which is responsible for finding the database shapes, that is only affected by the size of the database.
On the other hand, we have the db-independent component, that is, simplify the given set of linear TGDs by using the database shapes, build the dependency graph of the simplified set of TGDs, and find the special SCCs in this graph, whose runtime is primarily affected by the number of TGDs (\nrule). Having said that, we have also observed that the number of predicates also affects the runtime of the db-independent component since it impacts the number of shapes, which in turn affects the process of dynamic simplification and the size of the dependency graph.
We conclude by observing that the total runtime of $\mathsf{IsChaseFinite[L]}$ is quite reasonable, which should be seen as a strong evidence that fast checking for the finiteness of the chase instance in the case of linear TGDs is not an unrealistic goal. Note that most of the total end-to-end runtime of the algorithm is spent on finding database shapes, which indicates that our future efforts should be concentrated on improving the db-dependent component.

\section{Validation of Results} \label{sec:rw-kb}

\ignore{
\begin{table*}[h]
\centering
\resizebox{\textwidth}{!}{
\begin{tabular}{llcccccccccccc}
& & & & & & & & & & \multicolumn{2}{c}{\textcolor{blue}{In-db}} & \multicolumn{2}{c}{\textcolor{violet}{In-memory}} \\
\toprule
Family & Name & \nrule & \nfact & \texttt{n-pred} & \arity & \nshape & \tparse & \tgraph & \tcomp & \textcolor{blue}{\tshape} & \textcolor{blue}{\ttotal} & \textcolor{violet}{\tshape}  & \textcolor{violet}{\ttotal} \\
\midrule
\multirow{3}{*}{Deep} & \texttt{deep-100} & 4,241 & \multirow{3}{*}{1,000} & \multirow{3}{*}{1,299} & \multirow{3}{*}{4} & \multirow{3}{*}{1,000} & 214 & 90 & 10 & \multirow{3}{*}{6,641} & 6,957 & 447 & \HL{763} 
\\
& \texttt{deep-200} & 4,541 &  &  & & & 265 & 116 & 9 & & 7,033 & 447 & \HL{839}
\\
& \texttt{deep-300} & 4,841 &  & & & & 234 & 100 & 11 & & 6,986 & 500 & \HL{846}
\\\midrule
\multirow{4}{*}{LUBM} & \lubmone & \multirow{4}{*}{137} & 99,547 & \multirow{4}{*}{104} & \multirow{4}{*}{1.45 [1,2]} & \multirow{4}{*}{30} & 84 & 10 & 1 & 221 & \HL{318} & 2,724 & 2,820\\
& \lubmtwo &  & 1,272,575 &  &  &  & 46 & 10 & 1 & 830 & \HL{889} & 10,943 & 11,002\\
& \lubmthree &  & 13,405,381 &  &  & & 45 & 11 & 1 & 6,396 & \HL{6,454} & 70,131 & 70,189 \\
& \lubmfour &  & 133,573,854 &  & &  & 43 & 231 & 80 & 65,578 & 
\HL{65,932} & 854,015 & 854,369\\
\midrule
\multirow{2}{*}{\ibench} & \ibenchont & 785 & 2,146,490 & 662 & 3.78 [1,11] & 245 & 179 & 35 & 8 & 11,726 & \HL{11,949} & 15,761 & 
15,984\\
& \ibenchstd & 231 & 1,109,037 & 287
 & 3.76 [1,10] & 129 & 78 & 18 & 7 & 4,991 & \HL{5,096} & 7,379 & 7,484\\
\bottomrule
\end{tabular}}
\vspace{0.3cm}
\caption{A summary of the results for checking the chase termination for real-world knowledge bases. The time parameters are in ms.}
\vspace{-3mm}
\label{tab:real}
\end{table*}}

In Sections~\ref{sec:slinear-ex} and~\ref{sec:linear-ex}, we have presented an experimental evaluation of the algorithms $\mathsf{IsChaseFinite[SL]}$ and $\mathsf{IsChaseFinite[L]}$ using synthetically generated databases and sets of TGDs. 
In this final section, we use databases and sets of TGDs that are available in the literature with the aim of validating the main outcome of the stress test analysis performed using the synthetic scenarios. 

\subsection{Adopted Scenarios}\label{sec:kb}

We considered three families of databases and sets of TGDs that are briefly discussed below. They are also summarized in Table~\ref{tab:param}, where we report some statistics about (i) the underlying schema, i.e., number of predicates (\texttt{n-pred}) and arity of predicates (\texttt{arity}), (ii) the size of the database, i.e., number of atoms (\nfact) and number of shapes (\nshape), and (iii) the number of TGDs (\nrule).


\medskip

\noindent 
\textbf{Deep.}
This family collects sets of simple-linear TGDs that are at the same time weakly-acyclic~\cite{BKMMPST17}. It has been developed to test scenarios with a large number of chase applications and large source instances, as well as a significant number of source-to-target TGDs and target TGDs in a data exchange setting.

\medskip

\noindent 
\textbf{LUBM.}
This is a popular benchmark consisting of an ontology modelled using the central Description Logic (DL) EL, called Univ-Bench, and a data generator, called UBA, for generating synthetic data over the vocabulary of Univ-Bench~\cite{guo2005lubm}. 
We use four members of the LUBM family, namely \lubmone, \lubmtwo, \lubmthree, and \lubmfour.
Let us clarify that the DL axioms occurring in Univ-Bench can be easily converted into TGDs with one atom in the head that do not repeat variables in an atom. However, not all the axioms lead to linear TGDs, and thus, we kept only those that can be converted into linear TGDs (which are also simple-linear).

\medskip

\noindent 
\textbf{\ibench.}
 This is a framework for generating dependencies such as TGDs with tuning parameters that can control a wide range of properties~\cite{arocena2015ibench}. For our experiments, we use the following sets of simple-linear TGDs generated using \ibench: (i) STB-128, derived from an earlier STBenchmark~\cite{ATV08}, and is the smaller scenario of the family, and (ii) ONT-256, a scenario that has several times larger source instances.
For both of those sets of TGDs, we used the databases from~\cite{BKMMPST17} that were generated using the data generator in~\cite{barbosa2002toxgene} and consists of 1000 tuples per source relation. 
%

\medskip

Let us remark that in what follows we discuss our experimental evaluation of the algorithm $\mathsf{IsChaseFinite[L]}$ for linear TGDs.
Concerning the algorithm $\mathsf{IsChaseFinite[SL]}$ for simple-linear TGDs, there is not much to discuss other than the fact that it runs in a few milliseconds for all the scenarios discussed above. This is a confirmation that for simple-linear TGDs, checking for the finiteness of the chase instance can be done very efficiently.

\begin{table}[t]
	\centering
	\resizebox{\columnwidth}{!}{
		\begin{tabular}{llccccc}
			\toprule
			Family & Name & \texttt{n-pred} & \texttt{arity} & \nfact & \nshape & \nrule \\
			\midrule
			\multirow{3}{*}{Deep} & \texttt{Deep-100} & \multirow{3}{*}{1299} & \multirow{3}{*}{4} & \multirow{3}{*}{1000} & \multirow{3}{*}{1000} & 4241
			\\
			& \texttt{Deep-200} &  &  &  & & 4541
			\\
			& \texttt{Deep-300} &  &  & & & 4841
			\\\midrule
			\multirow{4}{*}{LUBM} & \lubmone & \multirow{4}{*}{104} & \multirow{4}{*}{[1,2]}  & 99547 & \multirow{4}{*}{30} & \multirow{4}{*}{137}\\
			& \lubmtwo &  &  & 1272575 &  &\\
			& \lubmthree &  &  & 13405381 &  & \\
			& \lubmfour &  &  & 133573854 & & \\
			\midrule
			\multirow{2}{*}{\ibench} & \ibenchstd & 287 & [1,10] & 1109037
			& 129 & 231\\
			& \ibenchont & 662 & [1,11] & 2146490 & 245 & 785\\
			\bottomrule
	\end{tabular}}
	\medskip
	\caption{The families Deep, LUMB, and iBench.}
	\label{tab:param}
\end{table}

\ignore{
\begin{table}[t]
\centering
\resizebox{\columnwidth}{!}{
\begin{tabular}{l|l||c|c|c|c|c}
\toprule
Family & Name & \nrule & \nfact & \texttt{n-pred} & \texttt{arity} & \nshape \\
\midrule
\multirow{3}{*}{Deep} & \texttt{deep-100} & 4,241 & \multirow{3}{*}{1,000} & \multirow{3}{*}{1,299} & \multirow{3}{*}{4} & \multirow{3}{*}{1,000} 
\\
& \texttt{deep-200} & 4,541 &  &  & & 
\\
& \texttt{deep-300} & 4,841 &  & & & 
\\\midrule
\multirow{4}{*}{LUBM} & \lubmone & \multirow{4}{*}{137} & 99,547 & \multirow{4}{*}{104} & \multirow{4}{*}{1.45 [1,2]} & \multirow{4}{*}{30}\\
& \lubmtwo &  & 1,272,575 &  &  &\\
& \lubmthree &  & 13,405,381 &  &  & \\
& \lubmfour &  & 133,573,854 &  & & \\
\midrule
\multirow{2}{*}{\ibench} & \ibenchont & 785 & 2,146,490 & 662 & 3.78 [1,11] & 245\\
& \ibenchstd & 231 & 1,109,037 & 287
 & 3.76 [1,10] & 129\\
\bottomrule
\end{tabular}}
\medskip
\caption{The families Deep, LUMB, and iBench.}
\label{tab:param}
\end{table}
}

\subsection{Experimental Evaluation} 
We run the algorithm $\mathsf{IsChaseFinite[L]}$ with all the scenarios discussed above and the experimental results are summarized in Table~\ref{tab:real}. Note that \ttotal refers to the end-to-end runtime for checking the finiteness of the chase, i.e, \ttotal = \tparse + \tgraph + \tcomp + \tshape. 
We highlight in a box the best end-to-end runtime obtained by considering either the in-memory or the in-database implementation for finding the database shapes. 

%
The parsing time (\tparse) is insignificant in all scenarios as the number of rules (\nrule), which is the main parameter that impacts the parsing time, is at most 4000. The time to build the dependency graph (\tgraph) and the time to find the special SCCs (\tcomp) are also negligible. This due to the limited number of shapes (\nshape) in these scenarios, which means that the dynamic simplification that uses those shapes does not construct a large set of simplified TGDs, and thus, the induced dependency graph is rather small. 
It is evident from Table~\ref{tab:real} that the most costly task in all scenarios is finding the database shapes, which is consistent with what we observed in Section~\ref{sec:slinear-ex}.

We report the time to find the shapes (\tshape) for both the in-memory and the in-database implementations of the procedure $\mathsf{FindShape}$. The in-memory implementation is faster for the Deep family since the underlying database has many singleton relations, i.e., relations with only one tuple, and thus, loading the tuples and finding the shapes can be done efficiently. On the other hand, the in-database implementation takes more time because it runs one query per relation, which results in many queries. 
For the LUBM and \ibench families, the in-database implementation is faster as it finds the shapes by running a few queries due to the small number of predicates. Note that, in general, the runtime of the in-database implementation increases with the number of tuples in the relations, which is evident from the LUBM scenarios. 
As a result, finding the shapes for \lubmfour takes a significant time. However, it is still much lower than the time for finding the database shapes using the in-memory implementation that requires loading many tuples.

\begin{table}[t]
\centering
\resizebox{\columnwidth}{!}{
\begin{tabular}{lccccccc}
& & & & \multicolumn{2}{c}{\textcolor{blue}{In-db}} & \multicolumn{2}{c}{\textcolor{violet}{In-memory}} \\
\toprule
Name & \tparse & \tgraph & \tcomp & \textcolor{blue}{\tshape} & \textcolor{blue}{\ttotal} & \textcolor{violet}{\tshape}  & \textcolor{violet}{\ttotal} \\
\midrule
\texttt{Deep-100} & 214 & 90 & 10 & \multirow{3}{*}{6,641} & 6,957 & 447 & \HL{763} 
\\
\texttt{Deep-200} & 265 & 116 & 9 & & 7,033 & 447 & \HL{839}
\\
\texttt{Deep-300} & 234 & 100 & 11 & & 6,986 & 500 & \HL{846}
\\\midrule
\lubmone & 84 & 10 & 1 & 221 & \HL{318} & 2,724 & 2,820\\
\lubmtwo & 46 & 10 & 1 & 830 & \HL{889} & 10,943 & 11,002\\
\lubmthree & 45 & 11 & 1 & 6,396 & \HL{6,454} & 70,131 & 70,189\\
\lubmfour & 43 & 231 & 80 & 65,578 & 
\HL{65,932} & 854,015 & 854,369\\
\midrule
\ibenchstd & 78 & 18 & 7 & 4,991 & \HL{5,096} & 7,379 & 7,484\\
\ibenchont & 179 & 35 & 8 & 11,726 & \HL{11,949} & 15,761 & 
15,984\\
\bottomrule
\end{tabular}}
\medskip
\caption{Runtime of $\mathsf{IsChaseFinite[L]}$ in milliseconds.}
\label{tab:real}
\end{table}

\subsection{Discussion}\label{sec:discussion-real}

It is fair to conclude that the experimental evaluation performed in this section confirms the main outcome of the analysis for the algorithm $\mathsf{IsChaseFinite[L]}$ performed in Section~\ref{sec:linear-ex}.
In particular, we observe that indeed the costly task is finding the database shapes, whereas the time taken by the db-independent component is negligible.
Moreover, we see that checking for the finiteness of the chase instance can be done rather efficiently in practice. In particular, for sets consisting of thousands of TGDs such as the Deep scenarios, and millions of facts such as \lubmtwo, it takes less than a second.
For schemas with a large number of predicates (e.g., \ibenchstd and \ibenchont) and databases with a large numbers of atoms (e.g., \lubmthree), it takes less than 10 seconds.
Finally, for very large databases with hundreds of millions of atoms such as \lubmfour, it takes around a minute, which is a reasonable time taking into account the actual size of the input.


Another interesting takeaway from the experimental evaluation of this section is that there is no clear way to go regarding the implementation of $\mathsf{FindShape}$ among the in-memory and the in-database options.
In particular, the in-memory implementation is preferred when there are a few tuples per relation in the database, whereas the in-database implementation performs better when the underlying schema has a few predicates of small arity. For schemas with many predicates, each of which has many tuples in the input database, both implementations require significant time, and the offline computation of the database shapes might be preferred.

\section{Conclusions and Future Work}\label{sec:conclusion}

Our work provides the first systematic attempt to experimentally evaluate algorithms devised for the semi-oblivious chase termination problem in the presence of (simple-)linear TGDs.
Our analysis revealed that for simple-linear TGDs, we can efficiently check whether the chase terminates even for very large databases and sets of TGDs.
Concerning linear TGDs, the overall runtime of the algorithm is quite reasonable, but there is still room for improvements. Interestingly, our analysis showed that the algorithm for linear TGDs consists of two separate components, the db-dependent and the db-independent components. This modular nature of the algorithm allows us to study and improve the two components separately. In particular, we have observed that the heavy component is the db-dependent one, and thus, we can focus our future efforts to improve the performance of that component. Although our analysis relied on an in-database and an in-memory implementation of the procedure for finding the shapes, we could adopt other techniques depending on the underlying application without affecting the db-independent component. An interesting direction is to materialize and incrementally keep updated the shapes in a database, which will improve the performance of the db-dependent component. 



\bibliographystyle{ACM-Reference-Format}

\begin{thebibliography}{24}
	
	
	\ifx \showCODEN    \undefined \def \showCODEN     #1{\unskip}     \fi
	\ifx \showDOI      \undefined \def \showDOI       #1{#1}\fi
	\ifx \showISBNx    \undefined \def \showISBNx     #1{\unskip}     \fi
	\ifx \showISBNxiii \undefined \def \showISBNxiii  #1{\unskip}     \fi
	\ifx \showISSN     \undefined \def \showISSN      #1{\unskip}     \fi
	\ifx \showLCCN     \undefined \def \showLCCN      #1{\unskip}     \fi
	\ifx \shownote     \undefined \def \shownote      #1{#1}          \fi
	\ifx \showarticletitle \undefined \def \showarticletitle #1{#1}   \fi
	\ifx \showURL      \undefined \def \showURL       {\relax}        \fi
	\providecommand\bibfield[2]{#2}
	\providecommand\bibinfo[2]{#2}
	\providecommand\natexlab[1]{#1}
	\providecommand\showeprint[2][]{arXiv:#2}
	
	\bibitem[\protect\citeauthoryear{Abiteboul, Hull, and Vianu}{Abiteboul
		et~al\mbox{.}}{1995}]%
	{AbHV95}
	\bibfield{author}{\bibinfo{person}{Serge Abiteboul}, \bibinfo{person}{Richard
			Hull}, {and} \bibinfo{person}{Victor Vianu}.}
	\bibinfo{year}{1995}\natexlab{}.
	\newblock \bibinfo{booktitle}{\emph{Foundations of Databases}}.
	\newblock \bibinfo{publisher}{Addison-Wesley}.
	\newblock
	
	
	\bibitem[\protect\citeauthoryear{Agrawal, Srikant, et~al\mbox{.}}{Agrawal
		et~al\mbox{.}}{1994}]%
	{agrawal1994fast}
	\bibfield{author}{\bibinfo{person}{Rakesh Agrawal},
		\bibinfo{person}{Ramakrishnan Srikant}, {et~al\mbox{.}}}
	\bibinfo{year}{1994}\natexlab{}.
	\newblock \showarticletitle{Fast algorithms for mining association rules}. In
	\bibinfo{booktitle}{\emph{Proc. 20th int. conf. very large data bases,
			VLDB}}, Vol.~\bibinfo{volume}{1215}. Citeseer, \bibinfo{pages}{487--499}.
	\newblock
	
	
	\bibitem[\protect\citeauthoryear{Aho, Sagiv, and Ullman}{Aho
		et~al\mbox{.}}{1979}]%
	{AhSU79}
	\bibfield{author}{\bibinfo{person}{Alfred~V. Aho}, \bibinfo{person}{Yehoshua
			Sagiv}, {and} \bibinfo{person}{Jeffrey~D. Ullman}.}
	\bibinfo{year}{1979}\natexlab{}.
	\newblock \showarticletitle{Efficient Optimization of a Class of Relational
		Expressions}.
	\newblock \bibinfo{journal}{\emph{{ACM} Trans. Database Syst.}}
	\bibinfo{volume}{4}, \bibinfo{number}{4} (\bibinfo{year}{1979}),
	\bibinfo{pages}{435--454}.
	\newblock
	
	
	\bibitem[\protect\citeauthoryear{Alexe, Tan, and Velegrakis}{Alexe
		et~al\mbox{.}}{2008}]%
	{ATV08}
	\bibfield{author}{\bibinfo{person}{Bogdan Alexe}, \bibinfo{person}{Wang~Chiew
			Tan}, {and} \bibinfo{person}{Yannis Velegrakis}.}
	\bibinfo{year}{2008}\natexlab{}.
	\newblock \showarticletitle{STBenchmark: towards a benchmark for mapping
		systems}.
	\newblock \bibinfo{journal}{\emph{Proc. {VLDB} Endow.}} \bibinfo{volume}{1},
	\bibinfo{number}{1} (\bibinfo{year}{2008}), \bibinfo{pages}{230--244}.
	\newblock
	
	
	\bibitem[\protect\citeauthoryear{Arocena, Glavic, Ciucanu, and Miller}{Arocena
		et~al\mbox{.}}{2015}]%
	{arocena2015ibench}
	\bibfield{author}{\bibinfo{person}{Patricia~C Arocena}, \bibinfo{person}{Boris
			Glavic}, \bibinfo{person}{Radu Ciucanu}, {and} \bibinfo{person}{Ren{\'e}e~J
			Miller}.} \bibinfo{year}{2015}\natexlab{}.
	\newblock \showarticletitle{The iBench integration metadata generator}.
	\newblock \bibinfo{journal}{\emph{Proceedings of the VLDB Endowment}}
	\bibinfo{volume}{9}, \bibinfo{number}{3} (\bibinfo{year}{2015}),
	\bibinfo{pages}{108--119}.
	\newblock
	
	
	\bibitem[\protect\citeauthoryear{Barbosa, Mendelzon, Keenleyside, and
		Lyons}{Barbosa et~al\mbox{.}}{2002}]%
	{barbosa2002toxgene}
	\bibfield{author}{\bibinfo{person}{Denilson Barbosa}, \bibinfo{person}{Alberto
			Mendelzon}, \bibinfo{person}{John Keenleyside}, {and} \bibinfo{person}{Kelly
			Lyons}.} \bibinfo{year}{2002}\natexlab{}.
	\newblock \showarticletitle{ToXgene: a template-based data generator for XML}.
	In \bibinfo{booktitle}{\emph{Proceedings of the 2002 ACM SIGMOD international
			conference on Management of data}}. \bibinfo{pages}{616--616}.
	\newblock
	
	
	\bibitem[\protect\citeauthoryear{Beeri and Vardi}{Beeri and Vardi}{1984}]%
	{BeVa84}
	\bibfield{author}{\bibinfo{person}{Catriel Beeri} {and}
		\bibinfo{person}{Moshe~Y. Vardi}.} \bibinfo{year}{1984}\natexlab{}.
	\newblock \showarticletitle{A Proof Procedure for Data Dependencies}.
	\newblock \bibinfo{journal}{\emph{J. ACM}} \bibinfo{volume}{31},
	\bibinfo{number}{4} (\bibinfo{year}{1984}), \bibinfo{pages}{718--741}.
	\newblock
	
	
	\bibitem[\protect\citeauthoryear{Benedikt, Konstantinidis, Mecca, Motik,
		Papotti, Santoro, and Tsamoura}{Benedikt et~al\mbox{.}}{2017}]%
	{BKMMPST17}
	\bibfield{author}{\bibinfo{person}{Michael Benedikt}, \bibinfo{person}{George
			Konstantinidis}, \bibinfo{person}{Giansalvatore Mecca},
		\bibinfo{person}{Boris Motik}, \bibinfo{person}{Paolo Papotti},
		\bibinfo{person}{Donatello Santoro}, {and} \bibinfo{person}{Efthymia
			Tsamoura}.} \bibinfo{year}{2017}\natexlab{}.
	\newblock \showarticletitle{Benchmarking the Chase}. In
	\bibinfo{booktitle}{\emph{PODS}}. \bibinfo{pages}{37--52}.
	\newblock
	
	
	\bibitem[\protect\citeauthoryear{Calautti, Gottlob, and Pieris}{Calautti
		et~al\mbox{.}}{2022}]%
	{CaGP22}
	\bibfield{author}{\bibinfo{person}{Marco Calautti}, \bibinfo{person}{Georg
			Gottlob}, {and} \bibinfo{person}{Andreas Pieris}.}
	\bibinfo{year}{2022}\natexlab{}.
	\newblock \showarticletitle{Non-Uniformly Terminating Chase: Size and
		Complexity}. In \bibinfo{booktitle}{\emph{PODS}}. \bibinfo{pages}{369--378}.
	\newblock
	
	
	\bibitem[\protect\citeauthoryear{Calautti and Pieris}{Calautti and
		Pieris}{2021}]%
	{CaPi21}
	\bibfield{author}{\bibinfo{person}{Marco Calautti} {and}
		\bibinfo{person}{Andreas Pieris}.} \bibinfo{year}{2021}\natexlab{}.
	\newblock \showarticletitle{Semi-Oblivious Chase Termination: The Sticky Case}.
	\newblock \bibinfo{journal}{\emph{Theory Comput. Syst.}} \bibinfo{volume}{65},
	\bibinfo{number}{1} (\bibinfo{year}{2021}), \bibinfo{pages}{84--121}.
	\newblock
	
	
	\bibitem[\protect\citeauthoryear{Cal\`{\i}, Gottlob, and Lukasiewicz}{Cal\`{\i}
		et~al\mbox{.}}{2012}]%
	{CaGL12}
	\bibfield{author}{\bibinfo{person}{Andrea Cal\`{\i}}, \bibinfo{person}{Georg
			Gottlob}, {and} \bibinfo{person}{Thomas Lukasiewicz}.}
	\bibinfo{year}{2012}\natexlab{}.
	\newblock \showarticletitle{A general {D}atalog-based framework for tractable
		query answering over ontologies}.
	\newblock \bibinfo{journal}{\emph{J. Web Sem.}}  \bibinfo{volume}{14}
	(\bibinfo{year}{2012}), \bibinfo{pages}{57--83}.
	\newblock
	
	
	\bibitem[\protect\citeauthoryear{Calvanese, De~Giacomo, Lembo, Lenzerini, and
		Rosati}{Calvanese et~al\mbox{.}}{2007}]%
	{CDLL*07}
	\bibfield{author}{\bibinfo{person}{Diego Calvanese}, \bibinfo{person}{Giuseppe
			De~Giacomo}, \bibinfo{person}{Domenico Lembo}, \bibinfo{person}{Maurizio
			Lenzerini}, {and} \bibinfo{person}{Riccardo Rosati}.}
	\bibinfo{year}{2007}\natexlab{}.
	\newblock \showarticletitle{Tractable Reasoning and Efficient Query Answering
		in Description Logics: The {DL}-{L}ite Family}.
	\newblock \bibinfo{journal}{\emph{J. Autom. Reasoning}} \bibinfo{volume}{39},
	\bibinfo{number}{3} (\bibinfo{year}{2007}), \bibinfo{pages}{385--429}.
	\newblock
	
	
	\bibitem[\protect\citeauthoryear{Deutsch, Nash, and Remmel}{Deutsch
		et~al\mbox{.}}{2008}]%
	{DeNR08}
	\bibfield{author}{\bibinfo{person}{Alin Deutsch}, \bibinfo{person}{Alan Nash},
		{and} \bibinfo{person}{Jeff~B. Remmel}.} \bibinfo{year}{2008}\natexlab{}.
	\newblock \showarticletitle{The Chase Revisisted}. In
	\bibinfo{booktitle}{\emph{PODS}}. \bibinfo{pages}{149--158}.
	\newblock
	
	
	\bibitem[\protect\citeauthoryear{Dijkstra}{Dijkstra}{1982}]%
	{dijkstra1982finding}
	\bibfield{author}{\bibinfo{person}{Edsger~W Dijkstra}.}
	\bibinfo{year}{1982}\natexlab{}.
	\newblock \showarticletitle{Finding the maximum strong components in a directed
		graph}.
	\newblock In \bibinfo{booktitle}{\emph{Selected Writings on Computing: A
			Personal Perspective}}. \bibinfo{publisher}{Springer},
	\bibinfo{pages}{22--30}.
	\newblock
	
	
	\bibitem[\protect\citeauthoryear{Fagin, Kolaitis, Miller, and Popa}{Fagin
		et~al\mbox{.}}{2005}]%
	{FKMP05}
	\bibfield{author}{\bibinfo{person}{Ronald Fagin}, \bibinfo{person}{Phokion~G.
			Kolaitis}, \bibinfo{person}{Ren{\'{e}}e~J. Miller}, {and}
		\bibinfo{person}{Lucian Popa}.} \bibinfo{year}{2005}\natexlab{}.
	\newblock \showarticletitle{Data exchange: semantics and query answering}.
	\newblock \bibinfo{journal}{\emph{Theor. Comput. Sci.}} \bibinfo{volume}{336},
	\bibinfo{number}{1} (\bibinfo{year}{2005}), \bibinfo{pages}{89--124}.
	\newblock
	
	
	\bibitem[\protect\citeauthoryear{Guo, Pan, and Heflin}{Guo
		et~al\mbox{.}}{2005}]%
	{guo2005lubm}
	\bibfield{author}{\bibinfo{person}{Yuanbo Guo}, \bibinfo{person}{Zhengxiang
			Pan}, {and} \bibinfo{person}{Jeff Heflin}.} \bibinfo{year}{2005}\natexlab{}.
	\newblock \showarticletitle{LUBM: A benchmark for OWL knowledge base systems}.
	\newblock \bibinfo{journal}{\emph{Journal of Web Semantics}}
	\bibinfo{volume}{3}, \bibinfo{number}{2-3} (\bibinfo{year}{2005}),
	\bibinfo{pages}{158--182}.
	\newblock
	
	
	\bibitem[\protect\citeauthoryear{Kr{\"{o}}tzsch, Marx, and
		Rudolph}{Kr{\"{o}}tzsch et~al\mbox{.}}{2019}]%
	{KrMR19}
	\bibfield{author}{\bibinfo{person}{Markus Kr{\"{o}}tzsch},
		\bibinfo{person}{Maximilian Marx}, {and} \bibinfo{person}{Sebastian
			Rudolph}.} \bibinfo{year}{2019}\natexlab{}.
	\newblock \showarticletitle{The Power of the Terminating Chase (Invited Talk)}.
	In \bibinfo{booktitle}{\emph{ICDT}}. \bibinfo{pages}{3:1--3:17}.
	\newblock
	
	
	\bibitem[\protect\citeauthoryear{Lecl{\`{e}}re, Marie{-}Laure, Thomazo, and
		Ulliana}{Lecl{\`{e}}re et~al\mbox{.}}{2019}]%
	{LMTU19}
	\bibfield{author}{\bibinfo{person}{Michel Lecl{\`{e}}re},
		\bibinfo{person}{Marie{-}Laure}, \bibinfo{person}{Micha{\"{e}}l Thomazo},
		{and} \bibinfo{person}{Federico Ulliana}.} \bibinfo{year}{2019}\natexlab{}.
	\newblock \showarticletitle{A Single Approach to Decide Chase Termination on
		Linear Existential Rules}. In \bibinfo{booktitle}{\emph{ICDT}}.
	\bibinfo{pages}{18:1--18:19}.
	\newblock
	
	
	\bibitem[\protect\citeauthoryear{Maier, Mendelzon, and Sagiv}{Maier
		et~al\mbox{.}}{1979}]%
	{MaMS79}
	\bibfield{author}{\bibinfo{person}{David Maier}, \bibinfo{person}{Alberto~O.
			Mendelzon}, {and} \bibinfo{person}{Yehoshua Sagiv}.}
	\bibinfo{year}{1979}\natexlab{}.
	\newblock \showarticletitle{Testing Implications of Data Dependencies.}
	\newblock \bibinfo{journal}{\emph{ACM Trans. Database Syst.}}
	\bibinfo{volume}{4}, \bibinfo{number}{4} (\bibinfo{year}{1979}),
	\bibinfo{pages}{455--469}.
	\newblock
	
	
	\bibitem[\protect\citeauthoryear{Marnette}{Marnette}{2009}]%
	{Marn09}
	\bibfield{author}{\bibinfo{person}{Bruno Marnette}.}
	\bibinfo{year}{2009}\natexlab{}.
	\newblock \showarticletitle{Generalized schema-mappings: from termination to
		tractability}. In \bibinfo{booktitle}{\emph{PODS}}. \bibinfo{pages}{13--22}.
	\newblock
	
	
	\bibitem[\protect\citeauthoryear{Nenov, Piro, Motik, Horrocks, Wu, and
		Banerjee}{Nenov et~al\mbox{.}}{2015}]%
	{NPMHWB15}
	\bibfield{author}{\bibinfo{person}{Yavor Nenov}, \bibinfo{person}{Robert Piro},
		\bibinfo{person}{Boris Motik}, \bibinfo{person}{Ian Horrocks},
		\bibinfo{person}{Zhe Wu}, {and} \bibinfo{person}{Jay Banerjee}.}
	\bibinfo{year}{2015}\natexlab{}.
	\newblock \showarticletitle{RDFox: {A} Highly-Scalable {RDF} Store}. In
	\bibinfo{booktitle}{\emph{ISWC}}. \bibinfo{pages}{3--20}.
	\newblock
	
	
	\bibitem[\protect\citeauthoryear{Sharir}{Sharir}{1981}]%
	{sharir1981strong}
	\bibfield{author}{\bibinfo{person}{Micha Sharir}.}
	\bibinfo{year}{1981}\natexlab{}.
	\newblock \showarticletitle{A strong-connectivity algorithm and its
		applications in data flow analysis}.
	\newblock \bibinfo{journal}{\emph{Computers \& Mathematics with Applications}}
	\bibinfo{volume}{7}, \bibinfo{number}{1} (\bibinfo{year}{1981}),
	\bibinfo{pages}{67--72}.
	\newblock
	
	
	\bibitem[\protect\citeauthoryear{Tarjan}{Tarjan}{1972}]%
	{tarjan1972depth}
	\bibfield{author}{\bibinfo{person}{Robert Tarjan}.}
	\bibinfo{year}{1972}\natexlab{}.
	\newblock \showarticletitle{Depth-first search and linear graph algorithms}.
	\newblock \bibinfo{journal}{\emph{SIAM journal on computing}}
	\bibinfo{volume}{1}, \bibinfo{number}{2} (\bibinfo{year}{1972}),
	\bibinfo{pages}{146--160}.
	\newblock
	
	
	\bibitem[\protect\citeauthoryear{Urbani, Kr{\"{o}}tzsch, Jacobs, Dragoste, and
		Carral}{Urbani et~al\mbox{.}}{2018}]%
	{UKJDC18}
	\bibfield{author}{\bibinfo{person}{Jacopo Urbani}, \bibinfo{person}{Markus
			Kr{\"{o}}tzsch}, \bibinfo{person}{Ceriel J.~H. Jacobs},
		\bibinfo{person}{Irina Dragoste}, {and} \bibinfo{person}{David Carral}.}
	\bibinfo{year}{2018}\natexlab{}.
	\newblock \showarticletitle{Efficient Model Construction for Horn Logic with
		VLog - System Description}. In \bibinfo{booktitle}{\emph{IJCAR}}.
	\bibinfo{pages}{680--688}.
	\newblock
	
	
\end{thebibliography}

\newpage
\appendix
\section{Db-Independent Component} \label{sec:smaller-pred-profiles}

\begin{figure}[t]
	\centering
	\begin{subfigure}[b]{0.23\textwidth}
		\centering
		\includegraphics[width=\textwidth]{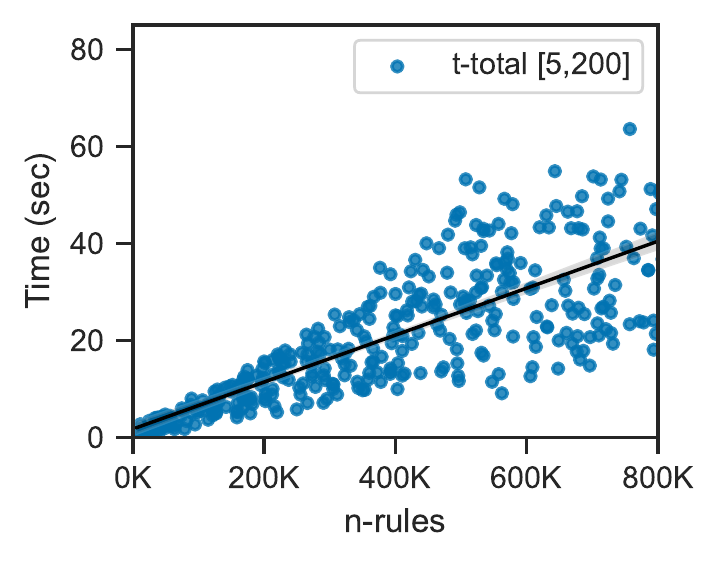}
		\caption{\ttotal}
		\label{fig:linear-total-d1}
	\end{subfigure}
	\hfill
	\begin{subfigure}[b]{0.23\textwidth}
		\centering
		\includegraphics[width=\textwidth]{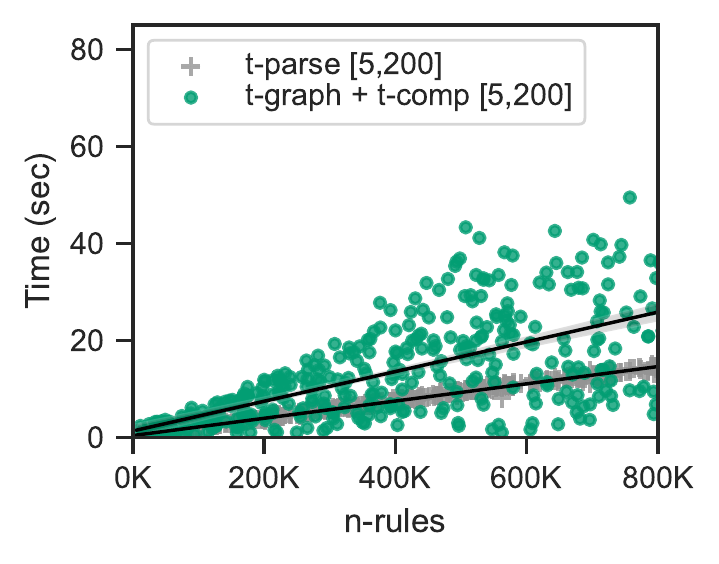}
		\caption{\tparse vs. \tgraph+\tcomp}
		\label{fig:linear-parse-graphcomponent-d1}
	\end{subfigure}
	\hfill
	\begin{subfigure}[b]{0.23\textwidth}
		\centering
		\includegraphics[width=\textwidth]{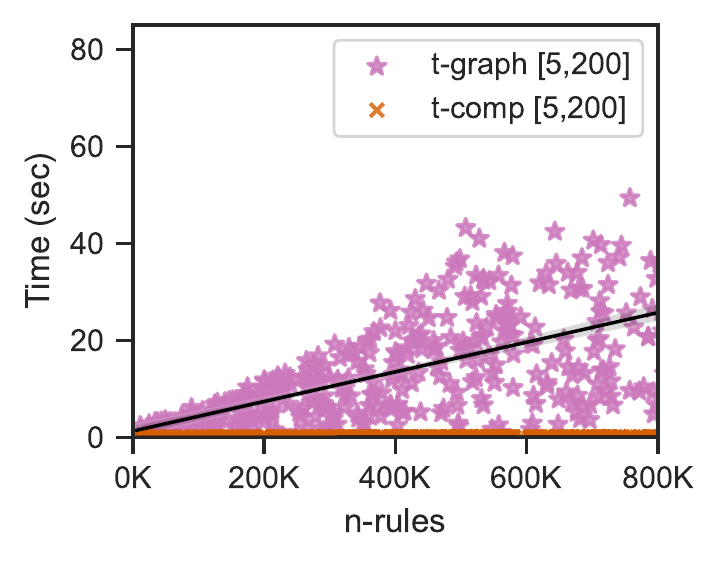}
		\caption{\tgraph vs. \tcomp}
		\label{fig:linear-graph-component-d1}
	\end{subfigure}
	\hfill
	\begin{subfigure}[b]{0.23\textwidth}
		\centering
		\includegraphics[width=\textwidth]{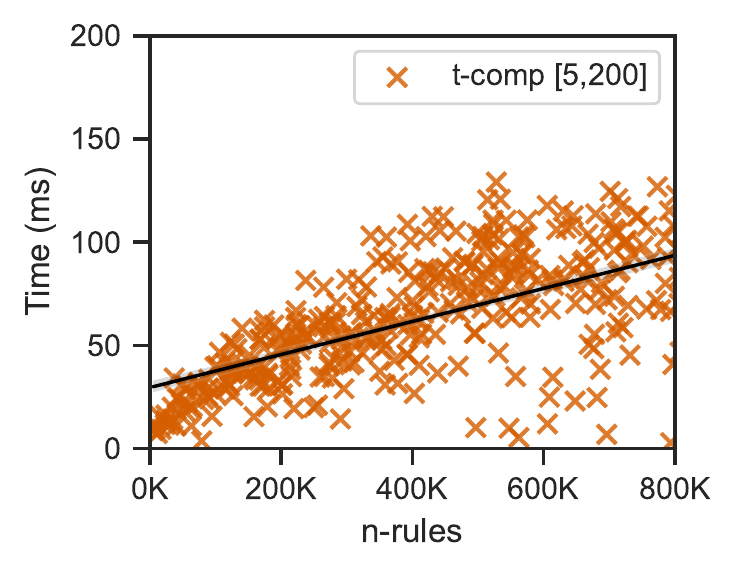}
		\caption{\tcomp}
		\label{fig:linear-graphcomponent-d1}
	\end{subfigure}
	\caption{Runtime of the db-independent component for the predicate profile [5,200].}
	\label{fig:linear-time-5-200}
\end{figure}

\begin{figure}[t]
	\centering
	\begin{subfigure}[b]{0.23\textwidth}
		\centering
		\includegraphics[width=\textwidth]{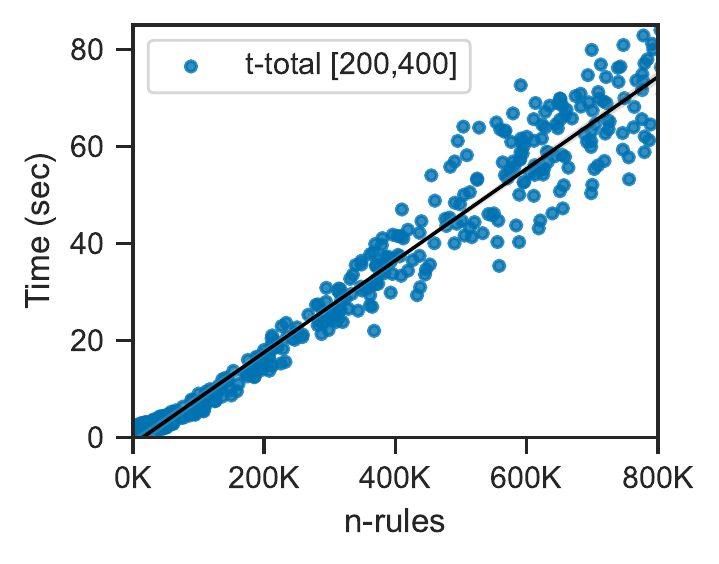}
		\caption{\ttotal}
		\label{fig:linear-total-d2}
	\end{subfigure}
	\hfill
	\begin{subfigure}[b]{0.23\textwidth}
		\centering
		\includegraphics[width=\textwidth]{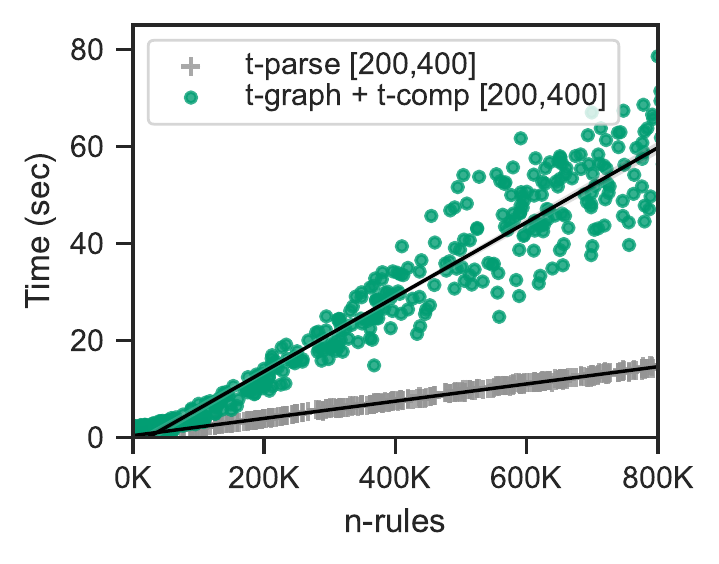}
		\caption{\tparse vs. \tgraph+\tcomp}
		\label{fig:linear-parse-graphcomponent-d2}
	\end{subfigure}
	\hfill
	\begin{subfigure}[b]{0.23\textwidth}
		\centering
		\includegraphics[width=\textwidth]{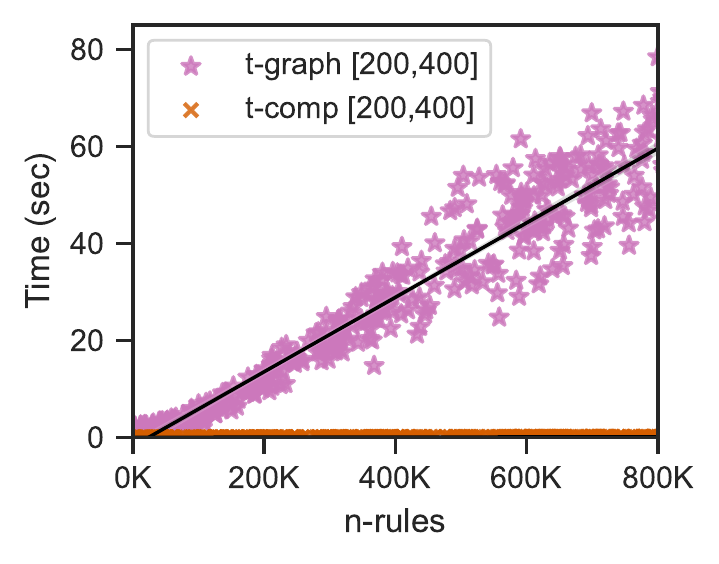}
		\caption{\tgraph vs. \tcomp}
		\label{fig:linear-graph-component-d2}
	\end{subfigure}
	\hfill
	\begin{subfigure}[b]{0.23\textwidth}
		\centering
		\includegraphics[width=\textwidth]{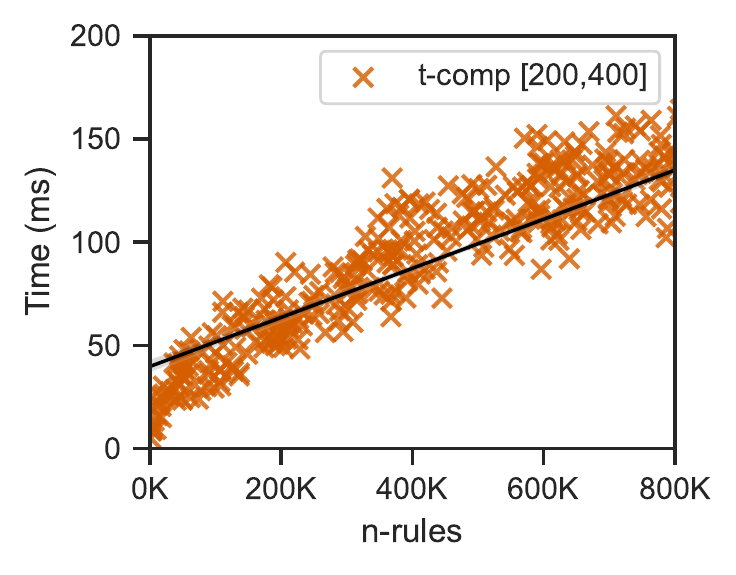}
		\caption{\tcomp}
		\label{fig:linear-graphcomponent-d2}
	\end{subfigure}
	\caption{Runtime of the db-independent component for the predicate profile [200,400].}
	\label{fig:linear-time-200-400}
\end{figure}

Recall that, for the sake of readability, in Section~\ref{sec:linear-ex} we omitted the plots that present the runtime of the db-independent component of the chase termination algorithm $\mathsf{IsChaseFinite[L]}$ for the smaller predicate profiles.
The plot for the predicate profile [5,200] is shown in Figure~\ref{fig:linear-time-5-200}, whereas for the profile [200,400] in Figure~\ref{fig:linear-time-200-400}.
We can observe, similarly to the predicate profile [400,600], that the time parameters \tparse and \tgraph increase linearly as long as we increase \nrule, whereas \tcomp increases very slowly. However, it is clear from the plots that, as long as we move to smaller predicate profiles and larger sets of TGDs, the above linear trends become less apparent. 
This phenomenon can be explained as follows. As we decrease the number of predicates that appear in the schema, it is likely that a large set of TGDs gives rise to a limited number of edges in the underlying dependency. This is because many TGDs simply lead to the same edges, which are of course considered once in the graph. This is confirmed by the following plot


\centerline{\includegraphics[width=.23\textwidth]{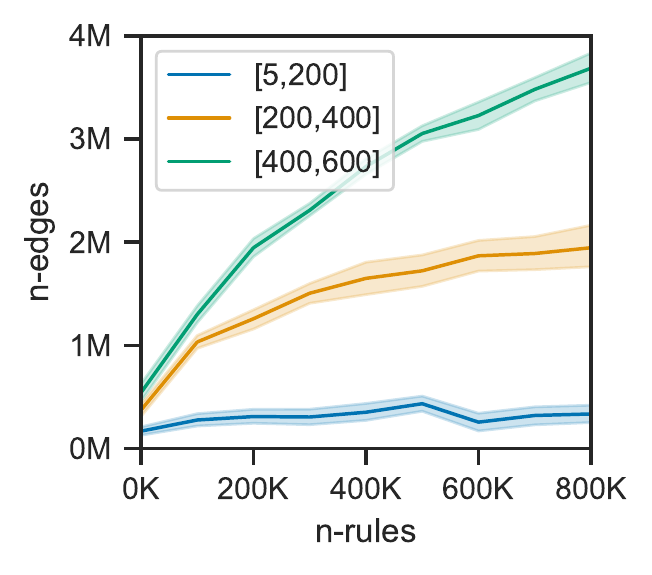}}


\noindent which depicts the average number of edges in the dependency graphs of the sets of linear TGDs used in our experimental evaluation. For the smaller predicate profiles, increasing the number of TGDs leads to a smaller increase in the graph size compared with the larger predicate profiles. In particular, for the profile [5,200], the number of edges in the dependency graphs remains almost the same as we increase the number of TGDs.

\end{document}